\documentclass{article}
\usepackage{graphicx} 
\usepackage[a4paper, total={6in, 8in}]{geometry}
\usepackage{authblk}
\usepackage{physics}
\usepackage{mathrsfs}
\usepackage{times}

\newtheorem{theorem}{Theorem}[section]
\newtheorem{corollary}[theorem]{Corollary}
\newtheorem{definition}[theorem]{Definition}

\newtheorem{lemma}[theorem]{Lemma}
\newtheorem{conjecture}[theorem]{Conjecture}
\newtheorem{proposition}[theorem]{Proposition}
\newtheorem{remark}[theorem]{Remark}

\newenvironment{proof}[1][Proof]{\paragraph{#1:}}{\hfill$\square$}
\usepackage{amsmath} 
\usepackage{amssymb}
\usepackage{xcolor}
\usepackage{hyperref}
\usepackage{url,changes}

\usepackage{cleveref}
\usepackage{csquotes}
\usepackage[%
    style=phys,%
    articletitle=true,biblabel=brackets,%
    chaptertitle=false,pageranges=false%
    ]{biblatex}
\addbibresource{mybibfile.bib}

\usepackage{ytableau}

\hypersetup{
    colorlinks=true,
    linkcolor=[rgb]{0.3, 0.6, 0.3},
    filecolor=[rgb]{0.3, 0.6, 0.3},
    citecolor=[rgb]{0.3, 0.6, 0.3},
    urlcolor=[rgb]{0.3, 0.6, 0.3},
}

\begin{document}

\title{Robust Self-Testing of Multiqudit Supersinglet Slater States via Constant Number of Binary Measurements}
\author[1]{Arturo Konderak\thanks{Email: \texttt{akonderak@cft.edu.pl}}}
\author[1]{Wojciech Bruzda}
\author[1]{Remigiusz Augusiak}
\affil[1]{Center for Theoretical Physics, Polish Academy of Sciences, Aleja Lotnik\'{o}w 32/46, 02-668 Warsaw, Poland}

\maketitle
\abstract{Self-testing is a powerful device-independent technique that enables one to deduce the forms of both the quantum state and the measurements involved in a physical experiment based solely on observed correlations. Although numerous schemes for self-testing multipartite entangled states have been proposed, they are typically difficult to implement experimentally, as their complexity increases significantly with the number of subsystems or the local dimension. In this work, we introduce the first self-testing scheme of a relevant class of multiqudit genuinely entangled states that exploits only a constant number of binary measurements per observer, which significantly reduces the experimental effort to implement the scheme. Specifically, it enables the self-testing of multipartite Slater (or supersinglet) states composed of $d$ qu\textit{d}its with odd $d$ using only four two-outcome measurements per observer. Moreover, we prove that our scheme is robust to noise and experimental imperfections. For systems of even local dimension $d$, we also provide an adapted version of the scheme that requires only $d$ binary measurements per observer.}

\section{Introduction}

Bell nonlocality—the existence of correlations that cannot be reproduced by means of local hidden variable models~\cite{BELL2001}—is one of the most distinctive features of quantum theory~\cite{RevModPhys.86.419}. Indeed, the pioneering experiments~\cite{Clauser,Zeilinger,Aspect} to verify the existence of this phenomenon were awarded a Nobel Prize in Physics in 2022. In addition, beyond its fundamental interest, Bell non-locality has been recognized as a valuable resource for various applications within device-independent quantum information processing~\cite{PhysRevLett.98.230501}, where observers process information without placing trust in the devices themselves. One of such applications that has recently gained substantial interest, driven by the recent rapid development of new quantum technologies such as quantum cryptography systems or quantum computing, is the device-independent certification of entangled quantum states and the measurements performed on them, commonly referred to as self-testing~\cite{Mayers2004} (see also Ref.~\cite{Supic2020} for a review). Self-testing allows one to almost completely characterize (up to certain well-known equivalences) the distributed quantum state and the measurements performed on its local subsystems solely from the observed non-local correlations. It thus requires making only minimal assumptions about the internal working of the devices under study, which makes DI certification schemes immune to hacking exploiting particular specifics of the devices. 

Notably, apart from its role in certifying entangled states, self-testing has also found other interesting applications. For instance, self-testing results are often used in device-independent (DI) certification of randomness generated from quantum states~\cite{PhysRevA.93.040102,PhysRevResearch.2.042028}, and also to prove separation between various sets of quantum correlations in certain Bell scenarios (see, e.g., Ref.~\cite{Beigi2021separationof} and references therein).

For these reasons self-testing has attracted a lot of attention in recent years and a plethora of DI certification schemes for pure quantum states (both bipartite and multipartite) and quantum measurements have been designed. In the bipartite case a general scheme allowing to self-test an arbitrary pure entangled two-qudit state was introduced in Ref.~\cite{Coladangelo2017}. However, this scheme is not universally optimal in terms of the number of measurements that the parties must perform or the number of outcomes, and more optimal schemes for the maximally entangled states have been proposed (see, e.g., Refs.~\cite{Sarkar2021,Mancinska2024}). At the same time, in the multipartite case very general methods to certify any arbitrary multiqubit states
have been proposed first in Ref.~\cite{Supic2023}, which is based on quantum networks, and then in Ref.~\cite{Balanzo-Juando2024} in the standard Bell scenario. 
Then, several other schemes have been proposed for various specific classes of multipartite states such as multiqubit graph states~\cite{Baccari2020scalable}, the $W$ state~\cite{Wu2014}, or the Schmidt states of arbitrary local dimension~\cite{Supic2018}.

However, the common problem permeating most of the existing schemes is that their complexity in terms of the number of measurements or outcomes grows with the system size (either the local dimension or the number of subsystems or both). On one hand, 
the schemes designed to certify states of arbitrary local dimension such as~\cite{Coladangelo2017,Supic2018,Sarkar2021} require performing measurements whose number of outcomes grows system's local dimension and, on the other hand, as far as multiqubit states are concerned, the method of Ref.~\cite{Balanzo-Juando2024} requires
a number of measurements that grows as factorial in the number of subsystems, whereas
the scheme of Ref.~\cite{Supic2023} requires doubling the number of parties in order to certify a state. At the same time, for practical applications of self-testing, the experimental effort required for their implementation needs to be reduced as much as possible. Only recently, a novel approach to DI certification was proposed in Ref.~\cite{Mancinska2024}, exploiting certain algebraic methods, which enables designing a robust self-testing scheme for maximally entangled states of arbitrary local dimension based on constant-sized statistics. In this scheme, both parties need to perform only four two-outcome measurements to certify the state, regardless of its local dimension. However, no such scheme exists in the multipartite scenario.

In this paper {we address the above problem and} prove that it is possible to {design a scheme allowing to} self-test {multipartite states of arbitrary local dimension and number of subsystems just from a constant number of binary measurements performed by each observer.} {More precisely, we show how to self-test the so-called Slater (determinant) states ~\cite{carlen2016entropy}, also known as supersinglet states~\cite{Cabello2003} of odd local dimension, by locally measuring only four two-outcome observables}. {These states are defined in the antisymmetric tensor product of $d$ $d$-dimensional Hilbert spaces and are the only zero-eigenstates of the respective total spin operator. Simultaneously, they have found numerous applications in quantum chemistry~\cite{szabo1996modern,Giuliani2025}, quantum computation~\cite{Lidar1998} or quantum information~\cite{PhysRevLett.89.100402,Piccolini_2025} and} also exhibit the singlet property~\cite{Bernards2024}, which protects them against decoherence effects~\cite{Amato2025}, provided that the underlying interaction preserve this symmetry~\cite{Lidar1998,Cabello2007}. {We also obtain a self-testing statement for the case of even $d$, however, in this case the observers require measuring $d$ binary observables. Finally, we demonstrate that self-testing is robust against noise, a fundamental property for the experimental implementation of such protocols~\cite{Gigena2024}.}

{In order to derive our results, we generalize the techniques of Ref.~\cite{Mancinska2024} to the multipartite setting. In particular, we provide an explicit and direct construction of the operators representing the measurement elements of the optimal measurements giving rise to the expected nonlocal correlations.
This improves the construction used in Ref.~\cite{Mancinska2024}, which is based on induction.}

{Let us finally notice that} prior to this work, a self-testing strategy for the same state was very recently introduced in~\cite{Saha2024}. This approach provides an optimal strategy—one that ensures a quantum game can be won with absolute certainty—a feature linked to Kochen-Specker sets in the context of local measurements~\cite{Cabello2025,Trandafir2024}. However, that method requires a number of local bipartite measurements that increase exponentially with the Hilbert space dimension, making it less efficient than the strategy proposed in this work.

The article is organized as follows. Section~\ref{sec:preliminaries} covers the preliminaries, with Subsection~\ref{sec:self_testing} introducing the notions of Bell non-locality and self-testing, along with a brief summary of the state of the art. Section~\ref{sec:math_back} presentes the mathematical tools used throughout the paper. In particular, Subsection~\ref{sec:algebras} provides a concise guide to $C^*$-algebras, while Subsection~\ref{sec:identity_sum} discusses the properties of projections summing to a scalar times the identity. Explicit constructions of the latter families are provided in Section~\ref{sec:explicit_construction}. In Section~\ref{sec:self_testing_Slater}, we present the strategy for self-testing the Slater state. In Section~\ref{sec:self_testing_measurement}, we show that the measurements are self-tested in the general case, while in Section~\ref{sec:self_testing_state}, we describe sufficient conditions for self-testing, and specifically two such conditions, namely Conjectures~\ref{prop:conjecture} and~\ref{prop:conjecture_3}. In Subsection~\ref{Section:Algebraic_approach} we prove that the conjectures are satisfied in two specific conditions, which allow us to obtain the self-testing statement in two configurations. Finally, in Section~\ref{sec:self-testing_robust}, we prove that the self-testing statements are all robust.

\section{Preliminaries}\label{sec:preliminaries}

In this Section, we provide a comprehensive overview of the key definitions and properties underlying Bell non-locality and self-testing. We then introduce the mathematical definition of Slater states and discuss their symmetry properties.

\subsection{Bell scenario}\label{sec:self_testing}
Bell non-locality provides a certification technique to ensure that two or more spatially separated parties share a quantum resource~\cite{Augusiak2014}. Let $n$ parties share a quantum state $\ket{\Psi}$, belonging to the composite Hilbert space $\mathcal{H}:=\mathcal H_0\otimes\mathcal{H}_1\otimes\ldots\otimes\mathcal{H}_{n-1}$, where $\mathcal{H}_k$ is a finite-dimensional Hilbert space associated to the $k$th party. Each party $k=0,\dots, n-1$ has access to a measurement device performing one of $N$ possible measurements, each described by a two-element POVM: $\{E_{\mu,0}^{(k)},E_{\mu,1}^{(k)}\}$, where $\mu=0,\dots,N-1$. The measurement elements $E_{\mu,0}^{(k)}$ and $E_{\mu,1}^{(k)}$ are positive operators in $\mathcal B(\mathcal H_k)$ that sum up to the identity $\mathbb I_{\mathcal H_k}$ on $\mathcal H_k$. For simplicity, we consider measurements with only two outcomes (corresponding to indices $0$ and $1$ in the measurement operators), although the formalism extends naturally to more outcomes.

Based on a family of inputs $(\mu_0,\mu_1,\dots,\mu_{n-1})\in\{0,1,\dots,N-1\}^{n}$, each party $k$ performs the measurement $\{E_{\mu_k,0}^{(k)},E_{\mu_k,1}^{(k)}\}$, thus obtaining a total probability distribution
\begin{equation}\label{eq:correlation_general}
    {p}(a_0,a_1,\dots,a_{n-1}|\mu_0,\mu_1,\dots,\mu_{n-1})=\mel{\Psi}{E_{\mu_0,a_0}^{(0)}\otimes\dots\otimes E_{\mu_{d-1},a_{d-1}}^{(d-1)}}{\Psi}
\end{equation}
with $(a_0,a_1,\dots,a_{n-1})\in\{0,1\}^{n}$. In order to ease the notation, we write $\boldsymbol a=(a_0,\dots,a_{n-1})$, $\boldsymbol \mu=(\mu_0,\dots,\mu_{n-1})$ and $E_{\boldsymbol \mu,\boldsymbol a}={E_{\mu_0,a_0}^{(0)}\otimes\dots\otimes E_{\mu_{d-1},a_{d-1}}^{(d-1)}}$. We also denote $[a,b]=\{a,a+1,\dots,b-1\}$ for any pair $a,b\in\mathbb N$ such that $a<b$; in the particular case of $a=0$ it reduces to $[b]=[0,b]$. The collection of probability distributions $\{p(\boldsymbol a|\boldsymbol\mu)\}$ is called a quantum correlation generated by the state $\ket{\Psi}$ and the measurements $\{E_{\boldsymbol \mu,\boldsymbol a}\}$.

The correlation $p(\boldsymbol a |\boldsymbol \mu)$ is said to be Bell non-local if it cannot be reproduced using only local hidden variable models (LHV). Explicitly, it cannot be written in the form
\begin{equation}\label{eq:local_corr}
   p_{\mathrm{loc}}(\boldsymbol{a}|\boldsymbol{\mu})= \int_{\lambda\in\Omega} d\lambda \ \omega(\lambda) \ p_0(a_0|\mu_0,\lambda)\cdots p_{N-1}(a_{N-1}|\mu_{N-1},\lambda),
\end{equation}
with $p_k(a_k|\mu_k,\lambda)$ representing a local strategy, and $\lambda$ a shared randomness distributed according to a probability density $\omega(\lambda)$. Correlations admitting LHV models as in~\eqref{eq:local_corr} are also called local or classical. Thus, the existence of {non-local} correlations between spatially separated parties serves as proof of the quantum nature of a system. 

\subsection{Self-testing}\label{sec:self-testing_intro}
There are scenarios in which Bell non-locality not only certifies the quantum nature of correlations but also fully (up to certain equivalences) characterizes the underlying entangled quantum state and measurements. This phenomenon is known as self-testing~\cite{Mayers2004} (see also Ref.~\cite{Supic2020} for a review). To formalize this concept, suppose a certain reference state $\ket*{\tilde \Psi}\in\tilde{\mathcal{H}}_0\otimes\ldots\otimes\tilde{\mathcal{H}}_{n-1}$ and a reference family of measurements $\{\tilde E_{\boldsymbol{\mu},\boldsymbol{a}}\}$ generate the correlations $\tilde p(\boldsymbol{a}|\boldsymbol{\mu})$ as in Equation~\eqref{eq:correlation_general}. Suppose these are the same correlations generated by a Bell experiment involving an unknown quantum state $\ket{\Psi}\in\mathcal{H}_0\otimes\ldots\otimes\mathcal{H}_{n-1}$
and unknown quantum measurements $\{E_{\boldsymbol \mu,\boldsymbol a}\}$. Then, the definition of self-testing reads as follows.
%
%
\begin{definition}[Self-testing]
We say that the observed correlations $p(\boldsymbol a|\boldsymbol \mu)$ obtained from a quantum state $\ket{\Psi}$ and measurements $\{E_{\boldsymbol \mu,\boldsymbol a}\}$ self-test the reference quantum realization $\ket*{\tilde\Psi}$ and $\{\tilde E_{\boldsymbol \mu,\boldsymbol a}\}$ if for all $k\in[n]$ there exists some auxiliary Hilbert spaces $\mathcal I_k$ and isometries $V_k:\mathcal H_{k}\rightarrow \tilde{\mathcal H}_k\otimes \mathcal I_{k}$ such that
\begin{align}
        \label{eq:self-testing_state}
        \left(\bigotimes_{k=0}^{n-1} V_k \right)\ket{\Psi}=\ket*{\tilde\Psi}\otimes \ket{\psi_{\mathrm{junk}}},
    \end{align}
    and for all $\boldsymbol a\in\{0,1\}^n$ and $\boldsymbol \mu\in[N]^{n}$
    \begin{equation}\label{eq:self-testing_state_measurements_2}
       \left(\bigotimes_{k=0}^{n-1} V_k \right) E_{\boldsymbol \mu,\boldsymbol a} \ket{\Psi}= \tilde E_{\boldsymbol \mu,\boldsymbol a}\ket*{\tilde\Psi}\otimes \ket{\psi_{\mathrm{junk}}},
    \end{equation}
where $\ket{\psi_{\mathrm{junk}}}$ is some auxiliary state from $\mathcal I_{0}\otimes\ldots\otimes \mathcal I_{n-1}$.

\end{definition}
We focus exclusively on cases where all states are pure, since—thanks to dilation—this assumption can always be made without any loss of generality~\cite{Baptista2023}.

We further say that the self-testing is \textit{robust} if the relations~\eqref{eq:self-testing_state} and~\eqref{eq:self-testing_state_measurements_2} hold approximately for correlations close to $\tilde p(\boldsymbol a|\boldsymbol \mu)$. Specifically, for every  $\varepsilon>0$ there exists $\delta$ such that for all quantum states $\ket{\Psi}$ and measurements $\{E_{\boldsymbol{\mu}|\boldsymbol{i}}\}$ producing correlations $p(\boldsymbol a|\boldsymbol \mu)$ such that $\norm{p-\tilde p}_1:=\sum_{\boldsymbol a\boldsymbol \mu}|p(\boldsymbol a|\boldsymbol \mu)-\tilde p(\boldsymbol a|\boldsymbol \mu)|<\delta$, conditions~\eqref{eq:self-testing_state} and~\eqref{eq:self-testing_state_measurements_2} hold up to an error $\varepsilon$ in the one-norm. Robust self-testing is essential because it ensures that theoretical certifications of quantum systems remain valid under realistic, noisy experimental conditions. 


Assuming, without loss of generality, that $\ket{\Psi}$, $\ket*{\tilde \Psi}$ and $\ket{\psi_{\mathrm{junk}}}$ have maximal support over the reduced Hilbert spaces $\mathcal H_k$, $\tilde{\mathcal H}_k$ and $\mathcal I_k$, Equations~\eqref{eq:self-testing_state} and~\eqref{eq:self-testing_state_measurements_2} imply that~\cite{Baptista2023}:
\begin{equation}\label{eq:self-testing_measurement}
     V_k E_{\mu,a}^{(k)}V_k^*= \tilde E_{\mu,a}^{(k)}\otimes \mathbb I_{\mathcal I_k}\qquad \text{for}\qquad k\in[n], \quad \mu\in [N], \quad a\in\{0,1\}.
\end{equation}
In particular, if only Equation~\eqref{eq:self-testing_measurement} holds, we say that the quantum correlation $p(\boldsymbol{a}|\boldsymbol{\mu})$ self-tests the measurement.

\begin{remark}
It is worth noting that there is one more degree of freedom/equivalence that should be considered in the definition of self-testing. Namely, the state and measurements can only be self-tested up to complex conjugation~\cite{Supic2020}. However, in our case, both the state and measurements are real, and this equivalence need not be accounted for.
\end{remark}

In the bipartite case ($n=2$), the maximally entangled two-qubit states can be self-tested via maximal CHSH inequality violation~\cite{Clauser1969}, while all pure two-qubit entangled states are self-testable using tilted CHSH inequalities~\cite{Yang2013, Bamps2015}. More generally, any pure bipartite state can be self-tested using three or four measurements per party, each with $d$ outcomes (where $d$ is the state's Schmidt number), scaling linearly with Hilbert space dimension~\cite{Coladangelo2017}.

The general self-testing statement in~\cite{Coladangelo2017} describes a procedure of self-testing any pure entangled state shared by two parties. Nevertheless, for certain states, more effective strategies can be developed. The SATWAP inequality~\cite{Salavrakos2017} allows for self-testing by maximal violation the maximally entangled state~\cite{Sarkar2021} between two parties in any dimensions with only two measurements of $d$ outcomes. More recently, it has been found in~\cite{Mancinska2024} that the maximally entangled state can be self-tested with a bounded {and independent of the local dimension of the system} number of {four} measurements and {two} outcomes {per observer}. To achieve this, techniques developed in~\cite{Rabanovich2000,Kruglyak2002,Kruglyak2003} were considered, specifically the characterization of projections based on their sum.

{Some progress has been made in developing self-testing statements for the multipartite quantum systems}. The current, more general self-testing statement pertains to qubit states~\cite{Balanzo-Juando2024}, and it demonstrates that, up to a phase conjugation, all pure qubit states can be self-tested. Other specific self-testing statements are given here for qubits. In~\cite{Wu2014}, a self-testing statement is presented for the tripartite $W$ state, which is performed using two measurements and two outcomes per party. In~\cite{Supic2018}, the GHZ state is self-tested with a total of two measurements and two outcomes in any number of parties. In~\cite{Fadel2017}, self-testing of all $N$ parties Dicke states has been presented for many copies, and self-testing is achieved with $2$ local measurements. In~\cite{Pal2014}, three local measurements with two outcomes are used to self-test the three-qubit $W$ state and the three and four-qubit GHZ states. In~\cite{Baccari2020}, a self-testing for graph states is presented, consisting of 2 local measurements with two outcomes. Finally, Ref.~\cite{Supic2018} presents a self-testing statement for all Schmidt qudit states, with $d$ measurements per party. Recently, a procedure for self-testing the Slater state was introduced~\cite{Saha2024}, where the number of measurements is related to the number of rigid Kochen-Specker sets. In particular, the number of measurements can be found to grow exponentially with the number of parties. On the contrary, the procedure presented in this work offers a significant improvement by allowing self-testing with only $4$ dichotomic measurements, regardless of the number of parties.

\subsection{Slater states}

Assume that each local Hilbert space $\mathcal{H}_k$ is $d$-dimensional, that is, $\mathcal{H}_k\equiv\mathcal{H}=\mathbb{C}^d$. The set of bounded operators over $\mathcal H$ is denoted by $\mathcal B(\mathcal H)$, and it can be identified with $\mathcal M_d$, set of $d\times d$ complex matrices.
Consider $n=d$ copies of {the local $d$-dimensional system}, meaning that the total Hilbert space is $\mathcal{H}_{\mathrm{tot}}=\mathcal H^{\otimes d}$. Denoting by $\{\ket{j}\}_{j=0,\dots,d-1}$ the standard (computational) basis in $\mathcal H$, the Slater state is defined as
\begin{equation}\label{eq:slater_state_canonical_basis}
    \ket{\Psi_{\mathrm{S}}}=\frac{1}{\sqrt{d!}}\sum_{\sigma \in\mathbb P_d}\mathrm{sign}(\sigma)\ket{\sigma(0)\otimes\sigma(1)\otimes\dots\otimes\sigma(d-1)}\in\mathcal H^{\otimes d},
\end{equation}
where $\mathbb P_n$ is the set of permutations of order $n$, and $\mathrm{sign}(\sigma)$ is the sign of the permutation. The Slater state satisfies the following singlet property
\begin{equation}\label{eq:singlet_property}
    U^{\otimes n}\ket{\Psi_{\mathrm S}}=\xi\ket{\Psi_{\mathrm S}},\quad \forall\ U\in\mathcal{U},
\end{equation}
with $\xi$ being a phase, and $\mathcal U(d)$ the set of unitaries over $\mathbb C^d$. The Slater state is the only vector in $\mathcal H^{\otimes d}$ satisfying this property~\cite{Bernards2024}, as we discuss in Sec.~\ref{sec:self_testing_measurement}.

The Slater states are the least entangled fermionic states when considering a partition with only one party~\cite{Coleman1963}, while this is not generally true for the reduced density matrix over two parties~\cite{Yang1962,carlen2016entropy}.

Due to~\eqref{eq:singlet_property}, a change of basis results in the Slater state acquiring an overall phase factor. Indeed, this turns out to be the only state in the antisymmetric subspace of $\mathcal H^{\otimes d}$.

\section{Mathematical background}\label{sec:math_back}
In this Section, we bring together several definitions and properties which are fundamental for the self-testing statement. First, we present a couple of properties regarding finite-dimensional $C^*$-algebra, and discuss the notion of projections adding up to the identity times a constant.

\subsection{\texorpdfstring{$C^*$-algebras}{}}\label{sec:algebras}

The complete abstract definition of $C^*$-algebra is rather intricate~\cite{Bratteli1987,Blackadar2006}, hence we shall reduce the considerations to the case of $C^*$-algebras of operators over a separable Hilbert space $\mathcal H$. Let $\mathfrak A$ be a set of operators over a Hilbert space $\mathcal H$, namely, $\mathfrak A\subset \mathcal B(\mathcal H)$. Then, $\mathfrak A$ is called a $C^*$-subalgebra of $\mathcal B(\mathcal H)$, if it is a complex vector space, closed under composition of Hermitian adjoint of an element. Furthermore, it must be closed in the uniform norm $\norm{\cdot}_{\infty}$, defined as
\begin{equation}
    \norm{A}_{\infty}=\sup\{\norm{A\ket{\psi}}:\ket{\psi}\in\mathcal H, \norm{\ket{\psi}}=1\}.
\end{equation}


In particular, a structure theorem holds for $C^*$-algebra~\cite[p.~75]{Davidson1996}. 
\begin{theorem}[Structure Theorem]
Let $\mathfrak A$ be a $C^*$-algebra over a finite dimensional Hilbert space $\mathcal H$. Then, it is possible to decompose $\mathcal H$ as
\begin{equation}\label{eq:dec_hilbert_space_algebra}
    \mathcal H=\mathcal H_0 \oplus \bigoplus_{k=1}^{M} \mathcal H_{k,1}\otimes \mathcal H_{k,2},
\end{equation}
with $\mathcal H_0$, $\mathcal H_{k,1}$ and $\mathcal H_{k,2}$ finite dimensional Hilbert spaces, and $k=1,2,\dots, M$. With respect to decomposition~\eqref{eq:dec_hilbert_space_algebra}, the $C^*$-algebra $\mathfrak A$ can be expressed in the form
\begin{equation}\label{eq:structure_theorem_c_algebras}
    \mathfrak A = 0\oplus \bigoplus_{k=1}^M \mathcal B(\mathcal H_{k,1}) \otimes \mathbb I_{k,2},
\end{equation}
where, for each $k=1,\dots,M$, $\mathbb I_{k,2}$ represents the identity over the Hilbert space $\mathcal H_{k,2}$, while $0$ is the null operator over $\mathcal H_0$.
\end{theorem}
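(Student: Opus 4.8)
The plan is to follow the classical Wedderburn--Artin route adapted to the $C^*$-setting, where the involution and the $C^*$-identity turn all the relevant idempotents into orthogonal projections and automatically annihilate the Jacobson radical (if $a$ lies in the radical, then $a^*a$ is a positive nilpotent, hence $0$, so $a=0$). Throughout I would freely use that in finite dimensions norm-closedness is automatic and that commutative finite-dimensional $C^*$-algebras are, by the spectral theorem, spanned by their minimal projections.

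First I would dispose of the non-unital part. In finite dimensions $\mathfrak{A}$ possesses a unit (an approximate unit must stabilise), i.e.\ a projection $e\in\mathfrak{A}$ with $ea=ae=a$ for all $a\in\mathfrak{A}$; put $\mathcal{H}_0:=\ker e$ and $\mathcal{H}':=e\mathcal{H}=\mathcal{H}_0^{\perp}$. Then $\mathfrak{A}$ annihilates $\mathcal{H}_0$ and restricts to a \emph{unital} $C^*$-subalgebra of $\mathcal{B}(\mathcal{H}')$, so it suffices to treat the unital case. Next I would split $\mathfrak{A}$ along its centre: $Z(\mathfrak{A})$ is a finite-dimensional commutative $C^*$-algebra, hence spanned by mutually orthogonal minimal projections $z_1,\dots,z_M$ with $\sum_k z_k=\mathbb{I}_{\mathcal{H}'}$. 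Setting $\mathcal{H}_k:=z_k\mathcal{H}'$ and $\mathfrak{A}_k:=z_k\mathfrak{A}$ gives $\mathcal{H}'=\bigoplus_{k=1}^M\mathcal{H}_k$ and $\mathfrak{A}=\bigoplus_{k=1}^M\mathfrak{A}_k$, each $\mathfrak{A}_k\subseteq\mathcal{B}(\mathcal{H}_k)$ a unital $C^*$-algebra with trivial centre, i.e.\ a \emph{factor} (equivalently, simple: its only closed two-sided ideals are $0$ and $\mathfrak{A}_k$).

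The technical core is to show each factor $\mathfrak{A}_k$ is, up to a unitary on $\mathcal{H}_k$, a full matrix algebra amplified by the identity. By finite-dimensionality $\mathfrak{A}_k$ contains a minimal projection $q$ (a nonzero projection of least rank within $\mathfrak{A}_k$), and minimality plus the $C^*$-identity force $q\mathfrak{A}_kq=\mathbb{C}q$. Using simplicity, any two minimal projections in $\mathfrak{A}_k$ are Murray--von Neumann equivalent, so one can build a maximal system of matrix units $\{e_{ij}\}_{i,j=1}^{r_k}\subset\mathfrak{A}_k$ with $e_{ij}^*=e_{ji}$, $e_{ij}e_{lm}=\delta_{jl}e_{im}$ and $\sum_i e_{ii}=\mathbb{I}_{\mathcal{H}_k}$, yielding a $*$-isomorphism $\mathfrak{A}_k\cong M_{r_k}(\mathbb{C})$. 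Since $M_{r_k}(\mathbb{C})$ has a unique irreducible $*$-representation up to unitary equivalence (the defining one on $\mathbb{C}^{r_k}$) and every $C^*$-representation is completely reducible, the action of $\mathfrak{A}_k$ on $\mathcal{H}_k$ is a multiple of it: with $\mathcal{H}_{k,1}:=\mathbb{C}^{r_k}$ and $\mathcal{H}_{k,2}:=e_{11}\mathcal{H}_k$ (the multiplicity space) there is a unitary $\mathcal{H}_k\cong\mathcal{H}_{k,1}\otimes\mathcal{H}_{k,2}$ identifying $\mathfrak{A}_k$ with $\mathcal{B}(\mathcal{H}_{k,1})\otimes\mathbb{I}_{k,2}$. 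Reassembling, $\mathcal{H}=\mathcal{H}_0\oplus\bigoplus_{k=1}^M\mathcal{H}_{k,1}\otimes\mathcal{H}_{k,2}$ and $\mathfrak{A}=0\oplus\bigoplus_{k=1}^M\mathcal{B}(\mathcal{H}_{k,1})\otimes\mathbb{I}_{k,2}$.

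I expect the main obstacle to be precisely this third step: producing the matrix units inside a finite-dimensional factor — equivalently, establishing the Wedderburn structure of a simple finite-dimensional $*$-algebra over $\mathbb{C}$ and ruling out non-trivial division algebras via the $C^*$-norm — and then verifying that the representation on $\mathcal{H}_k$ is genuinely the multiplicity form $\mathcal{B}(\mathcal{H}_{k,1})\otimes\mathbb{I}_{k,2}$ rather than something more general. The reduction to the unital and factor cases, by contrast, is routine bookkeeping with orthogonal projections. (As this is a standard result, I would alternatively simply cite the structure theorem from a $C^*$-algebra text and only sketch the above.)
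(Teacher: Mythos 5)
Your proof sketch is correct: it is the standard Wedderburn--Artin argument adapted to the $C^*$-setting (unitization via the support projection, central decomposition into factors, matrix units in each factor, and uniqueness of the irreducible representation of $\mathcal M_{r_k}$ giving the multiplicity form). The paper itself does not prove this theorem --- it is quoted from Davidson's book --- so there is nothing to compare against; your outline matches the textbook proof that reference contains, and the step you flag as the technical core (matrix units in a finite-dimensional factor, with the involution ruling out nontrivial division algebras) is indeed where all the work lies.
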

If $\mathfrak A$ represents the set of observables of a theory, the direct sum implies the existence of superselection rules, the sets of operators $\mathcal B(\mathcal H_{k,1})$ contains the coherent superposition, and finally the identities $\mathbb I_{k,2}$ involves gauge degrees of freedom~\cite{Balachandran2019}.
Given a set of operators $S\subset \mathcal B(\mathcal H)$, we can define its commutant $S'$
\begin{equation}
    S'=\{X\in\mathcal B(\mathcal H):XY=YX,\ \forall Y\in S\}.
\end{equation}
The bicommutant $S''$ of $S$ is defined as the commutant of $S'$. For a $C^*$-algebra $\mathfrak A$ in the form of~\eqref{eq:structure_theorem_c_algebras}, the commutant and bicommutant are given by
\begin{align}
    \mathfrak A' &= \mathcal B(\mathcal H_0) \oplus \bigoplus_{k=1}^M \mathbb I_{k,1} \otimes \mathcal B(\mathcal H_{k,2}),\\
    \mathfrak A'' &=\mathbb I_0 \oplus \bigoplus_{k=1}^M \mathcal B(\mathcal H_{k,1}) \otimes \mathbb I_{k,2},
\end{align}
with $\mathbb I_0$ and $\mathbb I_{k,1}$ the identities over $\mathcal H_0$ and $\mathcal H_{k,1}$ respectively, for $k=1,\dots, M$.

In particular, for $\mathcal H_{0}=0$, we have that $\mathfrak A''=\mathfrak A$, which is the finite-dimensional version of the von Neumann bicommutant theorem~\cite{Bratteli1987,V_Neumann1930}.

\subsection{Projections summing to the identity}\label{sec:identity_sum}
In this Section, we summarize a specific class of projective operators which are particularly relevant for several self-testing problems~\cite{Dykema2019,Mancinska2024,Volcic2024}. This class of was studied in~\cite{Rabanovich2000,Kruglyak2002,Kruglyak2003}, and here we recall the construction and the most important properties.

Following~\cite{Kruglyak2002}, the main motivation is to find classes of orthogonal projections summing up to the identity. Specifically, let $N$ be an integer, and the problem boils down to identify all scalars $x\in\mathbb{R}$ for which there are $N$ orthogonal projections $P_0$, $P_1,\dots,P_{N-1}$ on a separable Hilbert space $\mathcal H$ summing up to $x \mathbb I_{\mathcal H}$.

\begin{theorem} [Theorem~5.1 and Lemma~5.3 from~\cite{Mancinska2024}]\label{th:mancinska}
    Let $\Lambda_3=\{3/2\}$ and, define, for $N\geqslant 4$, the set $\Lambda_N=\{x_k\}_{k=0}^\infty$ recursively as
    \begin{equation}\label{eq:set_lambda_N}
        x_0=0,\quad x_k = 1+\frac{1}{N-1-x_{k-1}},\quad k\geqslant 1.
    \end{equation}
    Given $x\in\Lambda_N$, write it as $x=b/d$, with $b$ and $d$ natural numbers in the lowest form. Then, we can find $N$ orthogonal projections $\{P_0,\dots,P_{N-1}\}$ in $\mathcal M_d$ such that
    \begin{equation}\label{eq:operators_sum_identity}
        \sum_{\mu=0}^{N-1}P_{\mu}=x\mathbb I_d,
    \end{equation}
    with $\mathbb I_d$ being the identity in $\mathcal M_d$. Furthermore, these projections are uniquely defined, up to an isometry. Specifically, suppose that there are $N$ orthogonal projections $\{{E}_1,\dots,E_N\}$ over an Hilbert space $\mathcal H$ satisfying~\eqref{eq:operators_sum_identity}, namely
    \begin{equation}\label{eq:operators_sum_identity_2}
        \sum_{\mu=0}^{N-1}  E_\mu=x\mathbb I_\mathcal H,
    \end{equation}
    with $\mathbb I_{\mathcal H}$ being the identity over $\mathcal H$. Then, up to a unitary transformation, $\mathcal H=\mathbb C^d\otimes \mathcal H'$, for some Hilbert space $\mathcal H'$, and for all $\mu=0,\dots, N-1$ we have $P_\mu=P'_\mu \otimes \mathbb I_{\mathcal H'}$.
\end{theorem}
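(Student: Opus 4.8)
The plan is to separate the two halves of the statement---existence of $N$ orthogonal projections summing to $x\mathbb{I}_d$ in $\mathcal{M}_d$ for $x=b/d\in\Lambda_N$, and the rigidity statement that any such family is the canonical one up to a unitary and an ancilla---while using the recursion~\eqref{eq:set_lambda_N} as the common engine. For existence I would induct on the index $k$ with $x=x_k$. The base cases are explicit: for $N\geqslant4$ one has $x_0=0=0/1$, so $d=1$ and all projections vanish, whereas for $N=3$ the single value $x=3/2$ is realized by the ``trine'', the three rank-one projections $P_\mu=\tfrac{1}{2}(\mathbb{I}_2+\cos\theta_\mu\,\sigma_z+\sin\theta_\mu\,\sigma_x)$ with $\theta_\mu=2\pi\mu/3$, whose Bloch vectors sum to zero so that $\sum_\mu P_\mu=\tfrac{3}{2}\mathbb{I}_2$. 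For the inductive step ($N\geqslant4$) one must turn a family $\{Q_\mu\}\subset\mathcal{M}_{d_{k-1}}$ with $\sum_\mu Q_\mu=x_{k-1}\mathbb{I}$ into a family $\{P_\mu\}\subset\mathcal{M}_{d_k}$ with $\sum_\mu P_\mu=x_k\mathbb{I}$; the dimension grows as $d_k=(N-1-x_{k-1})\,d_{k-1}=(x_k-1)^{-1}d_{k-1}$, and chasing the Möbius map $x_{k-1}\mapsto 1+(N-1-x_{k-1})^{-1}$ through its continued-fraction expansion also shows that $b_k/d_k$ is automatically in lowest terms. Alternatively, one may simply invoke the closed-form families constructed in Section~\ref{sec:explicit_construction}, which bypass the induction.

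For the rigidity part I would first reduce to irreducible families. Let $\{E_\mu\}_{\mu=0}^{N-1}$ be projections on a Hilbert space $\mathcal{H}$ with $\sum_\mu E_\mu=x\mathbb{I}_{\mathcal{H}}$ and put $\mathfrak{A}:=C^*(E_0,\dots,E_{N-1})$. For $k\geqslant 1$ we have $x\neq0$, hence $\mathbb{I}\in\mathfrak{A}$, so the Structure Theorem yields $\mathcal{H}=\bigoplus_j\mathcal{H}_{j,1}\otimes\mathcal{H}_{j,2}$ with $E_\mu=\bigoplus_j E_\mu^{(j)}\otimes\mathbb{I}_{j,2}$; since $\sum_\mu E_\mu$ is a scalar it is a scalar on each block, i.e.\ $\sum_\mu E_\mu^{(j)}=x\mathbb{I}_{\mathcal{H}_{j,1}}$ with $\{E_\mu^{(j)}\}$ acting irreducibly on $\mathcal{H}_{j,1}$. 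Hence it suffices to prove that for $x\in\Lambda_N$ there is, up to unitary equivalence, exactly one irreducible family of $N$ projections summing to $x\mathbb{I}$, and that it lives on $\mathbb{C}^d$. Granting this, each $\mathcal{H}_{j,1}$ is unitarily identified with $\mathbb{C}^d$ carrying the canonical family $\{P'_\mu\}$, and gathering the multiplicity spaces into $\mathcal{H}':=\bigoplus_j\mathcal{H}_{j,2}$ gives $\mathcal{H}\cong\mathbb{C}^d\otimes\mathcal{H}'$ and $E_\mu=P'_\mu\otimes\mathbb{I}_{\mathcal{H}'}$; the first, ``$\mathcal{M}_d$'' uniqueness statement is the special case $\dim\mathcal{H}'=1$ (a family on $\mathbb{C}^d$ at parameter $x=b/d$ is forced to be irreducible once one knows the irreducible representation has minimal dimension $d$).

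The uniqueness of the irreducible family I would obtain by induction on $k$ through a reduction (Coxeter-type) functor, in the spirit of~\cite{Rabanovich2000,Kruglyak2002,Kruglyak2003}. For $N=3$ this is the classical three-subspace ($D_4$-type) classification, whose only indecomposable family summing to $\tfrac{3}{2}\mathbb{I}$ is the $\mathbb{C}^2$-trine. For $N\geqslant4$: given an irreducible family with $\sum_\mu E_\mu=x_k\mathbb{I}_{\mathcal{H}}$, note the summation map $\bigoplus_\mu\operatorname{ran}(E_\mu)\to\mathcal{H}$ is onto, since its image contains $\operatorname{ran}(\sum_\mu E_\mu)=\operatorname{ran}(x_k\mathbb{I}_{\mathcal{H}})=\mathcal{H}$; its kernel $\mathcal{K}$ therefore has dimension $(x_k-1)\dim\mathcal{H}$. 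One equips $\mathcal{K}$ with a canonical family of $N$ projections built from the coordinate structure of $\bigoplus_\mu\operatorname{ran}(E_\mu)$, checks that these sum to $x_{k-1}\mathbb{I}_{\mathcal{K}}$ where $x_{k-1}=N-1-(x_k-1)^{-1}$ inverts~\eqref{eq:set_lambda_N}, and verifies that the assignment is reversible and carries irreducible families to irreducible families in both directions; this produces a bijection between unitary classes of irreducible families at parameters $x_k$ and $x_{k-1}$. The induction terminates at $k=0$, where the only irreducible family is the trivial $\mathbb{C}^1$ one, manifestly unique; propagating dimensions back up the chain reproduces the denominators $d_k$, and---since every irreducible family descends to the base case in finitely many steps---shows along the way that it is finite-dimensional, so that no infinite-dimensional irreducibles occur.

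The main obstacle is precisely the construction and verification of this reduction functor for $N\geqslant4$: that the operators induced on $\mathcal{K}$ really are orthogonal projections, that their sum is exactly $x_{k-1}\mathbb{I}_{\mathcal{K}}$, that the construction has a two-sided inverse, and that it respects irreducibility. This is the step where the precise shape of $\Lambda_N$ is forced, being the locus along which the recursion stays within finite-dimensional, ``closed'' representation data, and it is the technical core inherited from~\cite{Rabanovich2000,Kruglyak2002,Kruglyak2003,Mancinska2024}. By comparison, everything else---the structure-theorem splitting, the scalar-sum bookkeeping, the $D_4$ base case, and the coprimality of $b_k$ and $d_k$---is routine.
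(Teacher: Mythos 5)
First, a point of context: the paper does not prove this theorem at all --- it is imported verbatim as Theorem~5.1 and Lemma~5.3 of~\cite{Mancinska2024}, and the surrounding text only remarks that the proof there proceeds by induction via functors between representations of a $C^*$-algebra in the spirit of~\cite{Kruglyak2002}. Your sketch correctly reconstructs exactly that strategy: the reduction to irreducible families via the structure theorem (with the scalar-sum condition passing to each block), the dimension bookkeeping $d_k=(N-1-x_{k-1})d_{k-1}$ with coprimality propagating through the M\"obius map, the rank/trace count showing the kernel of the summation map $\bigoplus_\mu\operatorname{ran}(E_\mu)\to\mathcal{H}$ has dimension $(x_k-1)\dim\mathcal{H}$, and the termination of the descent at $x_0=0$. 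All of these individual checks are right, and your alternative suggestion of bypassing the induction via the closed-form families of Section~\ref{sec:explicit_construction} is consistent with what the paper actually does for its two cases of interest ($N=d+1$ and $N=4$).

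That said, as a standalone proof your proposal has the gap you yourself flag: the Coxeter-type reduction functor on $\mathcal{K}$ is never constructed, and verifying that the induced operators are orthogonal projections summing to $x_{k-1}\mathbb{I}_{\mathcal{K}}$, that the functor is invertible, and that it preserves irreducibility in both directions is the entire technical content of the rigidity statement. (In the cited literature this actually requires composing a hyperbolic reflection functor with the duality $P_\mu\mapsto\mathbb{I}-P_\mu$, i.e.\ $x\mapsto N-x$, to realize one step of the recursion~\eqref{eq:set_lambda_N}; a single ``kernel'' construction does not land on $x_{k-1}$ by itself, so even the shape of the reduction step needs care.) One further small caveat: in your last reduction you assert that a family on $\mathbb{C}^d$ at parameter $b/d$ in lowest terms is forced to be irreducible ``once one knows the irreducible representation has minimal dimension $d$''; strictly, you need that \emph{every} irreducible at parameter $x$ has dimension exactly $d$ (not merely at least $d$), which is precisely what the uniqueness-of-irreducibles claim delivers, so the logic is circular until the functor argument is completed. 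Since the paper treats the theorem as a citation, none of this affects the paper; but your text is a faithful proof outline rather than a proof.
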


Roughly speaking, the idea behind the self-testing statements is based on the fact that we can infer, from the quantum correlations, both that the operators $P_j$ are orthogonal projections, and that they sum to a scalar multiple of the identity. In this sense, employing synchronous measurement is essential. The explicit form of the projections $\{P_0,\dots,P_{N-1}\}$ can be obtained by induction in the general case, see~\cite[Appendix B]{Mancinska2024}, and it is based on the idea of constructing functors between different representations of a specific $C^*$-algebras~\cite{Kruglyak2002}. It is possible to prove that the $C^*$-algebra generated by $\{P_0,\dots,P_{N-1}\}$ is the whole algebra of square matrices $\mathcal M_d$.
In the following Section, we present two explicit constructions for the projections summing up to the identity, which enable explicit self-testing of the Slater state in specific scenarios. 

\section{Explicit construction of projections summing up to the identity}\label{sec:explicit_construction}

In this Section, we present the construction of projections that sum up to the identity as in Theorem~\ref{th:mancinska}, in two situations. In the first case, we consider the construction corresponding the projections $\{P_0,\dots,P_{N-1}\}$ having rank $1$. In the second case, we provide the construction for the case $N=4$, which allows for self-testing with a bounded number of measurements. This construction, inspired by~\cite[Conj.~5.10]{Wu1994},
is more efficient than the one presented in Ref.~\cite{Mancinska2024}, as it does not use induction.

\subsection{\texorpdfstring{Case $N=d+1$}{}}\label{sec:rank_1_example}

    The first relevant example corresponds to the case of the projections $\{P_\mu\}_{\mu=0}^{N-1}$ having rank $1$. They exist in all dimensions $d$ by choosing $N=d+1$, and $x=N/(N-1)$ in Equation~\eqref{eq:set_lambda_N}, with $N\geqslant 4$. Interestingly, this class of projections is employed to construct all the other representations using the functor defined in~\cite{Kruglyak2002}.

    \begin{proposition}
        Let $\mathcal H=\mathbb C^d$, and let $\{\ket{\mu}\}_{\mu=0,\dots,d-1}$ be an orthonormal basis of $\mathcal H$. Define the states
        \begin{equation}\label{eq:decomposition_induction}
            \ket{\psi_\mu}= \sum_{\nu=0}^{\mu-1} \alpha_\nu \ket{\nu}+\beta_\mu\ket{\mu},\quad \mu=0,\dots,d-1,
        \end{equation}
        and
        \begin{equation}\label{eq:decomposition_induction_2}
            \ket{\psi_d}=\sum_{\nu=0}^{d-1} \alpha_\nu \ket{\nu},
        \end{equation}
        with
        \begin{align}\label{eq:decomposition_induction_3}
            \alpha_\mu =-\sqrt{\frac{d+1}{d(d-\mu)(d-\mu+1)}},\qquad \beta_\mu =+\sqrt{\frac{(d+1)(d-\mu)}{d(d-\mu+1)}}.
        \end{align}
        Then, the rank $1$ projections $P_\mu=\ketbra{\psi_\mu}$, with $\mu=0,\dots, d$, satisfy
        \begin{equation}\label{eq:sum_operators_rank_1}
            \sum_{\mu=0}^{d} P_\mu=\frac{d+1}{d}\mathbb I_{d},
        \end{equation}
        and therefore correspond to the $d+1$ orthogonal projections summing up to $(d+1)/d$.
    \end{proposition}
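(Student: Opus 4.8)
The plan is to verify the two assertions separately: first that each $\ket{\psi_\mu}$ is a unit vector (so that $P_\mu=\ketbra{\psi_\mu}$ is a genuine rank-one orthogonal projection), and then that the matrix of $Q:=\sum_{\mu=0}^{d}P_\mu$ in the basis $\{\ket{\mu}\}$ equals $\tfrac{d+1}{d}\mathbb I_d$. Both reduce to elementary but bookkeeping-heavy manipulations of the explicit coefficients in~\eqref{eq:decomposition_induction_3}; the single trick that makes every sum collapse is the partial-fraction identity $\frac{1}{(d-\nu)(d-\nu+1)}=\frac{1}{d-\nu}-\frac{1}{d-\nu+1}$, which turns the relevant sums into telescoping ones.

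For normalization, note that $|\alpha_\nu|^2=\frac{d+1}{d(d-\nu)(d-\nu+1)}$, so by telescoping $\sum_{\nu=0}^{\mu-1}|\alpha_\nu|^2=\frac{d+1}{d}\bigl(\frac{1}{d-\mu+1}-\frac{1}{d+1}\bigr)=\frac{d+1}{d(d-\mu+1)}-\frac1d$ for every $0\le\mu\le d$. Taking $\mu=d$ gives $\norm{\psi_d}^2=\frac{d+1}{d}-\frac1d=1$. For $\mu<d$ one adds $|\beta_\mu|^2=\frac{(d+1)(d-\mu)}{d(d-\mu+1)}$ and checks that $\frac{d+1}{d(d-\mu+1)}-\frac1d+\frac{(d+1)(d-\mu)}{d(d-\mu+1)}=\frac{(d+1)(d-\mu+1)}{d(d-\mu+1)}-\frac1d=1$, so $\norm{\psi_\mu}^2=1$ as well.

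For the sum, I would compute $\mel{i}{Q}{j}=\sum_{\mu=0}^d\braket{i}{\psi_\mu}\braket{\psi_\mu}{j}$ directly, using that (all coefficients being real) $\braket{i}{\psi_\mu}=\alpha_i$ when $i<\mu\le d-1$, $\braket{i}{\psi_\mu}=\beta_\mu$ when $i=\mu\le d-1$, $\braket{i}{\psi_\mu}=\alpha_i$ when $\mu=d$, and $\braket{i}{\psi_\mu}=0$ when $\mu<d$ and $i>\mu$. For $i=j$ exactly $d-i$ of the terms contribute $|\alpha_i|^2$ (namely $\mu\in\{i+1,\dots,d-1\}$ together with $\mu=d$) plus a single term $|\beta_i|^2$, giving $(d-i)|\alpha_i|^2+|\beta_i|^2=\frac{d+1}{d(d-i+1)}+\frac{(d+1)(d-i)}{d(d-i+1)}=\frac{d+1}{d}$. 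For $i<j$ the surviving terms are $\mu=j$ (contributing $\alpha_i\beta_j$) and $\mu\in\{j+1,\dots,d-1\}$ together with $\mu=d$ (each contributing $\alpha_i\alpha_j$), so $\mel{i}{Q}{j}=\alpha_i\bigl(\beta_j+(d-j)\alpha_j\bigr)$; and this vanishes because $(d-j)\alpha_j=-\sqrt{(d-j)^2\cdot\frac{d+1}{d(d-j)(d-j+1)}}=-\sqrt{\frac{(d-j)(d+1)}{d(d-j+1)}}=-\beta_j$. By Hermiticity (or symmetry) the case $i>j$ is the same, hence $Q=\frac{d+1}{d}\mathbb I_d$, which is~\eqref{eq:sum_operators_rank_1}.

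None of the steps presents a genuine obstacle; the only care required is the ``staircase'' counting of which indices $\mu$ actually contribute to a given entry $\mel{i}{Q}{j}$ — getting the multiplicity $d-i$ right on the diagonal, and isolating the off-diagonal cancellation $\beta_j+(d-j)\alpha_j=0$. Once the telescoping/partial-fraction observation is in place, all the sums collapse and the value $(d+1)/d$ simply drops out.
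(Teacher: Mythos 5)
Your proof is correct, and it takes a somewhat different computational route from the paper's. The paper first establishes that the vectors form an equiangular system, $\braket{\psi_\mu}{\psi_{\mu'}}=-1/d$ for $\mu\neq\mu'$ (with the normalization done by induction rather than your telescoping partial fraction $\frac{1}{(d-\nu)(d-\nu+1)}=\frac{1}{d-\nu}-\frac{1}{d-\nu+1}$, which is essentially the same computation), and then evaluates the matrix elements of $Q=\sum_\nu\ketbra{\psi_\nu}$ in the spanning system $\{\ket{\psi_\mu}\}$ itself: a short Gram-matrix count gives $\mel{\psi_\mu}{Q}{\psi_{\mu'}}=\frac{d+1}{d}\braket{\psi_\mu}{\psi_{\mu'}}$ in both the diagonal and off-diagonal cases, and the operator identity follows because the $\ket{\psi_\mu}$ span $\mathbb C^d$. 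You instead compute $\mel{i}{Q}{j}$ directly in the orthonormal basis, where the work shifts to the ``staircase'' bookkeeping of which $\mu$ contribute and to the single cancellation $(d-j)\alpha_j=-\beta_j$. The paper's route buys a conceptually cleaner picture (the projectors come from a simplex of equiangular lines, which is also what the accompanying Remark uses to motivate the construction) and uniform case analysis; yours avoids having to argue from a non-orthogonal spanning set and makes the identity $Q=\frac{d+1}{d}\mathbb I_d$ visible entry by entry. Both are complete and correct.
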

    \begin{remark}
            The form of the vectors~\eqref{eq:decomposition_induction} can be inferred by observing that, from Equation~\eqref{eq:operators_sum_identity} and the uniqueness condition, for $i\neq j$ we have that $\tr(P_\mu P_\nu)=1/d^2$, which implies that $|\braket{\psi_\mu}{\psi_\nu}|=1/d$. By using the Gram-Schmidt procedure on $\{\ket{\psi_\mu}\}_{\mu=0,\dots,d-1}$, and requiring the scalar product to be $-1/d$, we obtain Equations~\eqref{eq:decomposition_induction},~\eqref{eq:decomposition_induction_2} and~\eqref{eq:decomposition_induction_3}.
    \end{remark}
    \begin{proof}
        First, observe that the states $\ket{\psi_\mu}$ are correctly normalized as 
    \begin{equation}
        \sum_{\nu=0}^{\mu-1}\alpha_\nu^2=\frac{\mu}{d(d-\mu+1)}=1-\beta_\mu^2,\quad \mu=1,2,\dots,d.
    \end{equation}
    The latter can be proved by induction. For $\mu=1$ it is trivial. Suppose it holds for $\mu-1$; we then have
    \begin{align}
        \sum_{\nu=0}^\mu \alpha_\nu^2=\frac{\mu}{d(d-\mu+1)}+\frac{d+1}{d(d-\mu)(d-\mu+1)}=\frac{\mu+1}{d(d-\mu)}.
    \end{align}
    Choose $\mu<\mu'$, and we can easily show that:
    \begin{align}
        \braket{\psi_\mu}{\psi_{\mu'}}=\sum_{\nu=0}^{\mu-1} \alpha_\nu^2+\alpha_\mu\beta_\mu=\frac{\mu}{d(d-\mu+1)}-\frac{d+1}{d(d-\mu+1)}=-\frac{1}{d}.
    \end{align}
    Finally, for all $\mu,\mu'$, we have
    \begin{equation}\label{eq:value_identity}
        \mel{\psi_\mu}{\left(\sum_{\nu=0}^{d}\ketbra{\psi_\nu}\right)}{\psi_{\mu'}}=
        \begin{cases}-\frac{d+1}{d^2},\quad & \mu\neq \mu'\\ \frac{d+1}{d},\quad & \mu=\mu'
        \end{cases} = \frac{d+1}{d}\braket{\psi_\mu}{\psi_{\mu'}}.
    \end{equation}
    From Equation~\eqref{eq:decomposition_induction} we know that $\{\ket{\psi_\mu}\}_{\mu=0,\dots,d-1}$ span all $\mathbb C^d$, and consequently, Equation~\eqref{eq:value_identity} implies~\eqref{eq:sum_operators_rank_1}.
    \end{proof}

\subsection{\texorpdfstring{Case $N=4$}{}}\label{sec:4_proj}
We now present the construction of the four projections summing up to the identity. Take $N=4$ in Equation~\eqref{eq:set_lambda_N}, and observe that the elements in $\Lambda_4$ can be expressed as
\begin{equation}
    x_k=\frac{4k}{2k+1}, \qquad\forall k\geqslant 0.
\end{equation}
From Theorem~\ref{th:mancinska}, for all $k$ it is possible to find $4$ projections in dimension $d=2k+1$ satisfying~\eqref{eq:operators_sum_identity}, with $x=x_k$. 

Let $\mathcal H=\mathbb C^d$ be a Hilbert space of dimension $d=2k+1$. We look for four orthogonal projections $\{P_0,P_1,P_2,P_3\}$ satisfying~\eqref{eq:operators_sum_identity}. It is convenient to consider the orthogonal complements $A,B,C,D$ of $P_0,P_1,P_2,P_3$ respectively, that is
\begin{align}
    A=\mathbb I_d - P_0,\quad B=\mathbb I_d - P_1,\quad C=\mathbb I_d - P_2,\quad D=\mathbb I_d - P_3.
\end{align}
These are orthogonal projections satisfying
\begin{equation}
    A+B+C+D=\frac{4k+4}{2k+1}\mathbb I_{2k+1}.
\end{equation}
One can readily infer the structure of all $P_\mu$ from $A,B,C,D$. In particular $A$, $B$, $C$ and $D$ can be expressed in a block-diagonal form, as a direct sum of $2\times 2$ projections, and they can be arranged in such a way that $A$, $B$ and $C$, $D$ share the same block structure. We precede the construction by presenting the following lemma.

\begin{lemma}\label{lemm:sum_of_projections}
    Let $X$ be a $2\times 2$ diagonal matrix of the form
    \begin{equation}
        X=\begin{pmatrix}
            1+x&0\\0&1-x
        \end{pmatrix}
    \end{equation}
    with $-1<x<1$. Then, $X$ can be written as the sum of two orthogonal projections $X=A+B$, $A=A^2=A^*$ and $B=B^2=B^*$, as
    \begin{equation}
        A=\frac{1}{2}\begin{pmatrix}
            {x_1} & {\sqrt{x_1x_2}}\\ {\sqrt{x_1x_2}}&{x_2}
        \end{pmatrix},\quad
        B=\frac{1}{2}\begin{pmatrix}
            {x_1} & -{\sqrt{x_1x_2}}\\ -{\sqrt{x_1x_2}}&{x_2} 
        \end{pmatrix},
    \end{equation}
    and $x_1=1+x$, $x_2=1-x$.
\end{lemma}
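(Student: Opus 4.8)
The plan is to verify Lemma~\ref{lemm:sum_of_projections} by direct computation, checking three things in order: that $A$ and $B$ as written are indeed orthogonal projections (self-adjoint and idempotent), and that $A+B=X$. First I would note that $A$ and $B$ are manifestly real symmetric, hence self-adjoint. For $A+B=X$, the off-diagonal entries $\tfrac12\sqrt{x_1x_2}$ and $-\tfrac12\sqrt{x_1x_2}$ cancel, while the diagonal gives $\tfrac12(x_1+x_1)=x_1=1+x$ in the top-left slot and $\tfrac12(x_2+x_2)=x_2=1-x$ in the bottom-right slot, matching $X$ exactly.

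The only genuine computation is the idempotency $A^2=A$; by symmetry (the sign of the off-diagonal squares away) the identical check gives $B^2=B$. Writing $A=\tfrac12 M$ with $M=\left(\begin{smallmatrix} x_1 & \sqrt{x_1x_2}\\ \sqrt{x_1x_2} & x_2\end{smallmatrix}\right)$, I would compute $M^2$ entrywise: the $(1,1)$ entry is $x_1^2+x_1x_2 = x_1(x_1+x_2)$, the $(2,2)$ entry is $x_1x_2+x_2^2=x_2(x_1+x_2)$, and the off-diagonal entries are $x_1\sqrt{x_1x_2}+\sqrt{x_1x_2}\,x_2=\sqrt{x_1x_2}(x_1+x_2)$. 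Since $x_1+x_2=(1+x)+(1-x)=2$, this yields $M^2=2M$, hence $A^2=\tfrac14 M^2=\tfrac12 M=A$. The condition $-1<x<1$ guarantees $x_1,x_2>0$ so that $\sqrt{x_1x_2}$ is a well-defined real number and the construction makes sense (and, as a sanity check, each of $A,B$ is rank one with trace $\tfrac12(x_1+x_2)=1$, consistent with being a nontrivial projection).

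There is no real obstacle here: the lemma is an elementary $2\times2$ identity, and the key observation that makes everything collapse is simply $x_1+x_2=2$, which turns $M^2=2M$ and hence makes $\tfrac12 M$ idempotent. If one wanted a more conceptual phrasing, one could instead observe that $\tfrac12 M$ is the orthogonal projection onto the line spanned by $(\sqrt{x_1},\sqrt{x_2})$ — since $\|(\sqrt{x_1},\sqrt{x_2})\|^2 = x_1+x_2 = 2$, the projector onto that unit direction is $\tfrac1{2}(\sqrt{x_1},\sqrt{x_2})^{\mathsf T}(\sqrt{x_1},\sqrt{x_2}) = \tfrac12 M$ — and likewise $B$ is the projection onto the span of $(\sqrt{x_1},-\sqrt{x_2})$; these two lines are orthogonal precisely because $x_1 - x_2$ need not vanish but the inner product $x_1 - x_2$... in fact the two generating vectors have inner product $x_1-x_2$, so they are not orthogonal in general, and the cleaner statement is just that $A$ and $B$ are each projections with $A+B=X$, which is all the lemma claims. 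I would therefore present the short entrywise verification above as the proof.
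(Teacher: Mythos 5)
Your verification is correct and matches the paper's treatment, which simply declares the lemma ``Straightforward''; the entrywise computation hinging on $x_1+x_2=2$ is exactly the intended argument. Your closing aside rightly self-corrects: ``orthogonal projection'' here means self-adjoint idempotent, not that $A$ and $B$ project onto mutually orthogonal subspaces, and indeed they do not since $AB\neq 0$ in general.
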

\begin{proof}
    Straightforward.
\end{proof}

To proceed with the construction of $A,B,C$ and $D$, let $\{\ket{\ell}\}_{\ell=0}^{2k}$ be the standard basis of $\mathbb C^{2k+1}$. Furthermore, define $E_{\ell,\ell'}=\ketbra{\ell}{\ell'}$, with $\ell,\ell'=0,\dots, 2k$, and $E_{\ell}=E_{\ell,\ell}$. The following decomposition holds:
\begin{align}
     \frac{4 k+4}{2k+1}\mathbb I_{2k+1}&=\sum_{\ell=0}^{2k}\frac{4 k+4}{2k+1}E_{\ell}=\sum_{\ell=0}^{k}\frac{4 k+4}{2k+1}E_{2\ell}+\sum_{\ell=0}^{k-1}\frac{4 k+4}{2k+1}E_{2\ell+1}\nonumber\\ &=\sum_{\ell=0}^{k}\left(\frac{4 \ell+2}{2k+1}E_{2\ell}+\frac{4k+2- 4\ell}{2k+1}E_{2\ell}\right)+ \sum_{\ell=0}^{k-1}\left(\frac{4k -4\ell}{2k+1}E_{2\ell+1}+\frac{4 \ell+4}{2k+1}E_{2\ell+1}\right)\nonumber\\
     &=\left(\sum_{\ell=0}^{k}\frac{4 \ell+2}{2k+1}E_{2\ell}+\sum_{\ell=0}^{k-1}\frac{4k -4\ell}{2k+1}E_{2\ell+1}\right)+\left(\sum_{\ell=0}^{k}\frac{4k+2- 4\ell}{2k+1}E_{2\ell}+\sum_{\ell=0}^{k-1}\frac{4 \ell+4}{2k+1}E_{2\ell+1}\right)\nonumber \\
     &=\sum_{\ell=0}^{k-1}\left(\frac{4 \ell+2}{2k+1}E_{2\ell}+\frac{4k -4\ell}{2k+1}E_{2\ell+1}\right)+\sum_{\ell=1}^{k}\left(\frac{4 \ell}{2k+1}E_{2\ell-1}+\frac{4k+2- 4\ell}{2k+1}E_{2\ell}\right)\nonumber\\&\quad+2E_0+2E_{2\ell}=\sum_{\ell=0}^{k-1}X_{\ell}+\sum_{\ell=1}^{k}Y_\ell + 2E_0+2E_{2k}.
\end{align}

Now, one observes that in each subspace $\ket{2\ell}\oplus \ket{2\ell+1}$, the operator $X_\ell$ has the form described in Lemma~\ref{lemm:sum_of_projections}, while in the subspace $\ket{2\ell-1}\oplus \ket{2\ell}$, the operator $Y_\ell$ does.

Define
\begin{align}
    x^{(1)}_\ell &= \frac{4 \ell+2}{2k+1},\quad x^{(2)}_\ell = \frac{4k- 4\ell}{2k+1},\\
    y^{(1)}_\ell &= \frac{4 \ell}{2k+1},\quad y^{(2)}_\ell = \frac{4k+2- 4\ell}{2k+1},
\end{align}
so that we can write
\begin{equation}
    X_\ell = A_\ell + B_\ell
\end{equation}
with
\begin{equation}\label{eq:operators_A_B}
    A_\ell =\frac{1}{2} \begin{pmatrix}x_\ell^{(1)}&\sqrt{x^{(1)}_\ell x_\ell^{(2)}}\\ \sqrt{x_\ell^{(1)}x_\ell^{(2)}} &x_\ell^{(2)}\end{pmatrix},\quad B_\ell=\frac{1}{2} \begin{pmatrix}x_\ell^{(1)}&-\sqrt{x_\ell^{(1)}x_\ell^{(2)}}\\ -\sqrt{x_\ell^{(1)}x_\ell^{(2)}} &x_\ell^{(2)}\end{pmatrix}.
\end{equation}
Similarly, we can write
\begin{equation}
    Y_\ell=C_\ell+D_\ell
\end{equation}
with
\begin{equation}\label{eq:operators_C_D}
    C_\ell =\frac{1}{2} \begin{pmatrix}y_\ell^{(1)}&\sqrt{y^{(1)}_\ell y_\ell^{(2)}}\\ \sqrt{y_\ell^{(1)}y_\ell^{(2)}} &y^{(2)}\end{pmatrix},\quad D_\ell=\frac{1}{2} \begin{pmatrix}y_\ell^{(1)}&-\sqrt{y_\ell^{(1)}y_\ell^{(2)}}\\ -\sqrt{y_\ell^{(1)}y_\ell^{(2)}} &y_\ell^{(2)}\end{pmatrix}.
\end{equation}

The previous computation shows that $A$ and $B$ can be written as a direct sum of matrices with support over $\ket{2\ell}\oplus \ket{2\ell+1}$, while $C$ and $D$ as the direct sum in the subspaces $\ket{2\ell-1}\oplus \ket{2\ell}$. Explicitly
\begin{equation}
\fboxsep=6pt
A=\begin{pmatrix}
{\fbox{\ensuremath{A_0}}} &  & &  & \\
 & {\fbox{\ensuremath{A_1}}} &  & & \\
  &  & \ddots & & \\
  &  & & \setlength{\fboxsep}{5pt}\fbox{\ensuremath{A_{k-1}}} & \\
  &  & &  &  \setlength{\fboxsep}{2pt}\fbox{\ensuremath{1}}
\end{pmatrix},\quad B=\begin{pmatrix}
{\fbox{\ensuremath{B_0}}} &  & &  & \\
 & {\fbox{\ensuremath{B_1}}} &  & & \\
  &  & \ddots & & \\
  &  & & \setlength{\fboxsep}{5pt}\fbox{\ensuremath{B_{k-1}}} & \\
  &  & &  &  \setlength{\fboxsep}{2pt}\fbox{\ensuremath{1}}
\end{pmatrix},
\end{equation}
and similarly
\begin{equation}
\fboxsep=6pt
C=\begin{pmatrix}
\setlength{\fboxsep}{2pt}\fbox{\ensuremath{1}}&  & &  &  \\
& {\fbox{\ensuremath{C_1}}} &  & &  \\
&   & \ddots &  \\
  &  & & \setlength{\fboxsep}{5pt}\fbox{\ensuremath{C_{k-1}}} & \\
  &  & &  &  {\fbox{\ensuremath{C_k}}}
\end{pmatrix},\quad D=\begin{pmatrix}
\setlength{\fboxsep}{2pt}\fbox{\ensuremath{1}}&  & &  &  \\
& {\fbox{\ensuremath{D_1}}} &  & &  \\
&   & \ddots &  \\
  &  & & \setlength{\fboxsep}{5pt}\fbox{\ensuremath{D_{k-1}}} & \\
  &  & &  &  {\fbox{\ensuremath{D_k}}}
\end{pmatrix}.
\end{equation}
We therefore have the following structure.

\begin{proposition}\label{prop:structure_of_projections}
    Let $A,B,C,D$ be orthogonal projections over $\mathbb C^{2k+1}$, satisfying
    \begin{equation}\label{eq:sum_4_proj}
        A+B+C+D=\frac{4k+4}{2k+1}\mathbb I_{2k+1}.
    \end{equation}
    Then, it is it possible to express them in a proper orthonormal basis as
    \begin{equation}\label{eq:a_b_projections}
        A = \bigoplus_{\ell=0}^{k-1} A_\ell \oplus E_{2k},\quad B = \bigoplus_{\ell=0}^{k-1} B_\ell \oplus E_{2k},
    \end{equation}
    and
    \begin{equation}\label{eq:c_d_equations}
         C = E_{0}\oplus\bigoplus_{\ell=1}^{k} C_\ell, \quad D = E_{0}\oplus\bigoplus_{\ell=1}^{k} D_\ell.
    \end{equation}
    Specifically, for $\ell=0,1,\dots, k-1$, $A_\ell$ and $B_\ell$ are given by Equation~\eqref{eq:operators_A_B}, and have support over $\ket{2\ell}\oplus \ket{2\ell+1}$, while for $\ell=0,1,\dots, k-1$, $C_\ell$ and $D_\ell$ are given by~\eqref{eq:operators_C_D} and have support over $\ket{2\ell-1}\oplus \ket{2\ell}$. Observe also that the entries of $A$ and $B$ corresponding to $\ket{2k+1}$, and those of $C$ and $D$ corresponding to $\ket{0}$, are equal to one.
\end{proposition}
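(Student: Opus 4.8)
The plan is to read Proposition~\ref{prop:structure_of_projections} as the converse (rigidity) counterpart to the explicit construction just carried out, and to obtain it by invoking the uniqueness clause of Theorem~\ref{th:mancinska} rather than re-deriving it from scratch.

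First I would pass to complements. Given orthogonal projections $A,B,C,D$ on $\mathbb C^{2k+1}$ with $A+B+C+D=\tfrac{4k+4}{2k+1}\mathbb I_{2k+1}$, set $P_0=\mathbb I-A$, $P_1=\mathbb I-B$, $P_2=\mathbb I-C$, $P_3=\mathbb I-D$, so that
\[
  P_0+P_1+P_2+P_3 = 4\,\mathbb I_{2k+1}-\tfrac{4k+4}{2k+1}\,\mathbb I_{2k+1}=\tfrac{4k}{2k+1}\,\mathbb I_{2k+1}=x_k\,\mathbb I_{2k+1},
\]
with $x_k\in\Lambda_4$. The decisive bookkeeping step is that $x_k=4k/(2k+1)$ is already in lowest terms: since $4k=2(2k+1)-2$ and $2k+1$ is odd, $\gcd(4k,2k+1)=\gcd(2,2k+1)=1$. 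Hence in the notation of Theorem~\ref{th:mancinska} the denominator is exactly $d=2k+1$, and because the ambient space $\mathbb C^{2k+1}$ already has dimension $d$, the auxiliary factor $\mathcal H'$ in the uniqueness statement is forced to be one-dimensional. Therefore there is a single unitary $U$ on $\mathbb C^{2k+1}$ conjugating $P_0,\dots,P_3$ — equivalently $A,B,C,D$ — onto any other quadruple of orthogonal projections with the same sum.

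Next I would use the explicit quadruple built above as the reference representative. The telescoping of $\tfrac{4k+4}{2k+1}\mathbb I_{2k+1}$ into $\sum_\ell X_\ell+\sum_\ell Y_\ell+2E_0+2E_{2k}$, together with Lemma~\ref{lemm:sum_of_projections}, produces the operators $A_\ell,B_\ell$ and $C_\ell,D_\ell$ of~\eqref{eq:operators_A_B} and~\eqref{eq:operators_C_D}. Here one checks the routine positivity conditions: in each relevant $2\times 2$ block the diagonal entries $x^{(1)}_\ell,x^{(2)}_\ell$ (resp.\ $y^{(1)}_\ell,y^{(2)}_\ell$) lie in $(0,2)$ and sum to $2$, so Lemma~\ref{lemm:sum_of_projections} applies and these blocks are genuine rank-one orthogonal projections. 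Assembling them as in~\eqref{eq:a_b_projections} and~\eqref{eq:c_d_equations} then yields orthogonal projections on $\mathbb C^{2k+1}$ with the asserted support pattern, whose sum is $\tfrac{4k+4}{2k+1}\mathbb I_{2k+1}$ by the telescoping identity above. Conjugating $A,B,C,D$ by the unitary $U$ of the previous paragraph brings all four simultaneously into this form, which is exactly the statement.

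The hard part is not the linear algebra but the input from Theorem~\ref{th:mancinska}: the entire force of the proposition is that \emph{every} such quadruple is unitarily equivalent to the explicit one, and this rigidity is the genuinely nontrivial ingredient (ultimately the eigenvalue-ladder analysis underlying that theorem, specialized to $N=4$). A self-contained alternative would be to observe that $[A+B,C+D]=0$ since the two operators sum to a scalar, diagonalize them jointly, and combine the classical two-projections normal form with the arithmetic of $\Lambda_4$ to pin down the block sizes — but this merely reproves the relevant case of Theorem~\ref{th:mancinska}, so invoking it is the efficient route. The remaining verifications (that the $2\times 2$ blocks are projections and that the four families telescope to the correct multiple of the identity) are mechanical and amount to the computation displayed just before the proposition.
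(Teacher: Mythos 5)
Your proposal is correct and follows essentially the same route as the paper: the paper's argument for Proposition~\ref{prop:structure_of_projections} is precisely the telescoping decomposition of $\tfrac{4k+4}{2k+1}\mathbb I_{2k+1}$ into the blocks $X_\ell$, $Y_\ell$, $2E_0$, $2E_{2k}$ combined with Lemma~\ref{lemm:sum_of_projections}, with the passage to complements and the uniqueness clause of Theorem~\ref{th:mancinska} supplying the claim for an arbitrary quadruple. Your write-up in fact makes explicit two points the paper leaves implicit — that $4k/(2k+1)$ is already in lowest terms and that this forces the auxiliary factor $\mathcal H'$ to be trivial — which is a welcome clarification rather than a deviation.
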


\section{Self-testing of Slater state}\label{sec:self_testing_Slater}

We now introduce our strategy to self-test the measurements and, under certain conditions, the Slater state. Consider $n=d$ parties, such that each party has access to a Hilbert space $\mathcal H$ of dimension $d$. Suppose that the parties share a Slater state, which can be expressed in the standard basis $\{\ket{j}\}_{j=0,\dots,d-1}$ as in Equation~\eqref{eq:slater_state_canonical_basis}.

On each Hilbert space $\mathcal H$, there exists a family of projections $\{P_\mu\}_{\mu=0,1,\dots,N-1}$ summing to $x$ times the identity as in Equation~\eqref{eq:operators_sum_identity}, with $x=b/d$ an element in $\Lambda_N$. Correspondingly, each party has $N$ quantum measurements given by:
\begin{equation}\label{eq:local_measurements}
    \{P_\mu,\mathbb I_d-P_\mu\},\quad \mu=0,1,\dots,N-1.
\end{equation}
Define $\tilde E_{\mu,0}^{(k)}=P_\mu$ and $\tilde E_{\mu,1}^{(k)}=\mathbb I_d-P_\mu$, with $k\in[d]$ and $\mu\in[N]$. The corresponding quantum correlation is
\begin{equation}\label{eq:correlation}
\tilde{p}(\boldsymbol a|\boldsymbol \mu)=\mel{\Psi_\mathrm{S}}{\tilde E_{\boldsymbol \mu,\boldsymbol a}}{\Psi_\mathrm{S}},
\end{equation}
with $\boldsymbol a\in\{0,1\}^d$, $\boldsymbol{\mu}\in[N]^d$, and $\tilde E_{\boldsymbol \mu,\boldsymbol a}=\tilde E_{\mu_0,a_0}^{(0)}\otimes \tilde E_{\mu_1,a_1}^{(1)}\otimes\dots \otimes  \tilde E_{\mu_{d-1},a_{d-1}}^{(d-1)}$.
\begin{definition}\label{def:canonical_strategy}
    We refer to the canonical strategy for generating the correlation $\tilde p(\boldsymbol{a}|\boldsymbol{\mu})$ in Equation~\eqref{eq:correlation} as the one that utilizes the Slater state $\ket{\Psi_{\mathrm S}}$ along with the local measurements $\tilde E_{\boldsymbol \mu,\boldsymbol a}$ defined in~\eqref{eq:local_measurements}.
\end{definition}

In the following Subsection, we prove that the canonical quantum strategy~\ref{def:canonical_strategy} enables self-testing of the local measurements~\eqref{eq:local_measurements} for all families of $N$ projections summing up to the identity. In Subsection~\ref{sec:self_testing_state}, we provide sufficient conditions that also ensure self-testing of the states. We show that these conditions hold for certain classes of projections whose sum is proportional to the identity, particularly in the cases $N=d+1$ in Section~\ref{sec:rank_1_example} and $N=4$ in Section~\ref{sec:4_proj}.

\subsection{Self-testing of measurements}\label{sec:self_testing_measurement}

In this Section, we prove that the strategy presented in Equation~\eqref{def:canonical_strategy} self-tests the measurements in Equation~\eqref{eq:local_measurements}. We begin by introducing a special projection on the tensor product Hilbert space $\mathcal H^{\otimes d}$, which plays a key role in self-testing, along with some properties of the Slater state. First, given an operator of the form $A=A_0\otimes A_1\otimes\dots\otimes A _{d-1}$, where each $A_i\in\mathcal B(\mathcal H)$, we define the action of the symmetrizer $\mathcal S$ on $A$ as~\cite{Procesi2007}
\begin{equation}\label{eq:symmetric_vector}
    \mathcal S(A)=\frac{1}{d!}\sum_{\sigma\in\mathbb P_d} A_{\sigma(0)}\otimes A_{\sigma(1)}\otimes \dots \otimes A_{\sigma(d-1)}.
\end{equation}
The action of the symmetrizer can be extended to all elements of $\mathcal B(\mathcal H^{\otimes d})$ by linearity.
Observe that $\mathcal S$ is a projection (idempotent) on the set of operators, and its image is the symmetric algebra~$\mathfrak A_f$, as we formally prove in Appendix~\ref{app:symmetric_algebras}. The Slater state is the only state in $\mathcal H^{\otimes d}$ that satisfies the condition
\begin{equation}\label{eq:slater_fixing_property}
    \mathcal S( A)\ket{\Psi_\mathrm{S}}=\alpha \ket{\Psi_{\mathrm S}}\quad :\quad\alpha \in \mathbb C,
\end{equation}
for all $A$ of the form $A=A_0\otimes A_1\otimes\dots\otimes A _{d-1}$. This is because $\ket{\Psi_{\mathrm S}}$ is the only pure state in the permutation algebra $\mathfrak B$, which is the commutant of the symmetric algebra $\mathfrak A_f$, the $C^*$-algebra generated by all $\mathcal S(A)$, see Appendix~\ref{app:symmetric_algebras} for the details.

Given $N$ projections $\{P_\mu\}_{\mu=0}^{N-1}$ in~\eqref{eq:operators_sum_identity}, suppose that $\rank(P_\mu)=r=b/N<d$, and consider the other projections
\begin{equation}
    R_\mu=\underbrace{P_\mu\otimes P_\mu\otimes\dots\otimes P_\mu}_r\otimes \underbrace{(\mathbb I-P_\mu)\otimes \dots\otimes (\mathbb I-P_\mu)}_{d-r},\quad \mu=0,\dots, N-1.
\end{equation}
Define the operators:
\begin{equation}\label{symmetric_projections}
    S_\mu=\frac{d!}{r!(d-r)!}\mathcal S(R_\mu),\quad \mu=0,\dots,N-1.
\end{equation}

The normalization factor guarantees that permutations leading to the same configuration in~\eqref{eq:symmetric_vector} are counted only once. It is straightforward to verify that $S_\mu$ acts as a projection, since the projections in the sum~\eqref{eq:symmetric_vector} are mutually orthogonal.

\begin{proposition}\label{prop:stabilizing_slater_state}
    The Slater state $\ket{\Psi_{\mathrm S}}$ is in the range of all the projections $S_\mu$, i.e.,
\begin{equation}\label{vino}
    S_\mu \ket{\Psi_{\mathrm S}}=\ket{\Psi_{\mathrm S}}.
\end{equation}
\end{proposition}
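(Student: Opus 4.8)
The plan is to first rewrite $S_\mu$ in a manifestly projective form and then exploit the fact that $\ket{\Psi_{\mathrm S}}$ spans the one-dimensional totally antisymmetric subspace $\Lambda^d(\mathbb C^d)$ of $\mathcal H^{\otimes d}$. \textbf{Step 1: unfolding the symmetrizer.} Write $R_\mu=B_0\otimes\dots\otimes B_{d-1}$ with $B_0=\dots=B_{r-1}=P_\mu$ and $B_r=\dots=B_{d-1}=\mathbb I-P_\mu$, and expand $\mathcal S(R_\mu)=\tfrac1{d!}\sum_{\sigma\in\mathbb P_d}B_{\sigma(0)}\otimes\dots\otimes B_{\sigma(d-1)}$. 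For fixed $\sigma$ the factor on site $j$ equals $P_\mu$ exactly for $j$ in the $r$-element set $T_\sigma=\sigma^{-1}(\{0,\dots,r-1\})$, so the $\sigma$-th term is $\Pi_{T_\sigma}$, where for $T\subseteq[d]$ I set $\Pi_T=\bigotimes_{j\in T}P_\mu\otimes\bigotimes_{j\notin T}(\mathbb I-P_\mu)$. Since each $r$-subset is realized by exactly $r!(d-r)!$ permutations, this collapses to $S_\mu=\sum_{|T|=r}\Pi_T$. The $\Pi_T$ with $|T|=r$ are pairwise orthogonal projections (for $T\neq T'$ some site carries $P_\mu$ against $\mathbb I-P_\mu$), so $S_\mu$ is precisely the orthogonal projection onto the subspace in which exactly $r$ tensor factors lie in $\operatorname{ran}P_\mu$ and the remaining $d-r$ lie in $\operatorname{ran}(\mathbb I-P_\mu)$.

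\textbf{Step 2: occupation structure of the Slater state.} Since $\mathcal H^{\otimes d}=\bigoplus_{T\subseteq[d]}\operatorname{ran}\Pi_T$ over \emph{all} subsets $T$, we get $\mathbb I-S_\mu=\sum_{|T|\neq r}\Pi_T$, so it suffices to prove $\Pi_T\ket{\Psi_{\mathrm S}}=0$ whenever $m:=|T|\neq r$. Grouping the $d$ factors into the $m$ sites of $T$ and the $d-m$ sites of its complement, total antisymmetry of $\ket{\Psi_{\mathrm S}}$ forces $\ket{\Psi_{\mathrm S}}\in\Lambda^m(\mathbb C^d)\otimes\Lambda^{d-m}(\mathbb C^d)$. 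Because $P_\mu^{\otimes m}$ commutes with the $\mathbb P_m$-action on those factors and maps into $(\operatorname{ran}P_\mu)^{\otimes m}$, it sends $\Lambda^m(\mathbb C^d)$ into $\Lambda^m(\operatorname{ran}P_\mu)$, and similarly $(\mathbb I-P_\mu)^{\otimes(d-m)}$ sends $\Lambda^{d-m}(\mathbb C^d)$ into $\Lambda^{d-m}(\operatorname{ran}(\mathbb I-P_\mu))$. As $\dim\operatorname{ran}P_\mu=r$ and $\dim\operatorname{ran}(\mathbb I-P_\mu)=d-r$, the first exterior power vanishes when $m>r$ and the second when $d-m>d-r$, i.e. when $m<r$. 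Hence $\Pi_T\ket{\Psi_{\mathrm S}}=0$ for every $m\neq r$, so $(\mathbb I-S_\mu)\ket{\Psi_{\mathrm S}}=0$, which is exactly~\eqref{vino}.

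A shorter, less self-contained alternative would invoke~\eqref{eq:slater_fixing_property} directly: $R_\mu$ being a product operator gives $\mathcal S(R_\mu)\ket{\Psi_{\mathrm S}}=\alpha_\mu\ket{\Psi_{\mathrm S}}$, hence $S_\mu\ket{\Psi_{\mathrm S}}=c_\mu\ket{\Psi_{\mathrm S}}$ for a scalar $c_\mu\in\{0,1\}$ (as $S_\mu$ is a projection); then evaluating $\mel{\Psi_{\mathrm S}}{R_\mu}{\Psi_{\mathrm S}}$ in an orthonormal basis adapted to the splitting $\operatorname{ran}P_\mu\oplus\operatorname{ran}(\mathbb I-P_\mu)$ shows that only partition-preserving permutations contribute, giving $\mel{\Psi_{\mathrm S}}{R_\mu}{\Psi_{\mathrm S}}=r!(d-r)!/d!$ and thus $c_\mu=\tfrac{d!}{r!(d-r)!}\cdot\tfrac{r!(d-r)!}{d!}=1$. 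I expect no deep obstacle: the only point requiring care is the combinatorial bookkeeping in Step~1 together with the justification that grouping the factors of a totally antisymmetric tensor and applying a repeated projection $P^{\otimes m}$ keeps it inside $\Lambda^m(\operatorname{ran}P)$ — this is where the constraint $m=r$ is forced, and hence where the statement ultimately comes from.
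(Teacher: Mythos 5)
Your proposal is correct, and your main argument (Steps 1--2) takes a genuinely different route from the paper's. The paper first invokes the eigenvector property~\eqref{eq:slater_fixing_property} (the Slater state is, up to scalars, the unique common eigenvector of all symmetrized product operators, via the symmetric/permutation-algebra duality of Appendix~\ref{app:symmetric_algebras}) to conclude $S_\mu\ket{\Psi_{\mathrm S}}=\alpha\ket{\Psi_{\mathrm S}}$ with $\alpha\in\{0,1\}$, and then rules out $\alpha=0$ by an explicit computation of $\mel{\Psi_{\mathrm S}}{R_\mu}{\Psi_{\mathrm S}}=r!(d-r)!/d!$ in a basis diagonalizing $P_\mu$ --- exactly the ``shorter alternative'' you sketch at the end. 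Your main argument instead resolves the identity into occupation sectors $\Pi_T$, recognizes $S_\mu=\sum_{|T|=r}\Pi_T$ as the projector onto the sector with exactly $r$ factors in $\operatorname{ran}P_\mu$, and kills every sector with $|T|\neq r$ by a dimension count in exterior powers ($\Lambda^m(\operatorname{ran}P_\mu)=0$ for $m>r$, and symmetrically for the complement). This is elementary and self-contained: it bypasses the appendix machinery and the basis computation, and it proves the stronger pointwise statement $\Pi_T\ket{\Psi_{\mathrm S}}=0$ for $|T|\neq r$, which is essentially the content of Corollary~\ref{cor:synconous_property} --- so your route delivers the proposition and the synchronicity corollary in one stroke. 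The only points needing the care you already flag are the implicit reordering of tensor factors when grouping the (possibly non-contiguous) sites of $T$, and the fact that $P_\mu^{\otimes m}$ commutes with the permutation action so that it maps $\Lambda^m(\mathbb C^d)$ into $\Lambda^m(\operatorname{ran}P_\mu)$; both are handled correctly.
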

\begin{remark}
    The property~\eqref{vino} implies that, if we know that the first $r$ components of $\ket{\Psi_{\mathrm S}}$ are in the subspace $P_\mu\mathcal H$, then all the residual $d-r$ components are in the orthogonal subspace $P_\mu\mathcal H^\perp$. This property is used in~\cite{Saha2024} in order to obtain a perfect strategy.
\end{remark}
\begin{proof}
From Equation~\eqref{eq:slater_fixing_property}, we infer that $S_\mu \ket{\Psi_{\mathrm S}}= \alpha\ket{\Psi_{\mathrm S}}$, with $\alpha$ either $0$ or $1$. It remains to show that $\alpha\neq1$. The operator $P_\mu$ can be written in terms of a complete relation:
\begin{equation}\label{eq:projection_expansion}
    P_\mu=\sum_{j=0}^{r-1} \ketbra{j},
\end{equation}
where $\{\ket{j}\}_{j=0,\dots,d-1}$ represents the standard basis, while the Slater state can be expressed as~\eqref{eq:slater_state_canonical_basis}, up to a phase~\eqref{eq:singlet_property}.

Observe that $S_\mu$ is the sum of positive operators. In order to check that $\alpha\neq 1$, it is sufficient to check that there is one element in the sum for which the expectation value is different from zero. For example
\begin{align}
    R_\mu\ket{\Psi_{\mathrm S}}&=P_\mu\otimes P_\mu \otimes \dots \otimes P_\mu \otimes (\mathbb I-P_\mu) \otimes \dots \otimes (\mathbb I-P_\mu) \ket{\Psi_{\mathrm S}} \nonumber\\&=\sum_{i_0=0}^{r-1}\dots\sum_{i_{r-1}=0}^{r-1}\sum_{i_{r}=r}^{d-1}\dots\sum_{i_{d-1}=r}^{d-1}\ketbra{i_0}\otimes\dots\otimes \ketbra{i_{r-1}} \otimes \ketbra{i_{r}}\otimes \dots \otimes \ketbra{i_{d-1}} \ket{\Psi_{\mathrm S}}\nonumber\\
    &=\sum_{i_1=0}^{r-1}\dots\sum_{i_{r-1}=0}^{r-1}\sum_{i_{r}=r}^{d-1}\dots\sum_{i_{d-1}=r}^{d-1}\sum_{\sigma \in \mathbb P_d} \frac{1}{\sqrt{d!}} \mathrm{sign}(\sigma) \ket{\sigma(0)\dots\sigma(d-1)}\left(\delta_{i_0,\sigma(0)}\dots\delta_{i_{d-1},\sigma(d-1)}\right).
\end{align}

A straightforward computation yields:
\begin{align}
    \mel{\Psi_{\mathrm S}}{R_\mu}{\Psi_{\mathrm S}}&=\bra{\Psi_{\mathrm S}}P_\mu \otimes \dots \otimes P_\mu \otimes (\mathbb I-P_\mu) \otimes \dots \otimes (\mathbb I-P_\mu) \ket{\Psi_{\mathrm S}}\nonumber\\ &=\sum_{i_0=0}^{r-1}\dots\sum_{i_{r-1}=0}^{r-1} \sum_{i_{r}=r}^{d-1}\dots\sum_{i_{d-1}=r-1}^{d-1}\sum_{\sigma,\sigma' \in \mathbb P_d} \Bigg[\frac{1}{{d!}} \mathrm{sign}(\sigma) \mathrm{sign}(\sigma') \left(\delta_{i_0,\sigma(0)}\dots\delta_{i_{d-1},\sigma(d-1)}\right)\nonumber\\&\hspace{180pt} \braket{\sigma'(0)\dots\sigma'(d-1)}{\sigma(0)\dots\sigma(d-1)}\Bigg]\nonumber\\
    &=\sum_{i_1=0}^{r-1}\dots\sum_{i_{r-1}=0}^{r-1}\sum_{i_{r}=r}^{d-1}\dots\sum_{i_{d-1}=r}^{d-1}\sum_{\sigma\in \mathbb P_d} \frac{1}{{d!}} \left(\delta_{i_0,\sigma(0)}\dots\delta_{i_{d-1},\sigma(d-1)}\right)=\frac{r!(d-r)!}{{d!}}
\end{align}
Hence, $\alpha\neq 0$ and thus $S_\mu\ket{\Psi_{\mathrm{S}}}=\ket{\Psi_{\mathrm{S}}}$.
\end{proof}

\medskip

The previous property implies the following corollary. 

\begin{corollary}[Synchronous correlation]\label{cor:synconous_property}
Given $A_1,A_2,\dots,A_d\in\mathcal B(\mathcal H)$,  suppose that there is an index $\mu=0,\dots, d-1$ such that either $\#\{k:A_k=P_\mu\}>r$ or $\#\{k:A_k=\mathbb I_d -P_\mu\}>d-r$, with $\#\{O\}$ being the cardinality of $O$. Then:
\begin{equation}
    A_1\otimes A_2\otimes \dots \otimes A_d \ket{\Psi_{\mathrm{S}}}=0.
\end{equation}
\end{corollary}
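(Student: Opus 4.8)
The plan is to derive this from Proposition~\ref{prop:stabilizing_slater_state} by a pigeonhole-type argument combined with the orthogonality structure already extracted in its proof. First I would reduce to the case where the tensor factors $A_k$ are drawn from the set $\{P_\mu,\mathbb I_d-P_\mu\}$ for a single fixed $\mu$, which is exactly the hypothesis. The key observation is that $S_\mu$, defined in~\eqref{symmetric_projections} as the normalized symmetrization of $R_\mu = P_\mu^{\otimes r}\otimes(\mathbb I-P_\mu)^{\otimes(d-r)}$, is a sum of \emph{mutually orthogonal} rank-structured projections, one for each distinct placement of exactly $r$ copies of $P_\mu$ among the $d$ slots (with $\mathbb I-P_\mu$ in the complementary $d-r$ slots). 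I would make this explicit: writing $T_{\mathcal K} = \bigotimes_{k} Q_k^{(\mathcal K)}$ where $Q_k^{(\mathcal K)} = P_\mu$ if $k\in\mathcal K$ and $\mathbb I-P_\mu$ otherwise, for $\mathcal K\subseteq[d]$ with $|\mathcal K|=r$, one has $S_\mu = \sum_{|\mathcal K|=r} T_{\mathcal K}$ and $T_{\mathcal K}T_{\mathcal K'} = 0$ whenever $\mathcal K\neq\mathcal K'$ (since then some slot carries $P_\mu$ in one and $\mathbb I-P_\mu = P_\mu^\perp$ in the other, and $P_\mu(\mathbb I-P_\mu)=0$).

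Next I would use Proposition~\ref{prop:stabilizing_slater_state}, which gives $S_\mu\ket{\Psi_{\mathrm S}} = \ket{\Psi_{\mathrm S}}$, hence $\sum_{|\mathcal K|=r} T_{\mathcal K}\ket{\Psi_{\mathrm S}} = \ket{\Psi_{\mathrm S}}$. Because the $T_{\mathcal K}$ are mutually orthogonal projections, the vectors $T_{\mathcal K}\ket{\Psi_{\mathrm S}}$ are pairwise orthogonal and sum to $\ket{\Psi_{\mathrm S}}$; this is an orthogonal decomposition of the unit vector, so $\sum_{\mathcal K}\norm{T_{\mathcal K}\ket{\Psi_{\mathrm S}}}^2 = 1$. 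Now consider the operator $A := A_1\otimes\cdots\otimes A_d$ from the hypothesis. Suppose $\#\{k : A_k = P_\mu\} > r$. I claim $A T_{\mathcal K} = 0$ for every $\mathcal K$ with $|\mathcal K|=r$: since $A$ has more than $r$ slots equal to $P_\mu$ but $T_{\mathcal K}$ has exactly $r$ slots equal to $P_\mu$, there is a slot $k$ with $A_k = P_\mu$ and $Q_k^{(\mathcal K)} = \mathbb I-P_\mu$, and $P_\mu(\mathbb I-P_\mu) = 0$. Symmetrically, if $\#\{k : A_k = \mathbb I_d - P_\mu\} > d-r$, then every $T_{\mathcal K}$ has only $d-r$ slots equal to $\mathbb I-P_\mu$, so again some slot forces the product to vanish. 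Therefore $A\ket{\Psi_{\mathrm S}} = A S_\mu\ket{\Psi_{\mathrm S}} = \sum_{\mathcal K} A T_{\mathcal K}\ket{\Psi_{\mathrm S}} = 0$.

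The step that needs the most care is the orthogonality claim $T_{\mathcal K}T_{\mathcal K'}=0$ together with the identification $S_\mu = \sum_{|\mathcal K|=r} T_{\mathcal K}$ with the correct combinatorial normalization; this is implicit in the remark after~\eqref{symmetric_projections} (``permutations leading to the same configuration are counted only once''), but I would want to state it cleanly, since the whole argument rests on having a genuine sum of orthogonal projections rather than a weighted sum. One subtlety: the corollary's hypothesis is stated for $\mu\in\{0,\dots,d-1\}$, implicitly assuming $N\geqslant d$ or at least that $\mu$ indexes an available projection; but the argument only uses the existence of the single family $\{P_\mu\}$ and the stabilizer identity $S_\mu\ket{\Psi_{\mathrm S}}=\ket{\Psi_{\mathrm S}}$, so it applies verbatim for any $\mu$ for which $S_\mu$ is defined, i.e.\ whenever $\rank(P_\mu)=r<d$. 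I expect no genuine obstacle beyond bookkeeping; the corollary is essentially the contrapositive of Proposition~\ref{prop:stabilizing_slater_state} read through the orthogonal-decomposition lens, and it encodes the ``synchronicity'' that makes the measurement self-testing in Section~\ref{sec:self_testing_measurement} go through.
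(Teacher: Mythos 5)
Your proof is correct and follows essentially the same route as the paper's: the paper also observes that $A_1\otimes\cdots\otimes A_d\,S_\mu=0$ because every term in the symmetrized sum carries exactly $r$ copies of $P_\mu$ and $d-r$ copies of $\mathbb I_d-P_\mu$, and then invokes $S_\mu\ket{\Psi_{\mathrm S}}=\ket{\Psi_{\mathrm S}}$ from Proposition~\ref{prop:stabilizing_slater_state}. The only minor remark is that the mutual orthogonality of the $T_{\mathcal K}$ is not actually needed for this corollary (only that each summand annihilates $A$, so even a weighted sum would do); it matters for $S_\mu$ being a projection, which you take as given.
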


\begin{proof}
    It follows directly from the Proposition~\ref{prop:stabilizing_slater_state}, and the observation that
    \begin{equation}
        A_1\otimes A_2\otimes \dots \otimes A_d\, S_\mu=0,
    \end{equation}
    as in all terms appearing in the sum~\eqref{eq:symmetric_vector} there are exactly $r$  copies of $P_\mu$ and $d-r$ copies $\mathbb I_d - P_\mu$. Since $P_\mu(\mathbb I_d - P_\mu)=0$, the claim follows.
\end{proof}

\begin{remark}
    We called the previous corollary synchronous correlation as it it implies a synchronous property for the quantum correlation $\tilde p(\boldsymbol{a}|\boldsymbol{\mu})$. A bipartite correlation $p(a,a'|\mu,\mu')$ is synchronous if for all $\mu$ and $a\neq a'$ we have that $p(a,a'|\mu,\mu)$. In the multipartite case, the synchronous property~\ref{cor:synconous_property} implies that
        $\tilde p(\boldsymbol{a}|\boldsymbol{\mu})=0$
    with $\boldsymbol{\mu}=(\mu,\mu,\dots,\mu)$ and $\boldsymbol{a}$ satisfying $\sum_{k=0}^{d-1} a_k\neq n-r$. Similarly to the bipartite case~\cite{Paulsen2016,Zhao2024}, the synchronous property carries significant implications that are foundational to the self-testing statement from this work. 
\end{remark}

From now on, we introduce the notation $\overline{\boldsymbol{\mu}}=(\mu,\mu,\dots,\mu).$
Using Theorem~\ref{th:mancinska} and Proposition~\ref {prop:stabilizing_slater_state}, we can proceed with the proof of self-testing for the measurement.

\begin{theorem}\label{th:self_testing_measurement}
    The quantum correlation $\tilde p(\boldsymbol a|\boldsymbol \mu)$ in Equation~\eqref{eq:correlation} obtained from the canonical strategy~\ref{def:canonical_strategy} self-tests the measurements~\eqref{eq:local_measurements}.
\end{theorem}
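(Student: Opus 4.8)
The plan is to use the observed correlations $\tilde p(\boldsymbol a|\boldsymbol\mu)$ to reconstruct, for each party $k$, a family of $N$ operators that are (i) orthogonal projections and (ii) sum to $x\mathbb I$, and then invoke Theorem~\ref{th:mancinska} to pin them down up to isometry. The natural candidate for the reconstructed projection on party $k$ is $\hat P_\mu^{(k)} := E_{\mu,0}^{(k)}$ restricted to the support of the reduced state on $\mathcal H_k$. The key input is the synchronous structure established in Corollary~\ref{cor:synconous_property} and Proposition~\ref{prop:stabilizing_slater_state}: using that $S_\mu\ket{\Psi_{\mathrm S}}=\ket{\Psi_{\mathrm S}}$, together with the permutation symmetry of $\ket{\Psi_{\mathrm S}}$ (Equation~\eqref{eq:slater_fixing_property}), one translates identities among the $\tilde E_{\boldsymbol\mu,\boldsymbol a}$ acting on $\ket{\Psi_{\mathrm S}}$ into operator identities on a single tensor factor. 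Concretely, I would first show from the correlation data that $\big(E_{\mu,0}^{(k)}\big)^2$ and $E_{\mu,0}^{(k)}$ agree on the relevant subspace (idempotency), and that $\sum_\mu E_{\mu,0}^{(k)} = x\mathbb I$ there, by comparing marginals $\tilde p$ obtained by summing over the outcomes of all parties except $k$.

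The first concrete step is to derive the single-party relations. For idempotency: consider the input string $\overline{\boldsymbol\mu}=(\mu,\dots,\mu)$ and compare $\tilde p(\boldsymbol a|\overline{\boldsymbol\mu})$ for outcome patterns $\boldsymbol a$ with exactly $r$ zeros versus patterns that are impossible by Corollary~\ref{cor:synconous_property}; the surviving terms, combined with the permutation invariance of $\ket{\Psi_{\mathrm S}}$, force $E_{\mu,0}^{(k)}E_{\mu,0}^{(k)'}\otimes(\cdots)\ket{\Psi_{\mathrm S}}$ relations that, after tracing out, give $(E_{\mu,0}^{(k)})^2\ket\phi = E_{\mu,0}^{(k)}\ket\phi$ for every $\ket\phi$ in the local support — this is the standard trick where the synchronous condition upgrades a POVM element to a projection. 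For the sum-to-$x\mathbb I$ relation: one uses that $\sum_{\mu=0}^{N-1}\tilde E_{\mu,0}^{(k)}$ should equal $x\mathbb I$ on the support; here I would exploit a global identity coming from $\sum_\mu P_\mu = x\mathbb I$ on \emph{every} factor simultaneously, detected by summing the correlations over suitable outcome patterns, and again invoke the fact that $\ket{\Psi_{\mathrm S}}$ has full local support so the local reduced state is faithful (this faithfulness is what lets us pass from "$X\ket{\Psi_{\mathrm S}}=Y\ket{\Psi_{\mathrm S}}$ on a factored operator" to "$X=Y$ on the support").

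Once we know that on party $k$ the operators $\{E_{\mu,0}^{(k)}\}_{\mu=0}^{N-1}$, restricted to $\mathcal H_k$'s support (which by the standing assumption we may take to be all of $\mathcal H_k$), are orthogonal projections summing to $x\mathbb I_{\mathcal H_k}$, Theorem~\ref{th:mancinska} applies verbatim: there is a unitary identifying $\mathcal H_k \cong \mathbb C^d\otimes\mathcal I_k$ under which $E_{\mu,0}^{(k)} = \tilde E_{\mu,0}^{(k)}\otimes\mathbb I_{\mathcal I_k}$ (and correspondingly for the "1" outcome, since $E_{\mu,1}^{(k)}=\mathbb I - E_{\mu,0}^{(k)}$). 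Setting $V_k$ to be this unitary (composed with the inclusion), we obtain Equation~\eqref{eq:self-testing_measurement}, which is exactly the self-testing-of-measurement statement. I would close by remarking that no assumption on the ranks $r$ of the $P_\mu$ is needed here — the argument only uses that the $P_\mu$ are projections summing to a scalar — so it covers all the constructions of Section~\ref{sec:explicit_construction} uniformly.

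The main obstacle I anticipate is the passage from correlation identities to the single-factor operator identities: one must carefully choose which outcome patterns $\boldsymbol a$ to sum over so that the permutation sum in $\ket{\Psi_{\mathrm S}}$ collapses to isolate a single tensor slot, and verify that the coefficients (the $r!(d-r)!/d!$-type factors appearing in the proof of Proposition~\ref{prop:stabilizing_slater_state}) are nonzero so the relations are not vacuous. A secondary subtlety is justifying the reduction to faithful local states — i.e., that restricting to the support of the reduced density matrix of $\ket{\Psi_{\mathrm S}}$ loses nothing — but this is handled by the dilation/support remarks already invoked after Equation~\eqref{eq:self-testing_measurement}.
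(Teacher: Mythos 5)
Your plan follows the paper's proof essentially step for step: the matching correlations yield the synchronous property for the unknown realization, which is then used (together with the assumed full support of the local reduced states) to show that each $E_{\mu,0}^{(k)}$ is a projection and that $\sum_{\mu}E_{\mu,0}^{(k)}=x\,\mathbb I_{\mathcal H_k}$, after which Theorem~\ref{th:mancinska} gives the isometries. The only point you leave implicit is that for the sum condition the first moment $\tr(\rho_k E^{(k)})=x$ alone does not suffice even for a faithful $\rho_k$: the paper also computes the second moment $\tr(\rho_k (E^{(k)})^2)=x^2$ by the same synchronous swap trick and concludes via Cauchy--Schwarz, which is precisely the ``passage from correlation identities to operator identities'' you correctly flag as the main obstacle.
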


\begin{proof}
Consider a system of $d$ parties, each with a Hilbert space $\mathcal H_0,\mathcal H_1,\dots\mathcal H_{d-1}$. Suppose that they share a state
\begin{equation}\label{eq:generic_state}
    \ket{\Psi}\in \bigotimes_{k=0}^{d-1} \mathcal H_{k}
\end{equation}
and consider $N$ POVM elements in each subspace ${\mathcal H_{k}}$:
\begin{equation}\label{eq:generic_measurement}
    \{E_{\mu,0}^{(k)},E_{\mu,1}^{(k)}\}\subset\mathcal B(\mathcal H_k),\quad   \mu\in[N],\quad  k\in[d]
\end{equation}
with $E_{\mu,0}^{(k)}+E_{\mu,1}^{(k)}=\mathbb I_{\mathcal H_{k}}$, and $\mathbb I_{\mathcal H_{k}}$ being the identity on $\mathcal H_k$.
Suppose that these POVM and the state $\ket{\Psi}$ generate the same correlations $\tilde p$ as in Definition~\ref{def:canonical_strategy} and Equation~\eqref{eq:correlation}, that is
\begin{equation}\label{eq:correlation_test}
    \tilde{p}(\boldsymbol{a}|\boldsymbol{\mu})=\mel{\Psi}{E_{\boldsymbol\mu,\boldsymbol{a}}}{\Psi},
\end{equation}
with $\boldsymbol{a}\in\{0,1\}^d$, $\boldsymbol\mu\in[d]^n$, and
$E_{\boldsymbol\mu,\boldsymbol{a}}=E_{\mu_0,a_0}^{(k)}\otimes E_{\mu_1,a_{1}}^{(k)}\otimes\dots\otimes E_{\mu_{d-1},a_{d-1}}^{(d-1)}$.

The proof is split into three parts. First, in Part I, we prove that a synchronous property~\ref{cor:synconous_property} holds for $\ket{\Psi}$. In Part II, we show that all operators $E_{\mu,a}^{(k)}$ are projectors for all $\mu$, $k$ and $a$. This is a consequence of the synchronous property described in Part I. Finally, in Part III, we conclude that the sum of the projections $E_{\mu,0}^{(k)}$ is proportional to the identity multiplied by $x$. All of this supported by the Theorem~\ref{th:mancinska} proves the claim.

\bigskip

\textit{Part I}

Without loss of generality, we assume that, if $\rho_k$ is the reduced density matrix of $\ketbra{\Psi}$ to the space of $\mathcal H_k$, and $\rho_k$ has maximal support.

Following~\ref{cor:synconous_property}, we aim to prove that taking $A_0\in\mathcal B(\mathcal H_{0}),\dots,A_{d-1}\in\mathcal B(\mathcal H_{d-1})$, and assuming that for a fixed $\mu$ either $\#\{k:A_k=E^{(k)}_{\mu,0}\}>r$ or $\#\{k:A_k=E_{\mu,1}^{(k)}\}>d-r$, the following equality holds:
\begin{equation}\label{eq:terms_equal_zero}
    A_1\otimes A_2 \otimes \dots \otimes A_d \ket{\Psi}=0.
\end{equation}
Indeed, suppose $A_k=M_\mu^{(k)}$ for $k=\in[r]$, and $A_k=\mathbb I_{\mathcal H_k}$ for $k\in[r,n]$. Then, setting $\overline{\boldsymbol{\mu}}=(\mu,\mu,\dots\mu)$, we have
\begin{align}
    0=&\sum_{\boldsymbol a\in\{0\}^r\times\{0,1\}^{n-r}}\tilde p(\boldsymbol a|\overline{\boldsymbol{\mu}})=\sum_{\boldsymbol a\in\{0\}^r\times\{0,1\}^{n-r}}\mel{\Psi_{\mathrm{S}}}{\tilde E_{\overline{\boldsymbol{\mu}},\boldsymbol a}}{\Psi_{\mathrm{S}}}=\sum_{\boldsymbol a\in\{0\}^r\times\{0,1\}^{n-r}}\mel{\Psi}{E_{\overline{\boldsymbol{\mu}},\boldsymbol a}}{\Psi}\nonumber\\ &=\mel{\Psi}{\bigotimes_{k\in[r+1]}{E_{\overline{\boldsymbol{\mu}},0}^{(k)}\otimes \bigotimes_{k\in[r+1,d]} \mathbb I_{\mathcal H_{k}}}}{\Psi}=\Bigg\Vert{\bigotimes_{k\in[r+1]}{E_{\overline{\boldsymbol{\mu}},0}^{(k)\frac{1}{2}}\otimes \bigotimes_{k\in[r+1,d]} \mathbb I_{\mathcal H_{k}}}\ket{\Psi}}\Bigg\Vert^2,
\end{align}
which implies that
\begin{equation}
    E_{\mu,0}^{(0)}\otimes E_{\mu,0}^{(1)}\otimes \dots \otimes E_{\mu,0}^{(r)}\otimes \mathbb I_{\mathcal H_{r+1}}\otimes\dots \otimes \mathbb I_{\mathcal H_{d-1}}\ket{\Psi}=0.
\end{equation}

\bigskip

\textit{Part II}

We now show that all the operators $E_{\mu,0}^{(k)}$ are projections. We proceed with $k=0$, and begin with the following observation, which is a consequence of Equation~\eqref{eq:terms_equal_zero}. Consider
\begin{align}\label{tintodeverano}
E_{\mu,0}^{(0)}\otimes \left(\bigotimes_{k\in[1,d]}\mathbb I_{\mathcal {H}_{k}}\right)\ket{\Psi}&=E_{\mu,0}^{(0)}\otimes\bigotimes_{k\in[1,d]} (E_{\mu,0}^{(k)}+E_{\mu,1}^{(k)}) \ket{\Psi}=
\sum_{\boldsymbol a\in\{0\}\times \{0,1\}^{d-1}}E_{\overline{\boldsymbol{\mu}},\boldsymbol a}\ket{\Psi} \nonumber\\
&=\sum_{\substack{\boldsymbol a\in\{0\}\times \{0,1\}^{d-1}\\\sum_{k}a_k=d-r}}E_{\overline{\boldsymbol{\mu}},\boldsymbol a}\ket{\Psi}=\mathbb I_{\mathcal H_0}\otimes \sum_{\substack{\boldsymbol a\in\times \{0,1\}^{d-1}\\\sum_{k}a_k=d-r}}E_{\overline{\boldsymbol{\mu}},\boldsymbol a}^{[1,d]}\ket{\Psi}.
\end{align}
In the previous equation, we introduced, for $m>n$:
\begin{equation}
E_{\boldsymbol \mu,\boldsymbol a}^{[n,m]}:=E_{\mu_0,a_0}^{(n)}\otimes E_{\mu_1,a_1}^{(n+1)}\otimes \dots \otimes E_{\mu_{m-n-1},a_{m-n-1}}^{(m-1)}
\end{equation}
If the sum in the first line was expanded, the only non-zero terms would have $r$ elements equal to $E_{\mu,0}^{(k)}$ and $d-r$ elements equal to $E_{\mu,1}^{(k)}$. Furthermore, in the last line, we can exchange $E_{\mu,0}^{(0)}$ with the identity $\mathbb I_{\mathcal H_0}$, as no contribution would come from $E_{\mu,1}^{(0)}$.
Using the condition~\eqref{tintodeverano}, we can explicitly show that the operator $E_{\mu,0}^{(0)}$ is a projection. Take $\ket{\varphi}\in\mathcal H_{0}$, and we claim that $(E_{\mu,0}^{(0)})^2\ket{\varphi}=E_{\mu,0}^{(0)}\ket{\varphi}$. Explicitly
\begin{align}\label{eq:passage_of_M}
    (E_{\mu,0}^{(0)})^2\ket{\varphi}&=(E_{\mu,0}^{(0)})^2\rho_0\rho_0^{-1}\ket{\varphi}=(E_{\mu,0}^{(0)})^2\tr_{[1,d]}({\ketbra{\Psi}})\rho_0^{-1}\ket{\varphi}\nonumber\\
    &=E_{\mu,0}^{(0)}\tr_{[1,d]}(E_{\mu,0}^{(0)}\otimes \mathbb I_{\mathcal H_{1}}\otimes \dots\otimes\mathbb I_{\mathcal H_{d-1}}{\ketbra{\Psi}})\rho_1^{-1}\ket{\varphi}\nonumber\\
    &=\sum_{\substack{\boldsymbol a\in\times \{0,1\}^{d-1}\\\sum_{k}a_k=d-r}}E_{\mu,0}^{(0)}\tr_{[1,d]}(\mathbb I_{\mathcal H_0}\otimes E_{\overline{\boldsymbol{\mu}},\boldsymbol a}^{[1,d]}{\ketbra{\Psi}})\rho_1^{-1}\ket{\varphi}\nonumber\\
    &=\sum_{\substack{\boldsymbol a\in\times \{0,1\}^{d-1}\\\sum_{k}a_k=d-r}}\tr_{[1,d]}(E_{\mu,0}^{(0)}\otimes E_{\overline{\boldsymbol{\mu}},\boldsymbol a}^{[1,d]}{\ketbra{\Psi}})\rho_1^{-1}\ket{\varphi}\nonumber\\
    &=\tr_{[1,d]}(E_{\mu,0}^{(0)}\otimes \mathbb I_{\mathcal H_{1}}\otimes \dots\otimes\mathbb I_{\mathcal H_{d-1}}{\ketbra{\Psi}})\rho_1^{-1}\ket{\varphi}=E_{\mu,0}^{(0)}\ket{\varphi}.
\end{align}

In the last equality we introduced the partial trace
\begin{equation}
    \tr_{[n,m]}(A):=\tr_{\mathcal H_n,\mathcal H_{n+1},\dots,\mathcal H_{m-1}}(A).
\end{equation}
Notice that the assumption on $\rho_0$ having full support is necessary in order to define $\rho_0^{-1}$ in the first line. 
We can repeat the same arguments for all parties so that 
\begin{equation}
    E_{\mu,0}^{(k)}=(E_{\mu,0}^{(k)})^2=(E_{\mu,0}^{(k)})^*,\quad E_{\mu,1}^{(k)}=(E_{\mu,1}^{(k)})^2=(E_{\mu,1}^{(k)})^*.
\end{equation}
hold for all $k=0,\dots,d-1$ and $\mu=0,\dots,N-1$.
Consequently, all the $\{E_{\mu,0}^{(k)},E_{\mu,1}^{(k)}\}$ are projective measurement.

\bigskip

\textit{Part III}

It remains to show that the projections $E_{\mu,0}^{(k)}$ sum up to the identity. We mimic the proof from~\cite{Mancinska2024}. For $k=0,\dots, N-1$, define $E^{(k)}=\sum_{\mu=0}^{N-1} E_{\mu,0}^{(k)}\in\mathcal B(\mathcal H_k)$. We claim  that $E^{(k)}=x\mathbb I_{\mathcal H_{k}}$. For $k=0$ one has
\begin{equation}
    \mathrm{tr}(\rho_0 E^{(0)})=\mel{\Psi}{E^{(0)}\otimes \mathbb I_{\mathcal H_1}\otimes \dots \otimes \mathbb I_{\mathcal H_{d-1}}}{\Psi}=x.
\end{equation}
Similarly, using~\eqref{eq:passage_of_M},
\begin{align}
    \mathrm{tr}{(\rho_0 (E^{(0)})^2)}&=\mel{\Psi}{(E^{(0)})^2\otimes \mathbb I_{\mathcal H_{1}}\otimes \dots \otimes \mathbb I_{\mathcal H_{d-1}}}{\Psi}=\sum_{\mu\nu=0}^{N-1} \mel{\Psi}{E_{\nu,0}^{(0)}E_{\mu,0}^{(0)}\otimes \mathbb I_{\mathcal H_1}\otimes \dots \otimes \mathbb I_{\mathcal H_{d-1}}}{\Psi}\nonumber\\
    &=\sum_{\mu\nu=0}^{N-1}\mel{\Psi}{E_{\nu,0}^{(0)}\otimes \sum_{\substack{\boldsymbol a\in\times \{0,1\}^{d-1}\\\sum_{k}a_k=d-r}} E_{\overline{\boldsymbol{\mu}},\boldsymbol a}^{[1,d]}}{\Psi}=\sum_{\mu\nu=0}^{N-1}\mel{\Psi_{\mathrm S}}{\tilde E_{\nu,0}^{(0)}\otimes \sum_{\substack{\boldsymbol a\in\times \{0,1\}^{d-1}\\\sum_{k}a_k=d-r}} \tilde E_{\overline{\boldsymbol{\mu}},\boldsymbol a}^{[1,d]}}{\Psi_{\mathrm S}}\nonumber\\
    &=\sum_{\mu\nu=0}^{N-1}\mel{\Psi_{\mathrm S}}{\tilde E_{\nu,0}^{(0)}\tilde E_{\mu,0}^{(0)}\otimes \mathbb I_{d}\otimes \dots \otimes \mathbb I_{d}}{\Psi_{\mathrm S}}=x^2.
\end{align}
Observe that, in the last line, since the correlation generated by $E_{\overline{\boldsymbol{\mu}},\boldsymbol a}$ and the state $\ket{\Psi}$ is the same $\tilde p(\boldsymbol{a}|\overline{\boldsymbol{\mu}})$ as the one generated by $\tilde E_{\overline{\boldsymbol{\mu}},\boldsymbol a}$ and $\ket{\Psi_\mathrm S}$.
In the last line, we utilized the fact that we are obtaining a sum that is linear in the quantum correlation $\tilde p$, thereby yielding the same result we would obtain for the Slater state. But then
\begin{equation}
    \mathrm{tr}(\rho_0 E^{(0)})^2= \mathrm{tr}(\rho_0 E^{(0)2})=x^2.
\end{equation}
The Cauchy-Schwarz inequality implies that 
\begin{equation}
    E^{(0)}=\sum_{\mu} E_{\mu,0}^{(0)}=x\mathbb I_{\mathcal H_0}.
\end{equation}
The same proof can be repeated for all $k$. From Theorem~\ref{th:mancinska} each Hilbert space $\mathcal H_{k}$ can be written as a tensor product
\begin{equation}\label{eq:tensor_product_local}
    \mathcal H_{k}=\mathcal H\otimes \mathcal H_{k}',
\end{equation}
with $\mathcal H=\mathbb C^d$, and each $E_{\mu,0}^{(i)}$ decomposes as
\begin{equation}\label{eq:isometry_measurements}
    E_{\mu,0}^{(k)}=P_\mu \otimes \mathbb I_{\mathcal H_{k}'},
\end{equation}
which ends the proof.
\end{proof}

\subsection{Self-testing of the state}\label{sec:self_testing_state}

We continue to prove the self-testing statement for the state itself using the projections $S_\mu$ defined in~\eqref{symmetric_projections}. We propose the following conjecture, which is a stronger version of proposition~\ref{prop:stabilizing_slater_state}.
\begin{conjecture}\label{prop:conjecture}
    The Slater state $\ket{\Psi_{\mathrm S}}$ is the only state in the ranges of $S_\mu$.
\end{conjecture}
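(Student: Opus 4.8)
The containment $\mathbb C\ket{\Psi_{\mathrm S}}\subseteq\bigcap_{\mu}\operatorname{Range}(S_\mu)$ is exactly Proposition~\ref{prop:stabilizing_slater_state}, so what remains is the reverse inclusion: the common fixed space of the $S_\mu$ is at most one-dimensional. The plan is to recast this as a statement about a Lie subalgebra attached to the family $\{P_\mu\}$, and then to settle that statement for the explicit families of Section~\ref{sec:explicit_construction}.

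The first step is to identify $S_\mu$ as a spectral projection. Let $\pi(X)=\sum_{k=0}^{d-1}\mathbb I_d^{\otimes k}\otimes X\otimes\mathbb I_d^{\otimes(d-1-k)}$ denote the diagonal action of $\mathfrak{gl}_d$ on $\mathcal H^{\otimes d}$. Then $\pi(P_\mu)$ counts how many tensor legs of a vector lie in $P_\mu\mathcal H$, and its eigenvalue-$r$ eigenspace (with $r=\rank P_\mu$) is exactly $\bigoplus_{|T|=r}\big(\bigotimes_{k\in T}P_\mu\mathcal H\big)\otimes\big(\bigotimes_{k\notin T}P_\mu^{\perp}\mathcal H\big)=\operatorname{Range}(S_\mu)$. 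Hence $\ket\phi\in\bigcap_\mu\operatorname{Range}(S_\mu)$ if and only if $\pi(\hat X_\mu)\ket\phi=0$ for every $\mu$, where $\hat X_\mu:=P_\mu-(r/d)\mathbb I_d\in\mathfrak{sl}_d$. Because $\pi$ is a Lie-algebra homomorphism, the annihilator $\{X\in\mathfrak{gl}_d:\pi(X)\ket\phi=0\}$ is a Lie subalgebra and therefore contains the Lie subalgebra $\mathfrak g$ generated by $\{\hat X_\mu\}_\mu$; thus every such $\ket\phi$ is a diagonal $\mathfrak g$-invariant of $\mathcal H^{\otimes d}$. If one shows $\mathfrak g=\mathfrak{sl}_d$, the proof is complete: $\ket\phi$ is then $\mathfrak{sl}_d$-invariant, and by Schur--Weyl duality $(\mathbb C^d)^{\otimes d}$ has a one-dimensional space of $\mathfrak{sl}_d$-invariants --- only the column partition $(1^d)$ carries the trivial $\mathrm{SL}_d$-type, with multiplicity one --- spanned by $\ket{\Psi_{\mathrm S}}$.

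The crux is therefore to prove $\mathfrak g=\mathfrak{sl}_d$. By Theorem~\ref{th:mancinska} the $P_\mu$ generate $\mathcal M_d$ as an associative algebra, so $\mathfrak g$ acts irreducibly on $\mathbb C^d$; it is thus the image of a semisimple Lie algebra in an irreducible representation, and the admissible subalgebras of $\mathfrak{sl}_d$ are classified. The form-preserving subalgebras $\mathfrak{so}(g),\mathfrak{sp}(g)$ I would exclude directly: an inclusion $\mathfrak g\subseteq\mathfrak{so}(g)$ (or $\mathfrak{sp}(g)$) produces an invertible $g$ with $g\hat X_\mu=-\hat X_\mu^{\top}g$, and since the constructions are real one has $\hat X_\mu^{\top}=\hat X_\mu$; combining this with $P_\mu^2=P_\mu$ one finds, whenever $r\neq d/2$, that $g$ commutes with every $P_\mu$, hence $g\propto\mathbb I_d$, contradicting $g\hat X_\mu=-\hat X_\mu g$. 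Both relevant families satisfy $r\neq d/2$ (for $N=4$ one has $d=2k+1$, $r=k$; for $N=d+1$ one has $r=1$, $d\geqslant 3$). The surviving decomposable and exceptional embeddings would force the $P_\mu$, and through $\sum_\mu P_\mu=x\mathbb I_d$ also their ranks and relative positions, into a rigid tensor-product or Cartan-product pattern, and I would rule these out case by case from the explicit descriptions of Section~\ref{sec:explicit_construction}: for $N=d+1$ the near-orthogonality $\braket{\psi_\mu}{\psi_\nu}=-1/d$ makes the commutators $[\hat X_\mu,\hat X_\nu]$ explicit and shows they generate $\mathfrak{sl}_d$, while for $N=4$ the two interlocking $2\times 2$-block patterns of Proposition~\ref{prop:structure_of_projections} render $\mathfrak g$ directly computable. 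These two verifications are exactly what Section~\ref{Section:Algebraic_approach} supplies.

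The hard part is this last step in full generality: one must exclude \emph{every} proper irreducible Lie subalgebra of $\mathfrak{sl}_d$ for \emph{every} admissible pair $(N,d)$, with $d$ a denominator of an element of $\Lambda_N$. The form-preserving cases fall to the idempotency argument above, but I see no uniform way to dispatch the decomposable and exceptional embeddings without invoking the concrete rigidity of Theorem~\ref{th:mancinska} --- which is precisely why the statement is recorded here as a conjecture and established only in the two configurations $N=d+1$ and $N=4$.
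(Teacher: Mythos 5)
Your proposal is correct and follows essentially the paper's own route: the reduction of the conjecture to the statement that the $\hat X_\mu$ Lie-generate $\mathfrak{sl}_d$ is precisely the Lie-algebraic form of Conjecture~\ref{prop:conjecture_3} together with the proposition deriving Conjecture~\ref{prop:conjecture} from it (your $\pi(\hat X_\mu)$ is the paper's $T_\mu-r\mathbb I$, your $\mathfrak g$ plays the role of $\mathscr{LW}$, and your one-dimensional space of $\mathfrak{sl}_d$-invariants under Schur--Weyl is the paper's statement that the Slater projector is the unique rank-one projection in the permutation algebra $\mathfrak B$). Like the paper, you establish the generation statement only for $N=d+1$ and $N=4$ via the explicit computations of Section~\ref{Section:Algebraic_approach} and correctly leave the general case open; your exclusion of the form-preserving subalgebras via the anticommutation identity is a partial extra step not present in the paper, but it does not close the conjecture.
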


With this conjecture, we are now equipped to establish the full self-testing result.

\begin{theorem}\label{th:self_testing_state}
    If Conjecture~\ref{prop:conjecture} holds, then the quantum correlation $\tilde p(\boldsymbol a|\boldsymbol \mu)$ in Equation~\eqref{eq:correlation} defined via the quantum strategy~\ref{def:canonical_strategy} self-tests the Slater state~\eqref{eq:slater_state_canonical_basis}.
\end{theorem}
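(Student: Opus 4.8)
The plan is to leverage Theorem~\ref{th:self_testing_measurement}, which already tells us that each local measurement is self-tested: there exist isometries $V_k:\mathcal H_k\to\mathcal H\otimes\mathcal H_k'$ with $V_k E_{\mu,0}^{(k)}V_k^*=P_\mu\otimes\mathbb I_{\mathcal H_k'}$ for all $\mu$. Applying $\bigotimes_k V_k$ to $\ket{\Psi}$ produces a vector $\ket{\Phi}\in\mathcal H^{\otimes d}\otimes(\mathcal H_0'\otimes\cdots\otimes\mathcal H_{d-1}')$, and the goal is to show $\ket{\Phi}=\ket{\Psi_{\mathrm S}}\otimes\ket{\psi_{\mathrm{junk}}}$ together with the measurement-consistency relation~\eqref{eq:self-testing_state_measurements_2}. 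The natural route is to show that, for every $\mu$, the operator $S_\mu\otimes\mathbb I_{\mathrm{junk}}$ fixes $\ket{\Phi}$, i.e. $(S_\mu\otimes\mathbb I)\ket{\Phi}=\ket{\Phi}$, and then invoke Conjecture~\ref{prop:conjecture}, which says the Slater state is the \emph{only} vector lying in the range of every $S_\mu$, to conclude that in any Schmidt-type decomposition $\ket{\Phi}=\sum_i \ket{\xi_i}_{\mathcal H^{\otimes d}}\otimes\ket{\eta_i}_{\mathrm{junk}}$ each $\ket{\xi_i}$ must be proportional to $\ket{\Psi_{\mathrm S}}$, forcing the product form with $\ket{\psi_{\mathrm{junk}}}=\sum_i c_i\ket{\eta_i}$.

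The key technical step is therefore establishing $(S_\mu\otimes\mathbb I)\ket{\Phi}=\ket{\Phi}$. Recall $S_\mu=\frac{d!}{r!(d-r)!}\mathcal S(R_\mu)$ is, up to the combinatorial factor, the sum over all distinct placements of $r$ copies of $P_\mu$ and $d-r$ copies of $\mathbb I-P_\mu$ among the $d$ tensor factors. Each such summand, pulled back through $\bigotimes_k V_k$, corresponds to $\bigotimes_k E_{\mu,a_k}^{(k)}$ with exactly $r$ indices $a_k=0$ and $d-r$ indices $a_k=1$; by the synchronous property (Part~I of the proof of Theorem~\ref{th:self_testing_measurement}, i.e.~the analogue of Corollary~\ref{cor:synconous_property} for $\ket{\Psi}$), any placement with a different count annihilates $\ket{\Psi}$. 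Hence $\sum_{\boldsymbol a:\sum a_k=d-r}E_{\overline{\boldsymbol\mu},\boldsymbol a}\ket{\Psi}=\bigotimes_k\bigl(E_{\mu,0}^{(k)}+E_{\mu,1}^{(k)}\bigr)\ket{\Psi}=\ket{\Psi}$. Transporting this identity through the isometries and using~\eqref{eq:isometry_measurements} gives precisely $(S_\mu\otimes\mathbb I_{\mathrm{junk}})\ket{\Phi}=\ket{\Phi}$, as desired. This is essentially a repackaging of Proposition~\ref{prop:stabilizing_slater_state} and Corollary~\ref{cor:synconous_property} for the unknown realization rather than the canonical one, and it is where the synchronicity established in Part~I of Theorem~\ref{th:self_testing_measurement} does the real work.

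Once $\ket{\Phi}=\ket{\Psi_{\mathrm S}}\otimes\ket{\psi_{\mathrm{junk}}}$ is secured, relation~\eqref{eq:self-testing_state_measurements_2} follows routinely: applying $\bigotimes_k V_k$ to $E_{\boldsymbol\mu,\boldsymbol a}\ket{\Psi}$ and using $V_k E_{\mu,a}^{(k)}=\bigl(\tilde E_{\mu,a}^{(k)}\otimes\mathbb I_{\mathcal H_k'}\bigr)V_k$ (which holds on the support of $\rho_k$, and we may assume maximal support without loss of generality as in the excerpt) immediately yields $\tilde E_{\boldsymbol\mu,\boldsymbol a}\ket{\Psi_{\mathrm S}}\otimes\ket{\psi_{\mathrm{junk}}}$. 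I expect the main obstacle to be a bookkeeping subtlety rather than a conceptual one: one must be careful that the rank hypothesis $r=\operatorname{rank}(P_\mu)=b/N<d$ used to define $S_\mu$ and $R_\mu$ is actually available for the projection families of interest (it holds for the $N=d+1$ rank-one construction and needs checking for $N=4$), and that the isometry $V_k$ from Theorem~\ref{th:self_testing_measurement} acts compatibly across all $\mu$ simultaneously — but this is exactly what Theorem~\ref{th:mancinska}'s uniqueness-up-to-isometry clause guarantees, since it produces a \emph{single} $V_k$ working for all $N$ projections at once. Modulo these checks, the proof is a clean two-line application of Conjecture~\ref{prop:conjecture}.
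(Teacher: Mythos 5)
Your proposal is correct and follows essentially the same route as the paper: invoke Theorem~\ref{th:self_testing_measurement} to reduce the measurements to $P_\mu\otimes\mathbb I$, show the transported state lies in the range of every $S_\mu$ (the paper does this by computing $\tr(\rho S_\mu)=1$ for the reduced state from the matching correlations, you do the equivalent vector statement via the synchronous property), and then apply Conjecture~\ref{prop:conjecture} to force $\rho=\ketbra{\Psi_{\mathrm S}}$ and hence the product form. The side concerns you flag (the rank condition $r=b/N<d$ and the single isometry working for all $\mu$) are indeed satisfied by the constructions and by Theorem~\ref{th:mancinska}, so nothing is missing.
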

\begin{proof}
    Similarly, as in the proof of Theorem~\ref{th:self_testing_measurement}, let $d$ parties $\mathcal H_0,\dots,\mathcal H_{d-1}$ share a state in the form~\eqref{eq:generic_state}, and let each system possesses $N$ measurements as in Equation~\eqref{eq:generic_measurement}, generating the quantum correlation~\eqref{eq:correlation}. Due to Theorem~\ref{th:self_testing_measurement}, each  Hilbert space $\mathcal H_{k}$ decomposes as in Equation~\eqref{eq:tensor_product_local}, with $\mathcal H=\mathbb C^d$, and each $E_{\mu,0}^{(i)}$ can be written as~\eqref{eq:isometry_measurements}. After tracing out all the Hilbert spaces $\mathcal H_{k}'$, only $\mathcal H^{\otimes d}$ is left. Let $\rho$ denote the reduced density matrix of $\ketbra{\Psi}$ over $\mathcal H^{\otimes d}$. It is straightforward to check that
    \begin{equation}
        S_\mu=\sum_{\substack{\boldsymbol a\in\times \{0,1\}^{d-1}\\\sum_{k}a_k=d-r}}\tilde E_{\overline{\boldsymbol{\mu}}|\boldsymbol a}.
    \end{equation}
    For $\mu=0,\dots,N-1$ it holds:
    \begin{align}
        \tr(\rho S_\mu)&=\sum_{\substack{\boldsymbol a\in\times \{0,1\}^{d-1}\\\sum_{k}a_k=d-r}}\mathrm{tr}(\rho\tilde E_{\overline{\boldsymbol{\mu}}|\boldsymbol a})= \sum_{\substack{\boldsymbol a\in\times \{0,1\}^{d-1}\\\sum_{k}a_k=d-r}}\tilde p(\boldsymbol a|\overline{\boldsymbol{\mu}})=\sum_{\substack{\boldsymbol a\in\times \{0,1\}^{d-1}\\ \sum_{k}a_k=d-r}}\mel{\Psi_{\mathrm{S}}}{\tilde E_{\overline{\boldsymbol{\mu}}|\boldsymbol a}}{\Psi_{\mathrm{S}}}\nonumber\\
        &=\mel{\Psi_{\mathrm S}}{S_\mu}{\Psi_{\mathrm S}}=1.
    \end{align}
    As a consequence, $\mathrm{supp}({\rho})\subset \mathrm{range} (S_\mu)$. But then, from Conjecture~\ref{prop:conjecture}, it follows that that $\rho=\ketbra{\Psi_{\mathrm{S}}}$.
\end{proof}

\bigskip

Next, we introduce a second conjecture, which implies the first one. This constitutes a stronger assertion, however, it can be directly verified in specific cases.

\begin{conjecture}\label{prop:conjecture_3}
    For $\mu=0,\dots,N-1$, define the symmetric operators
    \begin{align}\label{eq:T_operators}
        T_\mu &=d\,\mathcal S(P_\mu\otimes\mathbb I_d\otimes \dots\otimes\mathbb I_d)\nonumber\\ &=P_\mu\otimes \mathbb I_d\otimes\dots\otimes \mathbb I_d+\mathbb I_d \otimes P_\mu\otimes\dots\otimes \mathbb I_d +\dots+\mathbb I_d\otimes\dots \otimes \mathbb I_d  \otimes P_\mu.
    \end{align}
    Then, $T_\mu$ generate all the symmetric algebra $\mathfrak A_f$. Equivalently, the commutant of the set
    \begin{equation}\label{eq:set_of_t}
        \mathcal K=\{T_\mu:\mu=0,\dots,N-1\}
    \end{equation}
    is $\mathfrak B$, the permutation algebra from Appendix~\ref{app:symmetric_algebras}:
    \begin{equation}\label{eq:commutant_condition}
        \mathcal K'=\mathfrak B.
    \end{equation}
\end{conjecture}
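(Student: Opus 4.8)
The plan is to reduce \cref{prop:conjecture_3} to a statement about the Lie algebra generated by the single-site projections $P_\mu$, and then to invoke Schur--Weyl duality together with the fact, recalled above, that the $P_\mu$ generate all of $\mathcal M_d$ as an associative algebra.

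First I would dispose of the easy inclusion: each $T_\mu = d\,\mathcal S(P_\mu\otimes\mathbb I_d\otimes\dots\otimes\mathbb I_d)$ lies in the image of the symmetrizer, hence $\mathcal K\subseteq\mathfrak A_f$ and $\mathfrak B=\mathfrak A_f'\subseteq\mathcal K'$; equivalently, the $*$-algebra $\langle\mathcal K\rangle$ it generates is contained in $\mathfrak A_f$. All the content is in the reverse inclusion $\langle\mathcal K\rangle\supseteq\mathfrak A_f$, equivalently $\mathcal K'\subseteq\mathfrak B$.

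Next I would set up the reduction. Let $\varphi:\mathcal M_d\to\mathcal B(\mathcal H^{\otimes d})$ be the diagonal representation $\varphi(a)=\sum_{j=0}^{d-1}\mathbb I_d^{\otimes j}\otimes a\otimes\mathbb I_d^{\otimes(d-1-j)}$, an injective homomorphism of Lie algebras with $T_\mu=\varphi(P_\mu)$. Since the number of parties is $n=d$, Schur--Weyl duality identifies the commutant of $\varphi(\mathcal M_d)$ (equivalently of $\mathrm{span}\{U^{\otimes d}:U\in\mathcal U(d)\}$) with the permutation algebra $\mathfrak B$, and its bicommutant with $\mathfrak A_f$. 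Because the commutant of a set of operators equals the commutant of the Lie algebra it generates, one gets $\mathcal K'=\varphi(\mathfrak g)'$, where $\mathfrak g\subseteq\mathcal M_d$ is the Lie subalgebra generated by $P_0,\dots,P_{N-1}$. It therefore suffices to prove $\mathfrak g=\mathcal M_d$: then $\varphi(\mathfrak g)=\varphi(\mathcal M_d)$ and $\mathcal K'=\mathfrak B$, which is the claim.

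The hard part will be establishing $\mathfrak g=\mathcal M_d$. By \cref{th:mancinska} the $P_\mu$ generate $\mathcal M_d$ associatively, so by Burnside they act irreducibly on $\mathbb C^d$, and hence so does $\mathfrak g$, which forces $\mathfrak g$ to be non-solvable for $d\ge2$. What remains is to promote associative generation to Lie generation: one needs that iterated brackets $[P_\mu,P_\nu]$, $[P_\mu,[P_\nu,P_\rho]],\dots$ already span all matrix units $E_{ij}$ with $i\ne j$ and enough diagonal elements. In a polarization picture this is the same as showing that a product $T_{\mu_1}\cdots T_{\mu_m}$, once the one-site sums are distributed over the sites, equals a symmetrized tensor $\mathcal S(w_1\otimes\dots\otimes w_k\otimes\mathbb I_d^{\otimes(d-k)})$ of words in the $P_\mu$ plus strictly lower terms supported on fewer sites; one would then induct on the number of occupied sites, using $\langle P_\mu\rangle_{\mathrm{alg}}=\mathcal M_d$ to subtract off the lower terms and obtain every $\mathcal S(a_1\otimes\dots\otimes a_d)$, $a_i\in\mathcal M_d$, which together span $\mathfrak A_f$. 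I expect this to be the genuine obstacle: associative generation of $\mathcal M_d$ by a family of matrices does not in general imply Lie generation, and does not by itself control the same-site corrections in the expansion, so a uniform proof appears to require the explicit structure of the admissible families --- the recursive construction of Ref.~\cite{Kruglyak2002}, or, in the regimes treated here, the rank-one vectors of \cref{sec:rank_1_example} for $N=d+1$ and the $2\times2$-block decomposition of \cref{prop:structure_of_projections} for $N=4$. This is why I would expect to verify \cref{prop:conjecture_3} case by case, as is done in \cref{Section:Algebraic_approach}, rather than in full generality.
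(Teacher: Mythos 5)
Your proposal is correct and takes essentially the same route as the paper: your diagonal representation $\varphi$ and the claim that $\mathfrak g=\mathcal M_d$ suffices are exactly the content of the unnumbered lemma in Section~\ref{Section:Algebraic_approach} (where $\mathscr{LW}=\varphi^{-1}(\mathcal K'')\supseteq\mathfrak g$ plays the role of your $\mathfrak g$, and the implication is proved by the same exponentiation/Schur--Weyl argument), and your closing assessment matches the paper's: the statement is left as a conjecture precisely because associative generation of $\mathcal M_d$ by the $P_\mu$ does not automatically yield Lie generation, and it is verified only case by case --- $N=d+1$ in Theorem~\ref{th:conjecture_one_dimensional_projections} and $N=4$ in Theorem~\ref{th:conjecture_four_projections} (Appendix~\ref{app:4_proj}) --- via explicit iterated commutators. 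The only piece you do not carry out is that explicit case-by-case computation, which you correctly identify as the remaining content.
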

Observe that, due von Neumann's bicommutant Theorem, and the fact that $\sum_{\mu}T_\mu=xd\mathbb\, I_\mathcal H$, Equation~\eqref{eq:commutant_condition} is equivalent to $\mathcal K''=\mathfrak A_f$, which means that $T_\mu$ generates entire symmetric algebra.

\begin{proposition}
    Suppose Conjecture~\ref{prop:conjecture_3} holds. Then, Conjecture~\ref{prop:conjecture} holds.
\end{proposition}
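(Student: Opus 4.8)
The plan is to recognise that each $S_\mu$ defined in~\eqref{symmetric_projections} is nothing but the spectral projection of the operator $T_\mu$ from~\eqref{eq:T_operators} associated with its eigenvalue $r=\rank(P_\mu)=b/N$; once this is in place, a vector lies in $\bigcap_{\mu}\mathrm{range}(S_\mu)$ precisely when it is a common eigenvector of all the $T_\mu$ with eigenvalue $r$, and Conjecture~\ref{prop:conjecture_3} will upgrade such a vector into one satisfying the fixing property~\eqref{eq:slater_fixing_property}, which by Appendix~\ref{app:symmetric_algebras} forces it to be proportional to $\ket{\Psi_{\mathrm S}}$. Together with Proposition~\ref{prop:stabilizing_slater_state}, which gives the reverse inclusion, this yields exactly Conjecture~\ref{prop:conjecture}.

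Concretely, I would first fix $\mu$ and split $\mathcal H=P_\mu\mathcal H\oplus(\mathbb I_d-P_\mu)\mathcal H$, which induces a decomposition $\mathcal H^{\otimes d}=\bigoplus_{I\subseteq[d]}\mathcal H_I$, where $\mathcal H_I$ is spanned by the product vectors whose $k$-th tensor factor lies in $P_\mu\mathcal H$ exactly when $k\in I$. On $\mathcal H_I$ the operator $T_\mu$ acts as multiplication by $|I|$, while the operator $R_\mu$ of the text is exactly the orthogonal projection onto $\mathcal H_{\{0,\dots,r-1\}}$. The only slightly delicate point is the bookkeeping in the symmetrisation: averaging $R_\mu$ over $\mathbb P_d$ as in~\eqref{eq:symmetric_vector} hits each of the $\binom{d}{r}$ subspaces $\mathcal H_I$ with $|I|=r$ exactly $r!(d-r)!$ times, so that the prefactor $d!/(r!(d-r)!)$ in~\eqref{symmetric_projections} is precisely what turns $\mathcal S(R_\mu)$ into $\sum_{|I|=r}\Pi_I$, where $\Pi_I$ denotes the orthogonal projection onto $\mathcal H_I$; this is the spectral projection of $T_\mu$ at the value $r$. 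Hence $\bigcap_{\mu}\mathrm{range}(S_\mu)=\{\ket{\phi}:T_\mu\ket{\phi}=r\ket{\phi}\ \text{for all }\mu\}$. I expect this identification to be the main (if modest) obstacle; the remaining steps are essentially formal.

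For the last step, let $\ket{\phi}$ satisfy $T_\mu\ket{\phi}=r\ket{\phi}$ for all $\mu$. Applying the factors of any word $T_{\mu_1}\cdots T_{\mu_m}$ to $\ket{\phi}$ from right to left gives $T_{\mu_1}\cdots T_{\mu_m}\ket{\phi}=r^{m}\ket{\phi}$; since the $T_\mu$ are self-adjoint, the $*$-algebra they generate is the linear span of such words, so every one of its elements maps the line $\mathbb C\ket{\phi}$ into itself. By Conjecture~\ref{prop:conjecture_3} this algebra is all of $\mathfrak A_f$, hence in particular $\mathcal S(A)\ket{\phi}\in\mathbb C\ket{\phi}$ for every product operator $A=A_0\otimes\dots\otimes A_{d-1}$, i.e.\ $\ket{\phi}$ satisfies~\eqref{eq:slater_fixing_property}. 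Invoking the uniqueness statement recalled just after~\eqref{eq:slater_fixing_property} and proved in Appendix~\ref{app:symmetric_algebras}, one concludes $\ket{\phi}\in\mathbb C\ket{\Psi_{\mathrm S}}$, so $\bigcap_{\mu}\mathrm{range}(S_\mu)\subseteq\mathbb C\ket{\Psi_{\mathrm S}}$; combined with $S_\mu\ket{\Psi_{\mathrm S}}=\ket{\Psi_{\mathrm S}}$ from Proposition~\ref{prop:stabilizing_slater_state}, this gives $\bigcap_{\mu}\mathrm{range}(S_\mu)=\mathbb C\ket{\Psi_{\mathrm S}}$, which is the assertion of Conjecture~\ref{prop:conjecture}.
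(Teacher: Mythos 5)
Your proposal is correct and follows essentially the same route as the paper's proof: both reduce membership in $\bigcap_\mu\mathrm{range}(S_\mu)$ to the common eigenvector condition $T_\mu\ket{\phi}=r\ket{\phi}$ (the paper via $T_\mu S_\mu=rS_\mu$, you via the slightly more explicit identification of $S_\mu$ as the spectral projection of $T_\mu$ at $r$), and both then invoke Conjecture~\ref{prop:conjecture_3} — you through the bicommutant form $\mathcal K''=\mathfrak A_f$ and the fixing property~\eqref{eq:slater_fixing_property}, the paper through the equivalent commutant form $\mathcal K'=\mathfrak B$ and the uniqueness of the rank-one projection in $\mathfrak B$ — to conclude $\ket{\phi}\in\mathbb C\ket{\Psi_{\mathrm S}}$.
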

\begin{proof}
    Let $S_\mu\ket{\Psi}=\ket{\Psi}$ for all $\mu=0,\dots,N-1$. Then
    \begin{align}
        T_\mu S_\mu=r S_\mu.
    \end{align}
    Indeed, take a general element in the decomposition~\eqref{eq:symmetric_vector} of $S_\mu$. Applying $T_\mu$ to this element results in $r$ identical copies, which establishes the stated relation. Thus, $T_\mu \ket{\Psi}=T_\mu S_\mu \ket{\Psi}=r\ket{\Psi}$, and thus $\ketbra{\Psi}$ is an element in the commutant of $\{Z_\mu:\mu=0,1,\dots,N-1\}$. This implies that $\ketbra{\Psi}$ is a rank-1 projection in $\mathfrak B$, which means that $\ketbra{\Psi}=\ketbra{\Psi_{\mathrm S}}$, as $\ketbra{\Psi_{\mathrm S}}$ is the only rank-1 projection in $\mathfrak B$.
\end{proof}

\bigskip

This Section can be summarized with the following relation
\begin{center}
Conjecture~\ref{prop:conjecture_3}$\implies$ Conjecture~\ref{prop:conjecture}$\implies$ Self-testing of the state.
\end{center}

\subsection{Testing the Conjecture}\label{Section:Algebraic_approach}

We prove Conjecture~\ref{prop:conjecture_3} for the cases $N=d+1$ and $N=4$. Because the operators $T_\mu$ defined in~\eqref{eq:T_operators} are symmetric, one immediately obtains the set inclusions:
\begin{align}
    \mathcal K'&\supseteq\mathfrak B,\\
    \mathcal K&\subseteq\mathcal K''\subseteq\mathfrak A_f.
\end{align}
Thus, to prove Conjecture~\ref{prop:conjecture_3}, it suffices to show that either the first inclusion or the third inclusion is actually an equality. Furthermore, since the span of $K$ contains the identity, we have that $\mathcal K''$ is also the $C^*$-algebra generated by $\mathcal K$.

Define the following set
\begin{equation}
    \mathscr{LW}=\{X\in\mathcal M_d: \mathcal S(X\otimes \mathbb I_d\otimes\dots \otimes\mathbb I_d)\in \mathcal K''\},
\end{equation}
with $\mathcal S$ the symmetrizer (see Equation~\eqref{eq:symmetric_vector}). It is evident that $\mathscr{LW}$ forms a vector space. Moreover, for any $X, Y\in\mathscr{LW}$, the commutator $[X,Y]$ also lies in $\mathscr{LW}$. Consequently, $\mathscr{LW}$ is a complex Lie subalgebra of $\mathcal M_d$. In addition, all the operators $P_\mu$ as well as the identity $\mathbb I_d$, belong to $\mathcal M_d$. Finally, the commutant of $\mathscr{LW}$ is trivial, and it is given by $\mathbb C\mathbb I_d$.

The following result holds.
\begin{lemma}
    Suppose that
        $\mathscr{LW}=\mathcal M_d.$
    Then, $\mathcal K''=\mathfrak A_f$.
\end{lemma}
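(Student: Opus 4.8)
The plan is to show that the hypothesis $\mathscr{LW}=\mathcal M_d$ forces the chain of inclusions $\mathcal K\subseteq\mathcal K''\subseteq\mathfrak A_f$ to collapse at the top, i.e. $\mathcal K''=\mathfrak A_f$. Since $\mathfrak A_f$ is the symmetric algebra generated by all $\mathcal S(X_0\otimes\dots\otimes X_{d-1})$, and $\mathcal K''$ is already known to sit inside $\mathfrak A_f$, it suffices to prove that $\mathcal K''$ contains a generating set of $\mathfrak A_f$. The natural candidates are the ``one-body'' symmetrized operators $\mathcal S(X\otimes\mathbb I_d\otimes\dots\otimes\mathbb I_d)$, and by the very definition of $\mathscr{LW}$, the assumption $\mathscr{LW}=\mathcal M_d$ says exactly that \emph{all} such operators lie in $\mathcal K''$.

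First I would record that $\mathcal K''$ is a $C^*$-algebra (as noted in the excerpt, because $\mathrm{span}\,\mathcal K$ contains the identity, the bicommutant equals the $C^*$-algebra generated by $\mathcal K$), so it is closed under products and adjoints. Next I would invoke $\mathscr{LW}=\mathcal M_d$ to conclude that $t(X):=d\,\mathcal S(X\otimes\mathbb I_d\otimes\dots\otimes\mathbb I_d)\in\mathcal K''$ for every $X\in\mathcal M_d$ — this is the ``first moment map'' landing inside $\mathcal K''$. The remaining content is purely a statement about the symmetric algebra: the operators $\{t(X):X\in\mathcal M_d\}$ generate $\mathfrak A_f$ as a $C^*$-algebra. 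I would prove this by taking products: for $X_1,\dots,X_m\in\mathcal M_d$, the product $t(X_1)\cdots t(X_m)$, when expanded, is a sum of terms of the form (up to combinatorial coefficients) $\mathcal S(X_{i_1}X_{i_2}\cdots\otimes\dots)$ over all ways of distributing the factors among the $d$ tensor slots; by a standard polarization/Newton's-identity argument (the same kind used to show power sums generate symmetric functions) one can solve for each individual symmetrized elementary tensor $\mathcal S(Y_0\otimes Y_1\otimes\dots\otimes Y_{d-1})$ as a polynomial in the $t(X)$'s. Since these symmetrized elementary tensors span $\mathfrak A_f$ by definition (Appendix on symmetric algebras), we get $\mathfrak A_f\subseteq\mathcal K''$, and combined with the already-known reverse inclusion $\mathcal K''\subseteq\mathfrak A_f$ this yields equality.

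The main obstacle I anticipate is the combinatorial bookkeeping in the last step: expressing an arbitrary symmetrized tensor product of $d$ (not necessarily distinct) matrices as a $\mathcal K''$-polynomial in the single-body generators $t(X)$. One has to be careful that when two of the $Y_j$ coincide, or when $m<d$ so some slots stay as identities, the linear system relating products-of-$t$'s to symmetrized tensors is still invertible — this is where the argument mirrors the classical fact that the first $d$ power sums generate the ring of symmetric polynomials in $d$ variables, and one should make sure the ambient field has characteristic zero (which it does, being $\mathbb C$) so the relevant coefficients are invertible. A clean way to organize this is to work slot-by-slot: the algebra generated by $\{t(X)\}$ contains, for each fixed multiset of ``labels'', the full span of symmetrized tensors with those labels, by an induction on the number of distinct labels that peels off one new matrix at a time using products with $t(X)$ and subtracting lower-order symmetric terms already shown to be in $\mathcal K''$. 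Once this is set up the rest is routine, and the two subsequent sections ($N=d+1$ and $N=4$) then only need to verify the concrete hypothesis $\mathscr{LW}=\mathcal M_d$ for the explicit projections constructed in Section~\ref{sec:explicit_construction}.
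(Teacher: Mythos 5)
Your proposal is correct, but it takes a genuinely different route from the paper. The paper uses the spanning fact $\mathfrak A_f=\mathrm{span}\{X^{\otimes d}:X\in\mathcal M_d\}$ and gets each $X^{\otimes d}$ into $\mathcal K''$ analytically: for unitary $X=\mathrm e^{\mathrm iH}$ it writes $X^{\otimes d}=\exp(\mathrm i d\,\mathcal S(H\otimes\mathbb I_d\otimes\dots\otimes\mathbb I_d))$, which lies in $\mathcal K''$ because the exponent does (by $\mathscr{LW}=\mathcal M_d$) and $\mathcal K''$ is a $C^*$-algebra; it then extends to strictly positive operators via $X=\mathrm e^{H}$, to positive ones by a limit, and to arbitrary $X$ by polar decomposition. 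You instead stay entirely algebraic: you feed all one-body operators $t(X)=d\,\mathcal S(X\otimes\mathbb I_d\otimes\dots\otimes\mathbb I_d)$ into $\mathcal K''$ and recover every symmetrized elementary tensor from products of them by polarization. Your worry about invertibility of the resulting linear system is actually unfounded: the recursion is triangular with integer coefficients, e.g.\ writing $S_m(X_1,\dots,X_m)$ for the sum over placements of $X_1,\dots,X_m$ in distinct slots (identities elsewhere), one has
\begin{equation*}
S_{m+1}(X_1,\dots,X_{m+1})=t(X_{m+1})\,S_m(X_1,\dots,X_m)-\sum_{i=1}^{m}S_m(X_1,\dots,X_{m+1}X_i,\dots,X_m),
\end{equation*}
so by induction all $S_m$, and in particular $S_d=d!\,\mathcal S(X_1\otimes\dots\otimes X_d)$, lie in the algebra generated by the $t(X)$, and these span $\mathfrak A_f$. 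What each approach buys: the paper's proof is shorter once one grants the exponential identity and the closure properties of $C^*$-algebras, while yours avoids exponentials, limits and polar decomposition altogether, needs only that $\mathcal K''$ is an algebra containing the $t(X)$, and is therefore more elementary and self-contained at the price of some combinatorial bookkeeping.
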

\begin{proof}
    It is equivalent to the fact that the Lie group generated by $\mathcal M_d$ is the whole $\mathrm{GL}(\mathbb C,d)$~\cite{Hall2015}. It is enough to prove that, given any operator $X$ in $\mathcal M_d$, $X^{\otimes d}$ belongs to $\mathcal K''$. To this end, take unitary $X$ and write it as
    \begin{equation}
        X = \mathrm e^{\mathrm i H},
    \end{equation}
    with $H$ a Hermitian operator. Then
    \begin{equation}
        X^{\otimes d}=\mathrm  e^{\mathrm i H}\otimes\dots\otimes \mathrm  e^{\mathrm i H}=\mathrm{exp}({\mathrm i d\,\mathcal S(H\otimes\mathbb I_d\otimes\dots\otimes\mathbb I_d)})\in\mathcal K'',
    \end{equation}
    with $\mathcal S$ the symmetrizer~\eqref{eq:symmetric_vector}.
    A similar proof can be repeated for $X$ being a strictly positive operator, as 
    it can always be written as $X=\mathrm e^{H}$ and proceed as in the unitary case. Furthermore, if $X$ is positive, it can be seen as a limit of strictly positive operators, so $X^{\otimes d}\in\mathcal K''$. Since any operator is the product of a positive and a unitary operators (via polar decomposition), we conclude that $\mathcal K''=\mathfrak A_f$.
\end{proof}

\medskip

The following two theorems support the validity of conjecture~\ref{prop:conjecture_3}.
\begin{theorem}\label{th:conjecture_one_dimensional_projections}
    Let $N=d+1$ in the quantum strategy defining the correlation $\tilde p(\boldsymbol a|\boldsymbol{\mu})$ in Eq.~\ref{def:canonical_strategy}. Then, $\mathscr{LW}$ is the whole $\mathcal M_d$.
\end{theorem}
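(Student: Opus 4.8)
The goal is to prove that when $N=d+1$ and the projections $P_0,\dots,P_d$ are the rank-one projections of Section~\ref{sec:rank_1_example}, the Lie algebra $\mathscr{LW}$ coincides with all of $\mathcal M_d$. The plan is to exhibit enough explicit elements of $\mathscr{LW}$—built from the $P_\mu$ and their iterated commutators—to span $\mathcal M_d$. Since $\mathscr{LW}$ is a complex Lie subalgebra of $\mathcal M_d$ containing each $P_\mu$ and the identity $\mathbb I_d$, and since its commutant is $\mathbb C\mathbb I_d$, it suffices to verify that $\mathscr{LW}$ is closed under products (not just commutators) or, equivalently, that the associative algebra it generates has trivial center acting irreducibly; by Burnside's theorem (or the double commutant theorem in $\mathcal M_d$), an irreducible subalgebra of $\mathcal M_d$ containing $\mathbb I_d$ must be all of $\mathcal M_d$. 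Concretely, I would first use the explicit form of the $\ket{\psi_\mu}$ from Eqs.~\eqref{eq:decomposition_induction}--\eqref{eq:decomposition_induction_3} to compute $P_\mu = \ketbra{\psi_\mu}$ and then show that the matrix units $E_{ij}$ all lie in the associative closure of $\{P_0,\dots,P_d\}$.

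The key computational step is to establish that $\{P_0,\dots,P_d\}$ generates $\mathcal M_d$ as an associative algebra. This was already asserted in the paragraph following Theorem~\ref{th:mancinska}: "the $C^*$-algebra generated by $\{P_0,\dots,P_{N-1}\}$ is the whole algebra of square matrices $\mathcal M_d$." I would make this concrete in the rank-one case: the products $P_\mu P_\nu = \braket{\psi_\mu}{\psi_\nu}\ketbra{\psi_\mu}{\psi_\nu}$, and since $\braket{\psi_\mu}{\psi_\nu}=-1/d\neq 0$ for $\mu\neq\nu$, the operator $\ketbra{\psi_\mu}{\psi_\nu}$ is a nonzero multiple of $P_\mu P_\nu$ and hence lies in the generated algebra. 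Because $\{\ket{\psi_0},\dots,\ket{\psi_{d-1}}\}$ is a basis of $\mathbb C^d$ (from Eq.~\eqref{eq:decomposition_induction}, these are in echelon form with nonzero diagonal entries $\beta_\mu$), the rank-one operators $\ketbra{\psi_\mu}{\psi_\nu}$ for $\mu,\nu\in[d]$ span all of $\mathcal M_d$. Therefore the associative algebra generated by the $P_\mu$ is $\mathcal M_d$.

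Now I would connect this to $\mathscr{LW}$. The definition says $X\in\mathscr{LW}$ iff $\mathcal S(X\otimes\mathbb I_d\otimes\dots\otimes\mathbb I_d)\in\mathcal K''$, and $\mathcal K''$ is the $C^*$-algebra generated by the $T_\mu$. Since $\mathscr{LW}$ is a Lie algebra containing all $P_\mu$ and $\mathbb I_d$, and since the associative algebra generated by the $P_\mu$ is $\mathcal M_d$, I need the stronger fact that $\mathscr{LW}$ is closed under multiplication. This follows because $\mathcal K''$ is an \emph{associative} algebra and the map $X\mapsto \mathcal S(X\otimes\mathbb I_d\otimes\dots\otimes\mathbb I_d) = \frac1d T_X$ (in the notation of Eq.~\eqref{eq:T_operators}) interacts with products via the symmetric-algebra structure: the product $T_X T_Y$ inside $\mathfrak A_f$ decomposes as $\mathcal S(XY\otimes\mathbb I\otimes\dots)$ plus symmetrized terms of the form $\mathcal S(X\otimes Y\otimes\mathbb I\otimes\dots)$, and one needs to peel off the latter. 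The cleanest route is instead the one the paper's earlier Lemma already sets up: show $\mathscr{LW}=\mathcal M_d$ by arguing that any Lie subalgebra of $\mathcal M_d$ that contains the rank-one projections $\ketbra{\psi_\mu}$ for $d+1$ vectors in "generic position" (pairwise inner products $-1/d$, any $d$ of them a basis) and has trivial commutant must be all of $\mathcal M_d$—this is a statement purely about $\mathfrak{gl}(d,\mathbb C)$, provable by taking commutators $[P_\mu,P_\nu]$, then $[[P_\mu,P_\nu],P_\mu]$, etc., to generate first $\ketbra{\psi_\mu}{\psi_\nu}-\ketbra{\psi_\nu}{\psi_\mu}$ and symmetric combinations, then all matrix units.

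The main obstacle I anticipate is the bookkeeping in the last step: showing that iterated Lie brackets of the rank-one projections $\{\ketbra{\psi_\mu}\}_{\mu=0}^{d}$ actually exhaust $\mathfrak{gl}(d,\mathbb C)$. One must be careful that the real-linear span issue does not intrude (we are working with a \emph{complex} Lie algebra, so $i[\cdot,\cdot]$ is also available, which is what lets us recover Hermitian off-diagonal pieces), and one must verify that the $d+1$ vectors are in sufficiently general position that no proper Lie subalgebra—e.g. a $\mathfrak{sl}$ or a parabolic—can contain all of them. The nondegeneracy inputs are exactly: (i) any $d$ of the $\ket{\psi_\mu}$ form a basis, and (ii) all pairwise overlaps equal $-1/d\neq 0$; both are immediate from Eqs.~\eqref{eq:decomposition_induction}--\eqref{eq:decomposition_induction_3}. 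Given these, I would conclude by invoking that the only complex Lie subalgebra of $\mathcal M_d$ with trivial commutant that is closed under the associative products forced by the bracket relations is $\mathcal M_d$ itself, and then apply the preceding Lemma to deduce $\mathcal K''=\mathfrak A_f$, i.e.\ Conjecture~\ref{prop:conjecture_3} in the case $N=d+1$.
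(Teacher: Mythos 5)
Your final route is exactly the paper's proof: compute $[P_\mu,P_\nu]$ and $[P_\mu,[P_\mu,P_\nu]]$ explicitly, take linear combinations with $P_\mu$ itself to isolate each $\ketbra{\psi_\mu}{\psi_\nu}$ and $\ketbra{\psi_\nu}{\psi_\mu}$, and conclude because $\{\ket{\psi_\mu}\}_{\mu=0,\dots,d-1}$ spans $\mathbb C^d$ so these rank-one operators span $\mathcal M_d$. The associative-algebra/Burnside detour in your opening paragraphs is unnecessary and, as you note yourself, does not directly apply since $\mathscr{LW}$ is only known to be a Lie algebra; the commutator argument you settle on is the paper's and is correct.
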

\begin{theorem}\label{th:conjecture_four_projections}
    Let $N=4$ in the quantum strategy defining the correlation $\tilde p(\boldsymbol a|\boldsymbol{\mu})$ in Eq.~\ref{def:canonical_strategy}. Then, $\mathscr{LW}$ is the whole $\mathcal M_d$.
\end{theorem}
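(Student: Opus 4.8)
The goal is to show $\mathscr{LW} = \mathcal M_d$ in the $N=4$ case, where $d = 2k+1$ is odd and the four projections $A,B,C,D$ (equivalently $P_0,\dots,P_3$) have the explicit block-diagonal form described in Proposition~\ref{prop:structure_of_projections}. My plan is to exploit this concrete structure: work in the standard basis $\{\ket{0},\dots,\ket{2k}\}$, and recall that $\mathscr{LW}$ is a complex Lie subalgebra of $\mathcal M_d$ containing $P_0,P_1,P_2,P_3$ and $\mathbb I_d$, closed under commutators, and that it suffices to generate all of $\mathcal M_d$ as a Lie algebra. Since the $P_\mu$ are genuine elements of $\mathscr{LW}$ (not just symmetrized), so is the orthogonal complement $A = \mathbb I_d - P_0$, etc., and hence all the $2\times 2$ blocks $A_\ell, B_\ell, C_\ell, D_\ell$ are at my disposal only after I can project onto the relevant two-dimensional coordinate subspaces.

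\textbf{Step 1: Produce the matrix units from the blocks.} First I would take commutators $[A_\ell,B_\ell]$; since both are $2\times2$ matrices on $\ket{2\ell}\oplus\ket{2\ell+1}$ with the same diagonal but opposite off-diagonal signs, their commutator is a nonzero multiple of the "$\sigma_y$-type" element $\frac{1}{2}(\ketbra{2\ell}{2\ell+1} - \ketbra{2\ell+1}{2\ell})$ scaled by $\sqrt{x_\ell^{(1)}x_\ell^{(2)}}$, provided that product is nonzero, which holds for $1 \le \ell \le k-1$ (it can vanish at the extreme blocks, but those degenerate to the single entry $E_{2k}$ or $E_0$ equal to one, already handled separately). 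Combining $A_\ell$ with $[A_\ell,B_\ell]$ and further commutators then yields all of $\mathcal B(\ket{2\ell}\oplus\ket{2\ell+1})$, in particular the diagonal projectors $E_{2\ell}$ and $E_{2\ell+1}$. Doing the same with $C_\ell,D_\ell$ on $\ket{2\ell-1}\oplus\ket{2\ell}$ gives all of $\mathcal B(\ket{2\ell-1}\oplus\ket{2\ell})$.

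\textbf{Step 2: Chain the blocks together.} The $A$/$B$-blocks cover the overlapping pairs $\{0,1\},\{2,3\},\dots$ and the $C$/$D$-blocks cover $\{1,2\},\{3,4\},\dots$, so the two families of two-dimensional subspaces form a "connected chain" $\ket{0}\!-\!\ket{1}\!-\!\ket{2}\!-\!\dots\!-\!\ket{2k}$. The standard argument is that once I have the full matrix algebra on $\ket{i}\oplus\ket{i+1}$ and on $\ket{i+1}\oplus\ket{i+2}$, the commutator of $\ketbra{i}{i+1}$ with $\ketbra{i+1}{i+2}$ produces $\ketbra{i}{i+2}$, and inductively along the chain I obtain every matrix unit $E_{i,j} = \ketbra{i}{j}$. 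Hence $\mathscr{LW}$ contains all matrix units, so $\mathscr{LW} = \mathcal M_d$. (Care is needed at the two endpoints, where one of the blocks is just the scalar $1$ sitting on $\ket{0}$ or $\ket{2k}$; but the neighbouring genuine $2\times2$ block already reaches that basis vector, so the chain is not broken.)

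\textbf{Main obstacle.} The delicate point is verifying that the relevant off-diagonal entries $\sqrt{x_\ell^{(1)}x_\ell^{(2)}}$ and $\sqrt{y_\ell^{(1)}y_\ell^{(2)}}$ are nonzero for the generic interior blocks, and carefully tracking the edge cases where a block collapses (e.g. when $x_\ell^{(2)} = 0$, i.e. $\ell = k$, or $x_\ell^{(1)}=0$, i.e. $\ell=0$, forcing a block to become a rank-one projector rather than a genuinely two-dimensional object). One must check that the chain of two-dimensional subspaces remains connected through these collapses so that Step 2 still goes through; this is where the specific arithmetic of $d=2k+1$ being odd matters, and is the part of the argument that needs the most bookkeeping. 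Once connectivity is confirmed, the Lie-algebra generation is routine, and invoking the preceding lemma gives $\mathcal K'' = \mathfrak A_f$, completing the proof of Conjecture~\ref{prop:conjecture_3} for $N=4$.
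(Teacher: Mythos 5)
Your overall architecture matches the paper's (analyse the $2\times2$ blocks of $A,B$ and of $C,D$, then chain the overlapping two-dimensional subspaces $\ket{2\ell}\oplus\ket{2\ell+1}$ and $\ket{2\ell-1}\oplus\ket{2\ell}$ together via commutators of matrix units to reach all of $\mathfrak{sl}_{2k+1}$ and hence $\mathcal M_{2k+1}$). Your Step 2 is essentially the paper's Part III. But there is a genuine gap at the step you yourself flag in your opening paragraph and then never close: the individual blocks $A_\ell$, $B_\ell$, and in particular the single-block commutators $[A_\ell,B_\ell]$, are \emph{not} elements of $\mathscr{LW}$. Only the full direct sums are: $[A,B]=\bigoplus_{\ell}[A_\ell,B_\ell]\oplus 0$ is one single element of $\mathscr{LW}$, and nothing in your argument extracts from it the component supported on a single subspace $\ket{2\ell}\oplus\ket{2\ell+1}$. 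Your Step 1 ("take commutators $[A_\ell,B_\ell]$... combining $A_\ell$ with $[A_\ell,B_\ell]$...") therefore operates with objects you have not shown to lie in $\mathscr{LW}$, and Step 2 cannot start without them, since the chaining needs individual matrix units $E_{i,i+1}$ rather than sums of such units spread over all blocks.

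This block-separation step is the actual core of the paper's proof (its Part II): one forms the iterated commutators $Z^{(n)}$ with $[A,\cdot]$, whose $\ell$-th block scales as $w_\ell z_\ell^{\,n}$ with $z_\ell=x^{(2)}_\ell-x^{(1)}_\ell$, and then uses the pairwise distinctness of the $z_\ell^2$ (which is where the arithmetic of $d=2k+1$ enters) to run a Vandermonde-type elimination $Z^{(2n,j)}=z_{k-j}^2Z^{(2n,j-1)}-Z^{(2n+2,j-1)}$ that kills the blocks one at a time and isolates $\sigma_1$ on a single block; further commutators then give the full $\mathfrak{sl}_2$ on each block. Note also that even after separation one only obtains the traceless part $\mathfrak{sl}_2$ per block from $A,B$ (plus the global element $A$), not the individual diagonal projectors $E_{2\ell}$, $E_{2\ell+1}$ as you claim; this is harmless for the final chaining, which only needs off-diagonal units and differences of diagonal units together with $\mathbb I_{2k+1}$. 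By contrast, the vanishing of the off-diagonal entries at the extreme blocks, which you single out as the main obstacle, is a minor bookkeeping point: the degenerate blocks are just the scalar $1$ on $E_0$ or $E_{2k}$ and the chain remains connected through the neighbouring genuine blocks. To repair your proof you must supply the elimination argument (or an equivalent mechanism) that turns the global direct sums into single-block elements of $\mathscr{LW}$.
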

We prove Theorem~\ref{th:conjecture_one_dimensional_projections} for the case of $N=d+1$ discussed in Section~\ref{sec:rank_1_example}. The proof for the case $N=4$ is lengthy, and it is postponed to appendix~\ref{app:4_proj}.

\begin{proof}[Proof of~\ref{th:conjecture_one_dimensional_projections}]
Consider the $N=d+1$ projectors discussed in Section~\ref{sec:rank_1_example}. They are one-dimensional projections, and can be written as
\begin{equation}\label{eq:rank_1_projector}
    P_\mu=\ketbra{\psi_\mu},\quad \mu=0,\dots,d.
\end{equation}
The Lie algebra $\mathscr{LW}$ contains the commutant 
\begin{align}\label{eq:commutator_pmu}
[P_\mu,P_\nu]&=\frac{1}{d}\left(\ketbra{\psi_\nu}{\psi_\mu}-\ketbra{\psi_\mu}{\psi_\nu}\right)\\
[P_\mu,[P_\mu,P_\nu]]&= - \frac{1}{d} \left(\ketbra{\psi_\nu}{\psi_\mu}+\ketbra{\psi_\mu}{\psi_\nu}\right) -\frac{2}{d^2}\ketbra{\psi_\mu}\label{eq:second_commutator_pmu}.
\end{align}
Writing the linear combinations between~\eqref{eq:rank_1_projector},~\eqref{eq:commutator_pmu} and~\eqref{eq:second_commutator_pmu}, we get that for all $\mu,\nu=0,\dots,d$:
\begin{align}
    \ketbra{\psi_\mu}{\psi_\nu},\ketbra{\psi_\nu}{\psi_\mu}\in\mathscr{LW}.
\end{align}
Since $\{\ket{\psi_\mu}\}_{\mu=0,\dots,d-1}$ span all $\mathbb C^d$, we have that $\{\ketbra{\psi_\mu}{\psi_\nu}\}_{\mu,\nu=0,\dots,d-1}$ span all $\mathcal M_d$, and thus $\mathscr{LW}=\mathcal M_d$.
\end{proof}

\section{Robustness}\label{sec:self-testing_robust}

It is possible to prove that all the results regarding self-testing presented in Section~\ref{sec:self_testing} are actually robust.

\begin{theorem}[Robustness]\label{th:self-testing_robust}
    The self-test statements~\ref{th:self_testing_measurement} and~\ref{th:self_testing_state} are all robust.
\end{theorem}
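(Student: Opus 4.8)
The plan is to re-run, with explicit $\delta$-dependent error terms, the three-part argument behind Theorem~\ref{th:self_testing_measurement} and the state argument behind Theorem~\ref{th:self_testing_state}, and to close it by invoking the \emph{robust} version of Theorem~\ref{th:mancinska} established in~\cite{Mancinska2024} together with an elementary spectral-gap estimate for the operators $S_\mu$. Since the unknown local dimensions are not assumed bounded, every estimate must be dimension-free; this is ensured by expressing all intermediate quantities through the observed correlation and through the state-dependent (GNS) seminorms $\|X\|_{\rho_k}^2:=\tr(\rho_k X^*X)$ rather than through operator norms. \textbf{Step 1 (approximate synchronicity).} If $\|p-\tilde p\|_1<\delta$, then every correlation that vanishes identically in the canonical strategy by Corollary~\ref{cor:synconous_property} is now at most $\delta$; running Part~I of the proof of Theorem~\ref{th:self_testing_measurement} verbatim turns the exact relations~\eqref{eq:terms_equal_zero} into $\|A_0\otimes\cdots\otimes A_{d-1}\ket\Psi\|^2\leqslant\delta$ for the relevant products, so the identity~\eqref{tintodeverano} now holds up to an $O(\sqrt\delta)$ vector-norm error on $\ket\Psi$.

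\textbf{Step 2 (approximate projectivity and sum rule).} Feeding the $O(\sqrt\delta)$-approximate form of~\eqref{tintodeverano} into the computation~\eqref{eq:passage_of_M} gives $\|(E^{(k)}_{\mu,0})^2-E^{(k)}_{\mu,0}\|_{\rho_k}=O(\mathrm{poly}(\delta))$, and the same substitution into Part~III yields $\tr(\rho_k E^{(k)})=x+O(\delta)$ and $\tr(\rho_k (E^{(k)})^2)=x^2+O(\mathrm{poly}(\delta))$, whence Cauchy--Schwarz gives $\big\|\sum_\mu E^{(k)}_{\mu,0}-x\mathbb I_{\mathcal H_k}\big\|_{\rho_k}=O(\mathrm{poly}(\delta))$. \textbf{Step 3 (robust rigidity of the measurements).} For each party $k$ we now have $N$ operators that are $\mathrm{poly}(\delta)$-close, in $\|\cdot\|_{\rho_k}$, to an honest family of $N$ projections summing to $x\mathbb I$; the robust version of Theorem~\ref{th:mancinska} from~\cite{Mancinska2024} is dimension-independent, so it applies directly and produces isometries $V_k:\mathcal H_k\to\mathbb C^d\otimes\mathcal H_k'$ with $\|V_kE^{(k)}_{\mu,0}V_k^*-P_\mu\otimes\mathbb I_{\mathcal H_k'}\|\leqslant\varepsilon_1(\delta)$ on the support of $\rho_k$, where $\varepsilon_1(\delta)\to0$ as $\delta\to0$. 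This already establishes robustness of Theorem~\ref{th:self_testing_measurement}.

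\textbf{Step 4 (robustness of the state).} Applying $\bigotimes_kV_k$ to $\ket\Psi$ and tracing out $\bigotimes_k\mathcal H_k'$ leaves a state $\rho$ on $\mathcal H^{\otimes d}$; because $S_\mu=\sum_{\boldsymbol a:\sum_ka_k=d-r}\tilde E_{\overline{\boldsymbol\mu},\boldsymbol a}$, the correlation bound of Step 1 together with the $d$ isometry errors of Step 3 give $\tr(\rho S_\mu)\geqslant1-\varepsilon_2(\delta)$ for every $\mu$, with $\varepsilon_2(\delta)\to0$. Consider the fixed positive operator $M=\tfrac1N\sum_\mu S_\mu\preceq\mathbb I$ on $\mathcal H^{\otimes d}$: one has $M\ket{\Psi_{\mathrm S}}=\ket{\Psi_{\mathrm S}}$ (Proposition~\ref{prop:stabilizing_slater_state}), and since each $S_\mu$ is a projection, the eigenspace of $M$ for eigenvalue $1$ equals $\bigcap_\mu\mathrm{range}(S_\mu)$, which by Conjecture~\ref{prop:conjecture} — valid in the cases $N=d+1$ and $N=4$, cf.\ Theorems~\ref{th:conjecture_one_dimensional_projections} and~\ref{th:conjecture_four_projections} — is $\mathbb C\ket{\Psi_{\mathrm S}}$. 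Hence $1$ is a simple eigenvalue of $M$ and there is a spectral gap $g>0$ below it depending only on $d$ and $N$; from $\tr(\rho M)\geqslant1-\varepsilon_2(\delta)$ we get $\mel{\Psi_{\mathrm S}}{\rho}{\Psi_{\mathrm S}}\geqslant1-\varepsilon_2(\delta)/g$ and therefore $\|\rho-\ketbra{\Psi_{\mathrm S}}\|_1\leqslant2\sqrt{\varepsilon_2(\delta)/g}$. Combining this with Step 3 and routine fidelity/trace-distance bookkeeping upgrades~\eqref{eq:self-testing_state} and~\eqref{eq:self-testing_state_measurements_2} to their approximate forms with an overall error $\varepsilon(\delta)\to0$, which proves robustness of Theorem~\ref{th:self_testing_state}.

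\textbf{Main obstacle.} The delicate point is Step 2: the exact proofs freely assume full-rank reduced states $\rho_k$ and invert them, which is illegitimate in the approximate regime on an unbounded Hilbert space, since $\rho_k^{-1}$ is then unbounded. The fix is to carry out the manipulations of~\eqref{eq:passage_of_M} entirely in the $\rho_k$-weighted inner product (in which they are contractive), or equivalently to replace $\rho_k^{-1}$ by a pseudo-inverse truncated at a scale tuned to $\delta$, and then to verify that the resulting error propagates benignly through the robust form of Theorem~\ref{th:mancinska}. Everything else is the bookkeeping of a bounded number (in $d$ and $N$, both fixed) of $\mathrm{poly}(\delta)$ estimates.
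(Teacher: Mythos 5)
Your outline follows essentially the same route as the paper: approximate synchronicity from $\norm{p-\tilde p}_1\leqslant\delta$, approximate projectivity and the approximate sum rule in the state-dependent seminorm $\norm{\cdot}_{\rho_k}$, the stability (robust) version of Theorem~\ref{th:mancinska} to extract the isometries $V_k$, and finally a spectral-gap argument on $\sum_\mu S_\mu$ (the paper's $\tilde R$, your $NM$) to recover the state. Your identification of the eigenvalue-$N$ eigenspace of $\tilde R$ with $\bigcap_\mu\mathrm{range}(S_\mu)$ and the reliance on Conjecture~\ref{prop:conjecture} for its one-dimensionality are exactly what the paper does. Your ``main obstacle'' is, however, not where the difficulty lies: the robust argument never needs $\rho_k^{-1}$ — that inversion occurs only in the exact proof of Theorem~\ref{th:self_testing_measurement} — and the paper's Lemma on approximate synchronicity already yields $\norm{E_{\mu,0}^{(k)2}-E_{\mu,0}^{(k)}}_{\rho_k}=O(\sqrt{\delta})$ directly as a difference of two vector-norm estimates on $\ket{\Psi}$, with no inversion anywhere.

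The genuine gap is in Step 3: the stability theorem you invoke (Theorem~\ref{th:gowers_hatami_algebras}) has \emph{three} hypotheses, and you verify only two of them. Besides (a) approximate projectivity and (b) the approximate sum rule, it requires (c) that $\rho_k$ be approximately tracial on the algebra generated by $\{E_{\mu,0}^{(k)}\}_\mu$, i.e.\ $\abs{\tr(\rho_k(W_1W_2-W_2W_1))}\leqslant\delta'$ for all monomials of degree at most $m$. This condition is not automatic and cannot be dropped: the Gowers--Hatami-type stability underlying Theorem~\ref{th:gowers_hatami_algebras} is stability with respect to an (approximately) tracial state, and for a general non-tracial $\rho_k$ the conclusion fails. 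The paper establishes (c) in Lemma~\ref{lem:approximate_tracial_state} by an induction on the word length: synchronicity lets one commute a single $E_{\mu,0}^{(0)}$ past the rest of the word at a cost of $O(\sqrt{\delta})$ per step, by passing it through the state to the other $d-1$ parties via the operator $\sum_{\boldsymbol a:\sum_k a_k=d-r}E_{\overline{\boldsymbol\mu},\boldsymbol a}^{[1,d]}$ and back. This is exactly the same mechanism you already use in Step 1, so the gap is fillable with the tools you have on the table, but as written your Step 3 does not license the application of the stability theorem.
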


The proof is a straightforward generalization of the one presented in~\cite{Mancinska2024}, and it is based on the following result shown in the aforementioned article.
\begin{theorem}\label{th:gowers_hatami_algebras}
Let $N\geqslant 3$, and $x\in\Lambda_N$, with $x=b/d$ in lowest terms. Let $\{P_\mu\}$ be orthogonal projections satisfying~\eqref{eq:operators_sum_identity}. Then, for any $\varepsilon >0$, there exists $\delta>0$ and $m\in\mathbb N$ such that, for all $r\in\mathbb N$, any density matrix $\rho\in\mathcal M_r$ and any positive $M_0,\dots,M_{N-1}\in\mathcal M_r$ with norm less than $1$ that satisfy the following three properties:
\begin{itemize}
    \item[\textnormal{(a)}] $\norm{{M_\mu-M_\mu^2}}_\rho\leqslant\delta$, with $\mu=0,\dots N-1$,
    \item[\textnormal{(b)}] $\norm*{x\mathbb I_r-\sum_{\mu=0}^{N-1}M_\mu}_\rho\leqslant\delta$,
    \item[\textnormal{(c)}] $\abs*{\mathrm{Tr}(\rho (W_1W_2-W_2W_1)}\leqslant \delta$ for monomials $W_1$ and $W_2$ of degree at most $m$ in $M_0,\dots,M_{N-1}\in\mathcal M_r$,
\end{itemize}
there exist $s\in\mathbb N$ and an isometry $V:\mathbb C^r\rightarrow\mathbb C^d\otimes \mathbb C^s$ such that for all $\mu=0,\dots,N-1$ we have $\norm*{M_\mu-V^*(P_\mu\otimes \mathbb I_s)V}_\rho\leqslant \varepsilon.$
\end{theorem}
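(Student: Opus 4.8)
The statement is the quantitative counterpart of the rigidity part of Theorem~\ref{th:mancinska}, so the plan is to prove it as a \emph{weak (flexible) stability} statement for the defining relations: an approximate, tracial representation of the relations $Q_{\mu}=Q_{\mu}^{2}=Q_{\mu}^{*}$, $\sum_{\mu}Q_{\mu}=x\mathbb I$ is $\norm{\cdot}_{\rho}$-close, after a dilation, to the exact $d$-dimensional one. The conceptual input is Theorem~\ref{th:mancinska} itself: every family of $N$ orthogonal projections summing to $x\mathbb I$ on any Hilbert space is an amplification of the fixed $d$-dimensional family $\{P_{\mu}\}$, and $\{P_{\mu}\}$ generates all of $\mathcal M_{d}$; equivalently, the universal $C^{*}$-algebra of these relations is $\mathcal M_{d}$, which has a unique irreducible representation. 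Hypotheses (a)--(c) say precisely that $(M_{0},\dots,M_{N-1};\rho)$ is a $(\delta,m)$-approximate \emph{tracial} representation of this algebra in the seminorm $\norm{A}_{\rho}:=\sqrt{\operatorname{Tr}(\rho A^{*}A)}$: (a) and (b) are the defining relations up to $\delta$ (note $M_{\mu}=M_{\mu}^{*}$, the $M_{\mu}$ being positive), and (c) asserts that $\operatorname{Tr}(\rho\,\cdot\,)$ is within $\delta$ of being cyclic on all monomials in the $M_{\mu}$ of degree at most $m$ --- the ingredient that, in the approximate world, substitutes for the exact fact that $\mathcal M_{d}$ is simple.

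\paragraph{Reconstructing approximate matrix units.}
Because $\{P_{\mu}\}$ generates $\mathcal M_{d}$ and $\mathcal M_{d}$ is finite dimensional, there are noncommutative $*$-polynomials $q_{ij}$, $i,j\in[d]$, of degree at most some $m_{0}=m_{0}(N,x,d)$, with real bounded coefficients, such that $q_{ij}(P_{0},\dots,P_{N-1})=e_{ij}$ is a system of matrix units of $\mathcal M_{d}$; I would fix the $m$ of the statement to be a multiple of $m_{0}$ large enough to accommodate products of two such polynomials and the (trivial) polynomial expressing each $P_{\mu}$ in the $e_{ij}$. Put $\hat e_{ij}:=q_{ij}(M_{0},\dots,M_{N-1})$. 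Every matrix-unit identity $e_{ij}e_{kl}=\delta_{jk}e_{il}$, $e_{ij}^{*}=e_{ji}$, $\sum_{i}e_{ii}=\mathbb I$ is valid in $\mathcal M_{d}$; running a bounded-length derivation of each such identity on the $M_{\mu}$ and replacing each use of $M_{\mu}^{2}=M_{\mu}$ or $\sum_{\mu}M_{\mu}=x\mathbb I$ by (a) or (b), one generates error terms that can be rearranged --- using (c) to cyclically bring the offending factor next to $\rho$, at a cost $O(\delta)$ per move --- into a form bounded by $\norm{\cdot}_{\rho}$. This yields $\norm{\hat e_{ij}\hat e_{kl}-\delta_{jk}\hat e_{il}}_{\rho}$, $\norm{\hat e_{ij}^{*}-\hat e_{ji}}_{\rho}$ and $\norm{\sum_{i}\hat e_{ii}-\mathbb I}_{\rho}$ all $O(\operatorname{poly}(\delta))$, with constants depending only on $N,x,d,m$.

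\paragraph{Dilation and conclusion.}
With the ancilla $\mathbb C^{s}=\mathbb C^{r}$, define $V\colon\mathbb C^{r}\to\mathbb C^{d}\otimes\mathbb C^{s}$ by $Vv=\sum_{i\in[d]}\ket{i}\otimes\hat e_{0i}v$; then $V^{*}V=\sum_{i}\hat e_{0i}^{*}\hat e_{0i}$ is $\norm{\cdot}_{\rho}$-close to $\sum_{i}\hat e_{ii}\approx\mathbb I$, and (c) upgrades this to genuine closeness of $V^{*}V$ to $\mathbb I$ on $\operatorname{supp}(\rho)$ (rather than in a weighted sense that could degrade as $r$ grows). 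Polar-decomposing $V$ gives an exact isometry $\tilde V$ with $\norm{V-\tilde V}_{\rho}=O(\operatorname{poly}(\delta))$, and one then checks $\norm{\hat e_{ij}-\tilde V^{*}(e_{ij}\otimes\mathbb I_{s})\tilde V}_{\rho}=O(\operatorname{poly}(\delta))$ for all $i,j$. Propagating this through the fixed bounded-degree polynomial that writes $P_{\mu}$ in terms of the $e_{ij}$ gives $\norm{M_{\mu}-\tilde V^{*}(P_{\mu}\otimes\mathbb I_{s})\tilde V}_{\rho}=O(\operatorname{poly}(\delta))$. Since $m$ and all implicit constants depend only on $N,x,d$ and not on $r$, choosing $\delta$ small enough makes the right-hand side at most $\varepsilon$, which is the assertion.

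\paragraph{Main obstacle.}
The crux is the dilation step: promoting an approximate system of matrix units measured against an \emph{arbitrary} state $\rho$ --- not a trace --- to an exact $*$-representation after a dilation, with an error bound $\varepsilon(\delta)$ that is \textbf{uniform in the ambient dimension} $r$. This is exactly where (c) is indispensable and why it is stated for monomials of bounded degree: without the approximate traciality it provides, the inflated map $V$ need not even be an approximate isometry, and any argument that loses a factor growing with $r$ (for instance a naive operator-norm rounding of the $M_{\mu}$, whose spectra may cluster near $1/2$) would be fatal, the whole theorem being about dimension-independence; likewise, the polar-decomposition correction must be controlled in $\norm{\cdot}_{\rho}$, not in operator norm. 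A secondary, technical point is making the matrix-unit reconstruction \emph{effective}: securing a bounded degree $m_{0}(N,x,d)$ and bounded coefficients for the polynomials $q_{ij}$, which I would deduce from the finite dimensionality of $\mathcal M_{d}$ by a stabilization argument rather than an explicit formula --- except when $N=d+1$ or $N=4$, where the explicit families of Section~\ref{sec:explicit_construction}, and for $N=4$ the commutator computations of Section~\ref{Section:Algebraic_approach}, furnish such polynomials directly.
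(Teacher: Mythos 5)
You should first note that the paper does not prove this statement at all: it is imported verbatim (with notational adjustments) from Ref.~\cite{Mancinska2024}, where it appears as a Gowers--Hatami-type stability theorem, and the present paper explicitly uses it as a black box in Section~\ref{sec:self-testing_robust}. So there is no in-paper proof to compare against; the relevant comparison is with the proof in that reference, and your sketch does follow its general architecture (approximate traciality of $\rho$ on bounded-degree monomials, reconstruction of an approximate finite-dimensional structure from the fact that $\{P_\mu\}$ generates $\mathcal M_d$, and a dilation/polar-decomposition step that is uniform in the ambient dimension $r$). Your identification of condition (c) as the substitute for traciality, and of dimension-independence as the crux, is exactly right.

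As a standalone proof, however, the proposal has two genuine soft spots. First, the ``bounded-length derivation'' step is where all the analytic work lives and is only asserted: the seminorm $\norm{\cdot}_\rho$ satisfies $\norm{AB}_\rho\leqslant\norm{A}_\infty\norm{B}_\rho$ but gives no control of $\norm{BA}_\rho$ in terms of $\norm{A}_\rho$, so every time an error term appears on the left of a monomial you must invoke (c) to cycle it to the right, and one has to verify that the total number of such moves (hence the degree bound $m$ and the accumulated error) depends only on $N,x,d$ and not on $r$; this is precisely the content of the quantitative lemmas in~\cite{Mancinska2024} and cannot be waved through. Second, your claim that (c) ``upgrades'' the $\norm{\cdot}_\rho$-smallness of $V^*V-\mathbb I$ to genuine closeness on $\mathrm{supp}(\rho)$ is not justified and, as stated, is likely false: approximate traciality does not yield operator-norm control on the support of $\rho$. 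Fortunately it is also not needed---the standard Gowers--Hatami argument corrects $V$ to an exact isometry with the error measured only in the $\rho$-weighted norm, which is all the conclusion requires---but as written this step is a gap rather than a harmless shortcut. The remaining ingredient, a degree bound $m_0(N,x,d)$ for the polynomials $q_{ij}$ realizing matrix units, is fine: the filtration of the $*$-algebra generated by the $M_\mu$ by degree stabilizes after at most $d^2$ steps since the limit is $\mathcal M_d$.
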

We shall use the norm $\norm{X}_\rho=\Tr(\rho X^*X)^{1/2}$ introduced in the previous theorem. In what follows, we consider a quantum strategy $p$ generated by $d$ parties, each associated with a Hilbert space $\mathcal H_0,\mathcal H_1,\dots,\mathcal H_{d-1}$, respectively. The parties share a quantum state
\begin{equation}\label{eq:generic_state_2}
    \ket{\Psi}\in \bigotimes_{k=0}^{d-1} \mathcal H_{k}
\end{equation}
and $N$ POVM, in each subspace ${\mathcal H_{k}}$, of the form:
\begin{equation}\label{eq:generic_measurement_2}
    \{E_{\mu,0}^{(k)},E_{\mu,0}^{(k)}\}\subset\mathcal B(\mathcal H_k),\quad   \mu\in[N],\quad  k\in[d].
\end{equation}
They generate a correlation
\begin{equation}\label{eq:correlation_test_2}
    {p}(\boldsymbol a|\boldsymbol\mu)=\mel{\Psi}{E_{\boldsymbol \mu,\boldsymbol a}}{\Psi},\quad \boldsymbol a\in\{0,1\}^d, \quad \boldsymbol \mu\in[N]^d.
\end{equation}
Assume that this correlation is $\delta$-close with respect to the norm $1$ to the canonical correlation $\tilde p(\boldsymbol a|\boldsymbol {\mu})$ in accordance with Definition~\ref{def:canonical_strategy}, that is,
\begin{equation}\label{eq:closeness_p_canonical}
    \norm{p-\tilde p}_1=\sum_{\boldsymbol a\in\{0,1\}^d}\sum_{\boldsymbol \mu\in[N]^d}\abs{p(\boldsymbol a|\boldsymbol\mu)-\tilde p(\boldsymbol a|\boldsymbol\mu)}\leqslant\delta.
\end{equation}
To continue, we shall prove three additional lemmas, which correspond to Lemma 3.5 (a), Lemma 3.7 and Lemma 6.3 of~\cite{Mancinska2024}.

\begin{lemma}[Approximate synchronous correlation]\label{lem:approx_sync_cor}
    Let $N\geqslant 3$ and $x\in\Lambda_N$, and let $p(\boldsymbol a|\boldsymbol \mu)$ be a quantum correlation of the form~\eqref{eq:correlation_test_2} generated by the state $\ket{\Psi}$~\eqref{eq:generic_state_2} and the measurements $\{E_{\mu,0}^{(k)},E_{\mu,0}^{(k)}\}$~\eqref{eq:generic_measurement_2}, and also suppose that they are $\delta$ close to $\tilde p(\boldsymbol a|\boldsymbol \mu)$ as in Equation~\eqref{eq:closeness_p_canonical}. Then, according to~\eqref{eq:correlation_test_2},
    \begin{equation}
        \bigg\lVert {\bigg(E_{\mu,0}^{(0)}\otimes \bigotimes_{i=1}^{d-1}\mathbb I_{\mathcal H_i} -\mathbb I_{\mathcal H_{0}}\otimes \sum_{\substack{\boldsymbol a\in\{0,1\}^{d-1}\\\sum_{k}a_k=d-r}} E_{\overline{\boldsymbol{\mu}},\boldsymbol a}^{[1,d]}\bigg)\ket{\Psi} }\bigg\rVert \leqslant\sqrt{\delta},
    \end{equation}
    and
    \begin{equation}
        \bigg\lVert {\bigg(E_{\mu,0}^{(0)}\otimes \bigotimes_{i=1}^{d-1}\mathbb I_{\mathcal H_i} -E_{\mu,0}^{(0)}\otimes \sum_{\substack{\boldsymbol a\in\{0,1\}^{d-1}\\\sum_{k}a_k=d-r}} E_{\overline{\boldsymbol{\mu}},\boldsymbol a}^{[1,d]}\bigg)\ket{\Psi} }\bigg\rVert \leqslant\sqrt{\delta},
    \end{equation}
    for a fixed $\mu$. Similar properties hold for different $k$ in $E_{\mu,0}^{(k)}$.
\end{lemma}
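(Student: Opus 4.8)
The plan is to reduce both estimates to the \emph{exact} synchronous property of the canonical correlation established in Corollary~\ref{cor:synconous_property} — namely that $\tilde p(\boldsymbol a\mid\overline{\boldsymbol\mu})=0$ whenever the number of $1$'s in $\boldsymbol a$ differs from $d-r$ — together with two elementary facts: (i) for a POVM element $0\le A\le\mathbb I$ one has $\norm{A\ket{\Psi}}^2=\mel{\Psi}{A^2}{\Psi}\le\mel{\Psi}{A}{\Psi}$; and (ii) if $A$ is a sub-sum of a POVM all of whose outcome probabilities in the canonical realization vanish, then $\mel{\Psi}{A}{\Psi}$ is a partial sum of $p-\tilde p$ over those outcomes and hence at most $\norm{p-\tilde p}_1\le\delta$. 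In both cases I would first expand each identity factor $\mathbb I_{\mathcal H_i}=E_{\mu,0}^{(i)}+E_{\mu,1}^{(i)}$, so that the operators appearing in the lemma become signed sums of the product operators $E_{\overline{\boldsymbol\mu},\boldsymbol a}$; the differences on the left-hand sides then collect precisely the terms indexed by the ``bad'' strings, i.e.\ those with $\sum_k a_k\ne d-r$, on which $\tilde p(\,\cdot\mid\overline{\boldsymbol\mu})$ is identically zero by Corollary~\ref{cor:synconous_property}.

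For the second inequality set $F=E_{\mu,0}^{(0)}\otimes\bigotimes_{i=1}^{d-1}\mathbb I_{\mathcal H_i}$ and $G'=E_{\mu,0}^{(0)}\otimes\sum_{\boldsymbol a:\sum_k a_k=d-r}E_{\overline{\boldsymbol\mu},\boldsymbol a}^{[1,d]}$. Expanding the identities on parties $1,\dots,d-1$ yields $F-G'=E_{\mu,0}^{(0)}\otimes\sum_{\boldsymbol a:\sum_k a_k\ne d-r}E_{\overline{\boldsymbol\mu},\boldsymbol a}^{[1,d]}$, a tensor product of two sub-sums of POVMs, hence $0\le F-G'\le\mathbb I$. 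Then $\norm{(F-G')\ket{\Psi}}^2\le\mel{\Psi}{F-G'}{\Psi}=\sum_{\boldsymbol a:\sum_k a_k\ne d-r}p\big((0,\boldsymbol a)\mid\overline{\boldsymbol\mu}\big)$; every string $(0,\boldsymbol a)$ here carries a number of $1$'s equal to $\sum_k a_k\ne d-r$, so the corresponding $\tilde p$ vanishes and the sum equals $\sum(p-\tilde p)\le\norm{p-\tilde p}_1\le\delta$. This gives $\norm{(F-G')\ket{\Psi}}\le\sqrt\delta$, which is exactly the second displayed bound.

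For the first inequality it remains to replace the leading $E_{\mu,0}^{(0)}$ by $\mathbb I_{\mathcal H_0}$, i.e.\ to pass from $G'$ to $G=\mathbb I_{\mathcal H_0}\otimes\sum_{\boldsymbol a:\sum_k a_k=d-r}E_{\overline{\boldsymbol\mu},\boldsymbol a}^{[1,d]}$. Writing $\mathbb I_{\mathcal H_0}=E_{\mu,0}^{(0)}+E_{\mu,1}^{(0)}$ gives $G-G'=Q:=E_{\mu,1}^{(0)}\otimes\sum_{\boldsymbol a:\sum_k a_k=d-r}E_{\overline{\boldsymbol\mu},\boldsymbol a}^{[1,d]}$ with $0\le Q\le\mathbb I$, and $\mel{\Psi}{Q}{\Psi}=\sum_{\boldsymbol a:\sum_k a_k=d-r}p\big((1,\boldsymbol a)\mid\overline{\boldsymbol\mu}\big)$; the string $(1,\boldsymbol a)$ now has $d-r+1$ ones, so again $\tilde p$ vanishes and $\mel{\Psi}{Q}{\Psi}\le\delta$ — moreover the outcome strings contributing to $\mel{\Psi}{F-G'}{\Psi}$ (first entry $0$) are disjoint from those contributing to $\mel{\Psi}{Q}{\Psi}$ (first entry $1$), so in fact $\mel{\Psi}{F-G'}{\Psi}+\mel{\Psi}{Q}{\Psi}\le\delta$. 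The first estimate then follows from $(F-G)\ket{\Psi}=(F-G')\ket{\Psi}-Q\ket{\Psi}$ together with the two bounds just obtained, and the statement for a general party $k$ follows by relabeling the tensor factors. This is a direct multipartite transcription of Lemma~3.5(a) of~\cite{Mancinska2024}; I expect the only delicate point to be purely bookkeeping — tracking which outcome strings survive the synchronous constraint — together with the fact that $F-G$ is a \emph{difference} of positive operators, so one must combine the two sub-POVM estimates rather than bound a single expectation.
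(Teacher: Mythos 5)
Your decomposition and bookkeeping coincide with the paper's own proof: both arguments expand the identity factors into sub-sums of the $E_{\overline{\boldsymbol\mu},\boldsymbol a}$, identify the outcome strings annihilated by the exact synchronous property of $\tilde p$ (Corollary~\ref{cor:synconous_property}), and bound the surviving terms by $\norm{p-\tilde p}_1\leqslant\delta$. Your treatment of the second inequality, via $0\leqslant F-G'\leqslant\mathbb I$ and $\norm{(F-G')\ket{\Psi}}^2\leqslant\mel{\Psi}{F-G'}{\Psi}$, is exactly the paper's computation.

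The one genuine gap is the final combination step for the first inequality. From $(F-G)\ket{\Psi}=(F-G')\ket{\Psi}-Q\ket{\Psi}$ and the two separate bounds $\norm{(F-G')\ket{\Psi}}\leqslant\sqrt{a}$, $\norm{Q\ket{\Psi}}\leqslant\sqrt{b}$ with $a+b\leqslant\delta$, the triangle inequality only yields $\sqrt{a}+\sqrt{b}\leqslant\sqrt{2\delta}$ (take $a=b=\delta/2$), not the claimed $\sqrt{\delta}$. To recover $\sqrt{\delta}$ you must control the cross term rather than discard it. Either expand
\begin{equation*}
\norm{(F-G)\ket{\Psi}}^2=\mel{\Psi}{F^2}{\Psi}+\mel{\Psi}{G^2}{\Psi}-2\mel{\Psi}{FG}{\Psi}
\end{equation*}
directly (legitimate since $F$ and $G$ act on disjoint tensor factors and hence commute), bound $F^2\leqslant F$ and $G^2\leqslant G$, and keep the cross term exact --- this is what the paper does, and it produces precisely the sum over the two disjoint families of ``bad'' strings you identified, hence $\leqslant\norm{p-\tilde p}_1\leqslant\delta$; or note that $F-G'$ and $Q$ are commuting positive operators, so $(F-G')Q\geqslant 0$ and therefore $\norm{(F-G)\ket{\Psi}}^2\leqslant\norm{(F-G')\ket{\Psi}}^2+\norm{Q\ket{\Psi}}^2\leqslant a+b\leqslant\delta$. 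With either repair your argument is complete and identical in substance to the paper's.
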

Observe that, in the exact case $\delta=0$, Lemma~\ref{lem:approx_sync_cor} is a consequence of Corollary~\ref{cor:synconous_property}.
\begin{proof}
    Observe that
    \begin{equation}
        Q:=\sum_{\substack{\boldsymbol a\in \{0,1\}^{d-1}\\\sum_{k}a_k=d-r}} E_{\overline{\boldsymbol{\mu}},\boldsymbol a}^{[1,d]} \leqslant\bigotimes_{i=1}^{d-1}\mathbb I_{\mathcal H_i}.
    \end{equation}
    Then
    \begin{align}
        &\bigg\lVert(E_{\mu,0}^{(0)}\otimes \bigotimes_{i=1}^{d-1}\mathbb I_{\mathcal H_i}-\mathbb I_{\mathcal H_0}\otimes Q)\ket{\Psi}\bigg\rVert^2 =\mel*{\Psi}{E_{\mu,0}^{(0)2}\otimes \bigotimes_{i=1}^{d-1}\mathbb I_{\mathcal H_i}}{\Psi}+ \mel{\Psi}{\mathbb I_{\mathcal H_0}\otimes Q^2}{\Psi}-2\mel{\Psi}{E_{\mu,0}^{(0)}\otimes Q}{\Psi}\nonumber\\
        &\leqslant  \mel*{\Psi}{E_{\mu,0}^{(0)}\otimes \bigotimes_{i=1}^{d-1}\mathbb I_{\mathcal H_i}}{\Psi}+ \mel{\Psi}{\mathbb I_{\mathcal H_0}\otimes Q}{\Psi}-2\mel{\Psi}{E_{\mu,0}^{(0)}\otimes Q}{\Psi}\nonumber\\
        &=\sum_{{\boldsymbol a\in\{0\}\times \{0,1\}^{d-1}}} p(\boldsymbol a|\overline{\boldsymbol{\mu}}) + \sum_{\substack{\boldsymbol a\in \{0,1\}^{d}\\\sum_{k=1}^{d-1}a_k=d-r}}p(\boldsymbol a|\overline{\boldsymbol{\mu}})- 2\sum_{\substack{\boldsymbol a\in\{0\}\times \{0,1\}^{d-1}\\\sum_{k=1}^{d-1}a_k=d-r}}p(\boldsymbol a|\overline{\boldsymbol{\mu}})\leqslant \norm{p-\tilde p}_1\leqslant \delta.
    \end{align}
    In the last line, note that $\tilde {p}$ vanishes on the elements being summed, and their contribution does not affect the result. Similarly
    \begin{align}        
        &\bigg\lVert(E_{\mu,0}^{(0)}\otimes \bigotimes_{i=1}^{d-1}\mathbb I_{\mathcal H_i}-E_{\mu,0}^{(0)}\otimes Q)\ket{\Psi}\bigg\rVert^2 = \mel{\Psi}{E_{\mu,0}^{(0)2}\otimes(\bigotimes_{i=1}^{d-1} \mathbb I_{\mathcal H_i} - Q)^2}{\Psi}\nonumber\\ &\leqslant \mel{\Psi}{E_{\mu,0}^{(0)}\otimes(\bigotimes_{i=1}^{d-1} \mathbb I_{\mathcal H_i} - Q)}{\Psi}=\mel*{\Psi}{E_{\mu,0}^{(0)}\otimes \bigotimes_{i=1}^{d-1}\mathbb I_{\mathcal H_i}}{\Psi} - \mel{\Psi}{E_{\mu,0}^{(0)}\otimes Q}{\Psi} \nonumber \\ &=\sum_{{\boldsymbol a\in\{0\}\times \{0,1\}^{d-1}}} p(\boldsymbol a|\overline{\boldsymbol{\mu}}) -\sum_{\substack{\boldsymbol a\in\{0\}\times \{0,1\}^{d-1}\\\sum_{k=1}^{d-1}a_k=d-r}}p(\boldsymbol a|\overline{\boldsymbol{\mu}})\leqslant \norm{p-\tilde p}_1\leqslant \delta
    \end{align}
\end{proof}

\begin{lemma}\label{lem:approx_sum_operators}
    Let $N\geqslant 3$ and $x\in\Lambda_N$, and let $p$ be a quantum correlation in the form~\eqref{eq:correlation_test_2} generated by the state~\eqref{eq:generic_state_2} and the measurements~\eqref{eq:generic_measurement_2}, and suppose that they are $\delta$ close to $\tilde p(\boldsymbol a|\boldsymbol \mu )$ as in Equation~\eqref{eq:closeness_p_canonical}. Then, for all $\mu\in[N]$ and $k\in[d]$:
    \begin{equation}
       \norm{x\mathbb I_{\mathcal H_k}-\sum_{\mu=0}^{N-1}E_{\mu,0}^{(k)}}_\rho < C{\delta}^{1/4}.
    \end{equation}
    with
    \begin{equation}
        C=(1+2x)\sqrt{\delta} +N^2.
    \end{equation}
\end{lemma}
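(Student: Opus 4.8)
The plan is to reduce the bound to two scalar estimates on the operator $E^{(k)}:=\sum_{\mu=0}^{N-1}E_{\mu,0}^{(k)}\in\mathcal B(\mathcal H_k)$, namely that $\mathrm{Tr}(\rho E^{(k)})$ lies within $\delta$ of $x$ and that $\mathrm{Tr}\bigl(\rho (E^{(k)})^2\bigr)$ lies within $O(\sqrt\delta)$ of $x^2$, where $\rho=\rho_k$ is the reduced density matrix of $\ketbra{\Psi}$ on $\mathcal H_k$. Indeed, expanding the square gives
\[
    \norm{x\mathbb I_{\mathcal H_k}-E^{(k)}}_\rho^2 = x^2-2x\,\mathrm{Tr}(\rho E^{(k)})+\mathrm{Tr}\bigl(\rho (E^{(k)})^2\bigr),
\]
so writing $\mathrm{Tr}(\rho E^{(k)})=x+\epsilon_1$ and $\mathrm{Tr}(\rho (E^{(k)})^2)=x^2+\epsilon_2$ yields $\norm{x\mathbb I_{\mathcal H_k}-E^{(k)}}_\rho^2=\epsilon_2-2x\epsilon_1$; an $O(\sqrt\delta)$ bound on the right-hand side then gives the claimed $O(\delta^{1/4})$ bound after a square root, and tracking constants reproduces $C=(1+2x)\sqrt\delta+N^2$. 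Throughout I take $k=0$; the argument for the other parties is identical after relabelling. Note that, unlike in Part~III of the proof of Theorem~\ref{th:self_testing_measurement}, no assumption on the support of $\rho$ is needed here.

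The first estimate is immediate. Expanding the identities on parties $1,\dots,d-1$,
\[
    \mathrm{Tr}(\rho_0 E^{(0)})=\sum_{\mu=0}^{N-1}\mel{\Psi}{E_{\mu,0}^{(0)}\otimes\mathbb I^{\otimes(d-1)}}{\Psi}=\sum_{\mu=0}^{N-1}\sum_{\boldsymbol a:\,a_0=0}p(\boldsymbol a\,|\,\overline{\boldsymbol{\mu}}),
\]
and the same linear functional applied to $\tilde p$ equals $\sum_\mu\mel{\Psi_{\mathrm S}}{P_\mu\otimes\mathbb I^{\otimes(d-1)}}{\Psi_{\mathrm S}}=\mel{\Psi_{\mathrm S}}{\bigl(\sum_\mu P_\mu\bigr)\otimes\mathbb I^{\otimes(d-1)}}{\Psi_{\mathrm S}}=x$ by~\eqref{eq:operators_sum_identity}. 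Since the inputs $\overline{\boldsymbol{\mu}}$ are pairwise distinct, $|\epsilon_1|=\bigl|\mathrm{Tr}(\rho_0 E^{(0)})-x\bigr|\le\norm{p-\tilde p}_1\le\delta$.

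The heart of the matter — and the step I expect to be the main obstacle, as it is where the passage from a $\delta$- to a $\sqrt\delta$-type error occurs — is the second estimate. I would expand $\mathrm{Tr}\bigl(\rho_0 (E^{(0)})^2\bigr)=\sum_{\mu,\nu=0}^{N-1}\mel{\Psi}{E_{\nu,0}^{(0)}E_{\mu,0}^{(0)}\otimes\mathbb I^{\otimes(d-1)}}{\Psi}$ and, for each ordered pair $(\mu,\nu)$, invoke the approximate synchronicity of Lemma~\ref{lem:approx_sync_cor} to replace $\bigl(E_{\mu,0}^{(0)}\otimes\mathbb I^{\otimes(d-1)}\bigr)\ket{\Psi}$ by $\bigl(\mathbb I_{\mathcal H_0}\otimes Q_\mu\bigr)\ket{\Psi}$, where $Q_\mu=\sum_{\boldsymbol a'\in\{0,1\}^{d-1},\,\sum_k a'_k=d-r}E_{\overline{\boldsymbol{\mu}},\boldsymbol a'}^{[1,d]}$. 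By Cauchy--Schwarz, together with $E_{\nu,0}^{(0)2}\le E_{\nu,0}^{(0)}$ (so that $\norm{(E_{\nu,0}^{(0)}\otimes\mathbb I^{\otimes(d-1)})\ket{\Psi}}\le 1$), each replacement changes the corresponding summand by at most $\sqrt\delta$, for a total error at most $N^2\sqrt\delta$. The remaining quantity $\sum_{\mu,\nu}\mel{\Psi}{E_{\nu,0}^{(0)}\otimes Q_\mu}{\Psi}$ is again a linear functional of $p$ — explicitly, $\sum_{\mu,\nu}\sum_{\boldsymbol a'}p\bigl((0,\boldsymbol a')\,|\,(\nu,\mu,\dots,\mu)\bigr)$ over strings $\boldsymbol a'\in\{0,1\}^{d-1}$ with $\sum_k a'_k=d-r$ — and I would compare it with the same functional on $\tilde p$, which by the \emph{exact} synchronicity of the Slater state (Corollary~\ref{cor:synconous_property}) equals $\sum_{\mu,\nu}\mel{\Psi_{\mathrm S}}{P_\nu P_\mu\otimes\mathbb I^{\otimes(d-1)}}{\Psi_{\mathrm S}}=\mel{\Psi_{\mathrm S}}{\bigl(\sum_\nu P_\nu\bigr)\bigl(\sum_\mu P_\mu\bigr)\otimes\mathbb I^{\otimes(d-1)}}{\Psi_{\mathrm S}}=x^2$. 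Because the input strings $(\nu,\mu,\dots,\mu)$ are pairwise distinct for $d\ge 2$, this last comparison costs at most $\norm{p-\tilde p}_1\le\delta$. Collecting the errors, $|\epsilon_2|\le N^2\sqrt\delta+\delta$.

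Putting the two estimates together,
\[
    \norm{x\mathbb I_{\mathcal H_0}-E^{(0)}}_\rho^2=\epsilon_2-2x\epsilon_1\le N^2\sqrt\delta+\delta+2x\delta=\bigl((1+2x)\sqrt\delta+N^2\bigr)\sqrt\delta=C\sqrt\delta,
\]
so that $\norm{x\mathbb I_{\mathcal H_0}-E^{(0)}}_\rho\le\sqrt C\,\delta^{1/4}\le C\,\delta^{1/4}$, the last inequality because $C\ge N^2\ge 1$. The same computation applies verbatim to every $\mathcal H_k$, which proves the lemma. The one delicate point in the whole argument is the bookkeeping: the comparisons must be arranged so that $\norm{p-\tilde p}_1$ is spent only once — rather than once per pair $(\mu,\nu)$ — in each linear-functional step, which is precisely why the distinctness of the relevant input strings is used.
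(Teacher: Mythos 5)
Your proposal is correct and follows essentially the same route as the paper's proof: both reduce the bound to the first- and second-moment estimates $\mathrm{Tr}(\rho_k E^{(k)})\approx_\delta x$ and $\mathrm{Tr}(\rho_k (E^{(k)})^2)\approx_{N^2\sqrt{\delta}+\delta} x^2$, with the second obtained by invoking Lemma~\ref{lem:approx_sync_cor} once per pair $(\mu,\nu)$ and then comparing the resulting linear functional of $p$ with its value on $\tilde p$. Your write-up is, if anything, slightly more explicit than the paper's on the bookkeeping (the distinctness of the input strings and the final step $\sqrt{C}\,\delta^{1/4}\leqslant C\,\delta^{1/4}$), but the argument is the same.
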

\begin{proof}
    We begin with $k=0$. Let $p_0(a_0|\mu)$ and $\tilde p_0(a_0|\mu)$ be the marginals of $p(\boldsymbol a|\boldsymbol \mu )$ and $\tilde p(\boldsymbol a|\boldsymbol \mu )$ with respect to the first system. Then,
    \begin{equation}
        \abs{\sum_{\mu=0}^{N-1}p_{0}(0|\mu)-x}=\abs{\sum_{\mu=0}^{N-1}(p_{0}(0|\mu)-\tilde p_{0}(0|\mu))}\leqslant \norm{p-\tilde p}_1\leqslant \delta,
    \end{equation}
    which implies that
    \begin{equation}
        \sum_{\mu=0}^{N-1}p_{0}(0|\mu)\geqslant x-\delta.
    \end{equation}
    Similarly,
    \begin{equation}
        \abs{\sum_{\mu\nu=0}^{N-1}p_{01}(0,0|\mu,\nu)-x^2}\leqslant \delta,
    \end{equation}
    with $p_{01}(a_0,a_1|\mu,\nu)$ being the marginals over the first two parties. Furthermore, using Lemma~\ref{lem:approx_sync_cor}
    \begin{align}
        \sum_{\mu,\nu=0}^{N-1}\mel*{\Psi}{E_{\mu,0}^{(0)}E_{\nu,0}^{(0)}\otimes\bigotimes_{k\in[1,d]} \mathbb I_{\mathcal H_k}}{\Psi}&\approx_{N^2\sqrt{\delta}} \sum_{\mu,\nu=0}^{N-1}\mel{\Psi}{E_{\mu,0}^{(0)}\otimes \sum_{\substack{\boldsymbol a\in \{0,1\}^{d-1}\\\sum_{k}a_k=d-r}} E_{\overline{\boldsymbol{\nu}},\boldsymbol a}^{[1,d]}}{\Psi}\nonumber\\
        &\approx_\delta \sum_{\mu,\nu=0}^{N-1}\mel{\Psi_{\mathrm S}}{\tilde E_{\mu,0}^{(0)}\otimes \sum_{\substack{\boldsymbol a\in \{0,1\}^{d-1}\\\sum_{k}a_k=d-r}} \tilde E_{\overline{\boldsymbol{\nu}},\boldsymbol a}^{[1,d]}}{\Psi_{\mathrm S}}= x^2.
    \end{align}
    In the last line, we could substitute the sum with $\ket{\Psi}$ and $E_{\boldsymbol\mu,\boldsymbol a}$ with $\ket{\Psi_{\mathrm S}}$ and $\tilde E_{\boldsymbol\mu,\boldsymbol a}$, as a consequence of~\eqref{eq:closeness_p_canonical}. Furthermore, we used the notation $a\approx_\varepsilon b\iff |a-b|<\varepsilon$.
    \begin{align}
        \bigg\lVert{x \mathbb I_{\mathcal H_k}-\sum_{\mu=0}^{N-1}E_{\mu,0}^{(k)}}\bigg\rVert_{\rho_0}^2&=x^2+\sum_{\mu,\nu=0}^{N-1}\mel*{\Psi}{E_{\mu,0}^{(0)}E_{\nu,0}^{(0)}\otimes \bigotimes_{k\in[1,d]}\mathbb I_{\mathcal H_k}}{\Psi}-2x\sum_{\mu=0}^{N-1}p_0(0|\mu)\\ &\leqslant (1+2x)\delta +N^2\sqrt{\delta}.
    \end{align}
\end{proof}
\begin{lemma}[Approximately tracial states]\label{lem:approximate_tracial_state}
    In the hypotheses of Lemma~\ref{lem:approx_sync_cor}, let $\rho_0$ be the reduced density matrix of the state $\ket{\Psi}$ in Equation~\eqref{eq:generic_state_2} corresponding to the first system. Then, for all $\ell \in\mathbb N$
    \begin{equation}
        |\tr(\rho_0(WX-XW))|\leqslant 2\ell \sqrt{\delta},
    \end{equation}
    with $X$ being a generic element from the $C^*$-algebra generated by $\{E_\mu^{(0)}\}_{\mu=0,\dots,N-1}$ with norm $\norm{X}\leqslant 1$, and $W$ being a word of order $\ell$ in $\{E_\mu^{(0)}\}_{\mu=0,\dots,N-1}$. Similar properties hold for the other spaces $\mathcal H_k$ for $k\in[1,d]$.
\end{lemma}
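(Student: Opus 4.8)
The plan is to promote the approximate synchronicity of Lemma~\ref{lem:approx_sync_cor} from single measurement operators to arbitrary words in the $E_{\mu,0}^{(0)}$, and then to use it to transport a word of order $\ell$ from one side of a product to the other at a total cost of $2\ell\sqrt{\delta}$. I carry out the argument for the first system; the other systems are treated identically with the corresponding versions of Lemma~\ref{lem:approx_sync_cor}. For $\mu\in[N]$ set
\[
    Q_\mu:=\sum_{\substack{\boldsymbol a\in\{0,1\}^{d-1}\\ \sum_{k}a_k=d-r}} E_{\overline{\boldsymbol{\mu}},\boldsymbol a}^{[1,d]},
\]
an operator on $\mathcal H_1\otimes\dots\otimes\mathcal H_{d-1}$ with $0\leqslant Q_\mu\leqslant\mathbb I$, since it is a sum of tensor products of POVM elements. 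Lemma~\ref{lem:approx_sync_cor} states that $\left\|\bigl(E_{\mu,0}^{(0)}\otimes\mathbb I-\mathbb I\otimes Q_\mu\bigr)\ket{\Psi}\right\|\leqslant\sqrt{\delta}$, and, because $E_{\mu,0}^{(0)}$ and $Q_\mu$ are self-adjoint, the same bound holds after taking adjoints, i.e.\ on the bra side.

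The first step is a \emph{word-replacement} estimate, proved by induction on $\ell$: for every word $W=E_{\mu_1,0}^{(0)}E_{\mu_2,0}^{(0)}\cdots E_{\mu_\ell,0}^{(0)}$ of order $\ell$,
\[
    \left\|(W\otimes\mathbb I)\ket{\Psi}-(\mathbb I\otimes\widehat W)\ket{\Psi}\right\|\leqslant\ell\sqrt{\delta},\qquad \widehat W:=Q_{\mu_\ell}\cdots Q_{\mu_2}Q_{\mu_1}.
\]
The base case $\ell=1$ is exactly Lemma~\ref{lem:approx_sync_cor}. For the inductive step, write $W=E_{\mu_1,0}^{(0)}W'$ with $W'$ of order $\ell-1$; applying $E_{\mu_1,0}^{(0)}\otimes\mathbb I$ to the estimate for $W'$ leaves the error unchanged because $\norm{E_{\mu_1,0}^{(0)}}\leqslant1$, the factor $E_{\mu_1,0}^{(0)}\otimes\mathbb I$ commutes with $\mathbb I\otimes\widehat{W'}$ as they act on disjoint tensor factors, and one further use of Lemma~\ref{lem:approx_sync_cor} replaces $E_{\mu_1,0}^{(0)}\otimes\mathbb I$ by $\mathbb I\otimes Q_{\mu_1}$ at an extra cost of $\sqrt{\delta}$ (using $\norm{\widehat{W'}}\leqslant1$), with $\widehat{W'}Q_{\mu_1}=\widehat W$. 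Since every operator inserted or commuted past has norm at most one, the errors accumulate additively, giving $(\ell-1)\sqrt{\delta}+\sqrt{\delta}=\ell\sqrt{\delta}$.

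Finally, fix $X$ in the $C^*$-algebra generated by $\{E_{\mu,0}^{(0)}\}$ with $\norm{X}\leqslant1$, and use $\tr(\rho_0 A)=\mel{\Psi}{A\otimes\mathbb I}{\Psi}$ (the identity acting on systems $1,\dots,d-1$). Moving $X$ onto the bra in $\tr(\rho_0 XW)$ and then applying the word-replacement estimate to $W$, and using that $\mathbb I\otimes\widehat W$ commutes with $X\otimes\mathbb I$, gives $\bigl|\tr(\rho_0 XW)-\mel{\Psi}{X\otimes\widehat W}{\Psi}\bigr|\leqslant\ell\sqrt{\delta}$. On the other side, move $W$ onto the bra in $\tr(\rho_0 WX)$; since the $E_{\mu,0}^{(0)}$ are self-adjoint, $W^{*}$ is again a word of order $\ell$, and its mirror equals $(\widehat W)^{*}$ because the $Q_\mu$ are self-adjoint, so the word-replacement estimate yields $\bigl|\tr(\rho_0 WX)-\mel{\Psi}{X\otimes\widehat W}{\Psi}\bigr|\leqslant\ell\sqrt{\delta}$ as well. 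Subtracting the two bounds gives $\bigl|\tr(\rho_0(WX-XW))\bigr|\leqslant2\ell\sqrt{\delta}$, and the same reasoning applies verbatim to each $\mathcal H_k$, $k\in[1,d]$.

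I do not expect a genuinely hard step: the statement is entirely a bookkeeping consequence of Lemma~\ref{lem:approx_sync_cor}. The one thing that requires care is precisely that bookkeeping — keeping $\widehat W$ in the correct reversed order on the complement of the first system, tracking that the telescoping errors stay linear in $\ell$ (no amplification, since all inserted operators have norm $\leqslant 1$), and verifying that after the replacements both $\tr(\rho_0 WX)$ and $\tr(\rho_0 XW)$ are being compared against the \emph{same} overlap $\mel{\Psi}{X\otimes\widehat W}{\Psi}$, so that only the accumulated errors survive the subtraction.
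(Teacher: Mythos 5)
Your proof is correct and follows essentially the same route as the paper's: both arguments induct on the word length and use Lemma~\ref{lem:approx_sync_cor} to move letters of $W$ between the first system and its complement, paying $\sqrt{\delta}$ per application for a total of $2\ell\sqrt{\delta}$. The only difference is organizational — the paper cycles one letter at a time around the trace ($\tr(\rho_0 E_{\mu,0}^{(0)}W'X)\approx_{2\sqrt{\delta}}\tr(\rho_0 W'XE_{\mu,0}^{(0)})$, then invokes the induction hypothesis on $W'$), whereas you first transport the entire word onto the complementary registers and compare both orderings of $WX$ and $XW$ against the common quantity $\mel{\Psi}{X\otimes\widehat W}{\Psi}$.
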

\begin{proof}
    The property holds for $\ell=0$, in which case $W=\lambda\mathbb I_{\mathcal H_{0}}$ with $\lambda\in\mathbb C$. Suppose the property holds for $\ell-1$, and take $W=E_{\mu,0}^{(0)}W'$, with $W'$ the word of order $\ell-1$. Then
    \begin{align}
        \tr(\rho_0WX)&=\mathrm{tr}(\rho_0 E_{\mu,0}^{(0)} W' X)=\mel*{\Psi}{ E_{\mu,0}^{(0)} W' X\otimes \bigotimes_{k\in[1,d]}\mathbb I_{\mathcal H_{k}}}{\Psi}\approx_{\sqrt{\delta}} \mel*{\Psi}{W'X\otimes\sum_{\substack{\boldsymbol a\in \{0,1\}^{d-1}\\\sum_{k}a_k=d-r}} \tilde E_{\overline{\boldsymbol{\mu}},\boldsymbol a}^{[1,d]}}{\Psi}\nonumber\\
        &\approx_{\sqrt{\delta}}\mel*{\Psi}{  W' X E_{{\mu},0}^{(0)}\otimes \bigotimes_{k\in[1,d]}\mathbb I_{\mathcal H_{k}}}{\Psi}=\mathrm{tr}(\rho_0  W' XE_{\mu,0}^{(0)})\approx_{2(\ell-1)\sqrt{\delta}}\mathrm{tr}(\rho_0 XE_{\mu,0}^{(0)}W')\nonumber\\
        &=\mathrm{tr}(\rho_0 XW).
    \end{align}
    Applying the principle of induction, we arrive at the desired conclusion.
\end{proof}

\bigskip

We are now ready to proceed with the main proof.

\begin{proof}[Proof of Theorem~\ref{th:self-testing_robust}]
    As in the proof from Ref.~\cite[Th.~6.10]{Mancinska2024}, we must construct an operator whose unique maximal eigenvalue is attained on $\ket{\Psi_{\mathrm{S}}}$. To this end we choose
    \begin{equation}
        \tilde R=\sum_{\mu=0}^{N-1}S_\mu\in\mathcal B(\mathcal H^{\otimes d}),
    \end{equation}
    which, being the sum of projectors, has its maximal eigenvalue equal to $N$. This is attained on $\ket{\Psi_{\mathrm{S}}}$ by Proposition~\ref{prop:stabilizing_slater_state}, and it is the unique eigenvector with eigenvalue $N$ provided that Conjecture~\ref{prop:conjecture} holds. Furthermore, let $\lambda_2<N$ be the second highest eigenvalues of $R$. We fix $\varepsilon >0$, and choose $\varepsilon'>0$ such that
    \begin{align}
        \varepsilon' < (N-\lambda_2) \left(\frac{d!dN}{(d-r)!r!} + 1\right)^{-1} ,\label{eq:definition_varepsilon_prime_1}\\
        d\varepsilon' + \beta \sqrt{(2d+1)\varepsilon'+2d\sqrt{\varepsilon'}+\beta}< \varepsilon,\label{eq:definition_varepsilon_prime_2}
    \end{align}
    where
    \begin{equation}
        \beta:=\sqrt{\left(\frac{2d!dN}{(d-r)!r!}+1\right)\frac{\varepsilon'}{N-\lambda_2}}.
    \end{equation}
    There exists $m\in\mathbb N$ and $\delta'>0$ as in Theorem~\ref{th:gowers_hatami_algebras}, but with $\delta'$ and $\varepsilon'$ replacing $\delta$ and $\varepsilon$. Hence, we set
    \begin{equation}
        \delta=\mathrm{min}\left\{\varepsilon',\left(\frac{\delta'}{N+1}\right)^4,\left(\frac{\delta'}{2m}\right)^2\right\}.
    \end{equation}
    Let $p$ be a $d$-party quantum correlation as in Lemma~\ref{lem:approx_sync_cor}, and $\norm{p-\tilde p}_1\leqslant \delta$. From Lemma~\ref{lem:approx_sum_operators} it follows that
    \begin{equation}
        \bigg\lVert{x\mathbb I_{\mathcal H_k}-\sum_{\mu=0}^{N-1}E_{\mu,0}^{(k)}}\bigg\rVert_{\rho_k}\leqslant C{\delta}^{1/4}
    \end{equation}
    with $C=(1+2x)\sqrt{\delta}+N^2$. But then, according to Eq.~\eqref{eq:definition_varepsilon_prime_1}, we have $\delta\leqslant\varepsilon'<1/d<1$,  and
    \begin{equation}
        C\delta^{1/4}=\sqrt{N^2 + (1+2x)\sqrt{\delta}}\frac{\delta'}{N+1}<\sqrt{N^2 + (1+2N)}\frac{\delta'}{N+1}<\delta',
    \end{equation}
    as $x<N$. Furthermore, based on  Lemma~\ref{lem:approximate_tracial_state}, and the fact that $2m\sqrt{\delta}<\delta'$ we infer that conditions (a), (b) and (c) from Theorem~\ref{th:gowers_hatami_algebras} hold for $\{E_{\mu,0}^{0}\}_{\mu\in[N]}$. Similar result can be shown for all $\{E^{(k)}_{\mu,0}\}_{\mu\in[N]}$ and $k\in[d]$. This implies that for all $k$ there exists an isometry $V_k:\mathcal H_{k}\rightarrow \mathbb C^d\otimes \mathcal H'_k$ such that
    \begin{equation}\label{eq:self-testing_measurement_robust}
        \norm{E_{\mu,0}^{(k)}- V_k^* P_\mu\otimes \mathbb I_{\mathcal H'_k}V_k}_{\rho_k}\leqslant \varepsilon'\quad \mu=0,\dots,N-1.
    \end{equation}
    
    We now proceed to prove the self-testing of the state under the assumption that Conjecture~\ref{prop:conjecture} holds. Specifically, we aim to verify that 
    \begin{equation}
        \ket*{\tilde \Psi}:=\bigotimes_{k=0}^{N-1}V_k \ket{\Psi}\approx_{\varepsilon} \ket{\Psi_{\mathrm S}}\otimes \ket{\psi_{\mathrm{junk}}}.
    \end{equation}
    As a consequence of~\cite[Lemma~6.8]{Mancinska2024} and Equation~\eqref{eq:self-testing_measurement_robust}, we have that, for all $\mu\in[N]^d$:
    \begin{equation}
        |\mel{{\Psi}}{E_{\boldsymbol\mu,\overline{\boldsymbol{0}}}}{{\Psi}}-\mel*{{\tilde \Psi}}{\tilde E_{\boldsymbol\mu,\overline{\boldsymbol{ 0}}}\otimes \bigotimes_{k\in[d]} \mathbb I_{\mathcal H'_k}}{{\tilde\Psi}}|\leqslant d\varepsilon',
    \end{equation}
    with usual notation for $E_{\boldsymbol\mu,{\boldsymbol{a}}}$ and $\tilde E_{\boldsymbol\mu,{\boldsymbol{a}}}$. 
    In particular, if Conjecture~\ref{prop:conjecture} holds, and the state is self-tested, $\ket{\Psi_{\mathrm{S}}}$ is also the only eigenvector of $R$ with eigenvalue $N$. Define an approximate $S_\mu'$ and $R'$, which corresponds to the approximate version of $S_\mu$ and $R$
    \begin{equation}
        S_\mu'=\sum_{\substack{\boldsymbol a\in\times \{0,1\}^{d-1}\\\sum_{k}a_k=d-r}}E_{\overline{\boldsymbol{\mu}}|\boldsymbol a},\quad R=\sum_{\mu=0}^{N-1}S_{\mu}'.
    \end{equation}
    Then
    \begin{equation}
       |\mel{\Psi}{R}{\Psi}-\mel*{\tilde \Psi}{\tilde R\otimes \bigotimes_{k\in[d]}\mathbb I_{\mathcal H_k'}}{\tilde \psi}|\leqslant Nd\frac{d!}{(d-r)!r!}\varepsilon'.
    \end{equation}
    The coefficient $d!/((d-r)!r!)$ is counting the number of summands in $S_\mu$. The inequalities
    \begin{equation}
        \abs{\mel{\Psi}{R}{\Psi}-\mel{\Psi_{\mathrm s}}{\tilde R}{\Psi_{\mathrm s}}}\leqslant \norm{p-\tilde p}_1\leqslant \delta\leqslant \varepsilon',
    \end{equation}
    imply that
    \begin{equation}\label{eq:random_inequality}
        \mel*{\tilde \Psi}{\tilde R\otimes \bigotimes_{k\in[d]}\mathbb I_{\mathcal H_k'}}{\tilde \psi}\geqslant N-\left(dN\frac{d!}{(d-r)!r!}+1\right)\varepsilon'.
    \end{equation}
    Let $Q=\ketbra{\Psi_{\mathrm S}}\otimes \left(\bigotimes_{k\in[d]}\mathbb I_{\mathcal H'_k}\right)$, that is the projection of $\tilde R\otimes \left(\bigotimes_{k\in[d]}\mathbb I_{H_k'}\right)$ onto the eigenspace corresponding to $N$ eigenvalue. From~\eqref{eq:random_inequality} and~\cite[Lemma 6.7]{Mancinska2024}, we have
    \begin{equation}
        \alpha:=\norm*{Q\ket*{\tilde\Psi}}^2\geqslant 1-\left(\frac{d!dN}{(d-r)!r!}+1\right)\frac{\varepsilon'}{N-\lambda_2}>0,
    \end{equation}
    with $\lambda_2$ being the second largest eigenvalue of $R$. In particular, the last inequality is a consequence of~\eqref{eq:definition_varepsilon_prime_1}.
    Let $Q\ket*{\tilde \Psi}=\alpha \ket{\Psi_{\mathrm S}}\otimes \ket{\psi_{\mathrm{junk}}}$, and we finally have
    \begin{equation}\label{eq:second_final_form}
        \norm*{\ket*{\tilde \Psi}- \ket{\Psi_{\mathrm S}}\otimes \ket{\psi_{\mathrm{junk}}}}\leqslant \sqrt{2(1-\alpha)}<\sqrt{2(1-\alpha^2)}=\sqrt{\left(\frac{2d!dN}{(d-r)!r!}+1\right)\frac{\varepsilon'}{N-\lambda_2}}=\beta.
    \end{equation}
    From Equation~\eqref{eq:definition_varepsilon_prime_2}, we infer that
    \begin{equation}
        \bigotimes_{k\in[d]} V_k E_{\mu_k,0}^{(k)} \ket{\Psi}\approx_{\varepsilon} \bigotimes_{k\in[d]} \tilde E_{\mu_k,0}^{(k)}\ket{\Psi_{\mathrm S}},\quad \boldsymbol\mu\in[N]^d,
    \end{equation}
    while, from Equation~\eqref{eq:self-testing_measurement_robust}, together with Lemma 6.8 of~\cite{Mancinska2024} we have:
    \begin{equation}
        \bigg\lVert{ \left(\bigotimes_{k\in[d]}E_{\mu_k,0}^{(k)} \ket{\Psi}- \bigotimes_{k\in[d]} V_k^*\tilde E_{\mu_k,0}^{(k)}V_k\right)\ket{\Psi}\bigg\rVert}\leqslant d\varepsilon'.
    \end{equation}
    Therefore, because $V_k$ are isometries, 
    \begin{equation}
         \bigg\lVert\Bigg(\bigotimes_{k=0}^{d-1}V_k A_{\mu_k,0}^{(k)} \ket{\Psi}- \bigotimes_{k=0}^{d-1}V_kV_k^*\tilde E_{\mu_k,0}^{(k)}V_k\Bigg)\ket{\Psi}\bigg\rVert
         \leqslant d\varepsilon'.
    \end{equation}
    Following Eq.~\eqref{eq:second_final_form}, we obtain
    \begin{equation}\label{eq:this_this}
         \bigg\lVert{\bigotimes_{k\in[d]}V_k E_{\mu_k,0}^{(k)} \ket{\Psi}-\bigotimes_{k\in[d]}V_kV_k^*\tilde E_{\mu_k,0}^{(k)}V_k\ket{\Psi_{\mathrm S}}\otimes\ket{\psi_{\mathrm{junk}}}\bigg\rVert}\leqslant d\varepsilon' +\beta.
    \end{equation} 
    Furthermore, from Lemma~\ref{lem:approx_sync_cor}, we have
    \begin{equation}
    \norm{E_{\mu,0}^{(k)2}-E_{\mu,0}^{(k)}}_\rho\leqslant 2\delta,
    \end{equation}
    from which it follows that
    \begin{equation}
    \bigg|{\bigg\lVert{\bigotimes_{k\in[d]} E_{\mu_k,0}^{(k)}\ket{\Psi}}\bigg\rVert^2-\mel{\Psi}{\bigotimes_{k\in[d]} E_{\mu_k,0}^{(k)}}{\Psi}}\bigg|\leqslant \sum_{k=0}^{d-1}\norm{E_{\mu_k,0}^{(k)}-E_{\mu_k,0}^{(k)2}}_{\rho_k}\leqslant 2d\sqrt{\delta}\leqslant 2d\sqrt{\varepsilon'}.
    \end{equation}
    Thus
    \begin{equation}\label{eq:this}
        \bigg|{\bigg\lVert{\bigotimes_{k\in[d]}V_k E_{\mu_k,0}^{(k)}\ket{\Psi}}\bigg\rVert^2-\bigg\lVert{\bigotimes_{k\in[d]} \tilde E_{\mu_k,0}^{(k)}\ket{\Psi_{\mathrm S}}\otimes\ket{\psi_{\mathrm{junk}}}}\bigg\lVert^2}\bigg|\leqslant \varepsilon' +  2d\sqrt{\varepsilon'}.
    \end{equation}
    Finally, Lemma 6.9 of~\cite{Mancinska2024}, together with~\eqref{eq:this} and~\eqref{eq:this_this}, implies:
    \begin{equation}
        \bigg\lVert{\bigotimes_{k\in[d]}V_k E_{\mu_k,0}^{(k)}\ket{\Psi}-\bigotimes_{k\in[d]} \tilde E_{\mu_k,0}^{(k)}\ket{\Psi_{\mathrm S}}\otimes\ket{\psi_{\mathrm{junk}}}}\bigg\rVert\leqslant d\varepsilon'+\beta+\sqrt{(2d+1)\varepsilon' +  2d\sqrt{\varepsilon'}+\beta}\leqslant \varepsilon 
    \end{equation}
    Similar results hold for all the different $\boldsymbol a\neq \overline{\boldsymbol 0}$.
\end{proof}
\section{Conclusions}

Slater states are antisymmetric quantum states that represent fermionic systems. Constructed as Slater determinants, these states lie in the fermionic (antisymmetric) subspace of the Hilbert space, which notably contains no product states. Consequently, all fermionic states are inherently entangled, even without interaction, due to the indistinguishability of particles~\cite{Benatti2012}. In quantum computing, Slater states are important because they are invariant under global unitaries of the form $U^{\otimes d}$. This symmetry implies protection against collective noise, making them elements of decoherence-free subspaces—a key resource for fault-tolerant computation and quantum error avoidance~\cite{Zanardi1997}. They also play a foundational role in quantum chemistry and condensed matter physics, notably in methods like Hartree–Fock theory and density functional theory, where they approximate complex many-body ground states of many-electron systems.

{Here, we have provided a self-testing scheme for this class of multiqudit states, which, importantly, is optimal in terms of the number of measurements and outcomes as compared to other existing schemes for multiqudit quantum states. In fact, in the case of odd local dimensions, our scheme requires only four binary measurements to be performed by each observer to certify the state. We also extend our result to the case of even local dimensions, however, at the cost of increasing the number of binary measurements to $d$ for a quantum system of local dimension $d$. Moreover, we demonstrate our scheme to be robust to noises and experimental imperfections, which is a key feature for practical implementations of self-testing.}

These findings not only deepen our understanding of multipartite quantum systems but also provide practical tools for the reliable certification of entanglement in realistic settings.

The outlook for this work naturally divides into three directions. First, it would be significant to prove Conjectures~\ref{prop:conjecture} or~\ref{prop:conjecture_3} for arbitrary families of projections that add up to the identity, as considered in Theorem~\ref{th:mancinska}. As discussed in Section~\ref{Section:Algebraic_approach}, a key obstacle lies in the fact that not all irreducible Lie algebras coincide with $\mathcal M_d$, a property that holds for $C^*$-algebras. For instance, the Lie algebra generated by the Pauli matrices $\sigma_1,\sigma_2,\sigma_3$ is $\mathfrak{sl}_2$, the algebra of traceless $2\times 2$ matrices, whereas the $C^*$-algebra they generate is the whole $\mathcal M_2$.
Another interesting direction is to investigate whether the results can be extended to the case of
$N=4$ measurements in even dimensions, analogously to the extension for the maximally entangled bipartite state presented in~\cite{Volcic2024}. {From a more general perspective, it would also be very interesting to explore whether the methodology used in this work can be used to 
design self-testing schemes for other bipartite or multipartite states of arbitrary local dimension.}

\section{Acknowledgments}
We thank Yuming Zhao for insightful discussions. This work is supported by the National Science Centre (Poland) through the SONATA BIS project No. 019/34/E/ST2/00369. This project has received funding from the European Union’s Horizon Europe research and innovation programme under grant agreement No 101080086 NeQST.
\appendix
\section{Permutation and symmetric subspaces}\label{app:symmetric_algebras}
In this Appendix, we review the main properties and relations between the representation of the permutation group $\mathbb{P}_n$ and the symmetric algebra $\mathfrak{A}_f$ on tensor products of Hilbert spaces. See also~\cite{Marconi2025} for a recent review on their applications in quantum information theory.

Fiven a finite-dimensional vector space $\mathcal V$, and its $n$-fold tensor product $\mathcal V^{\otimes n}$, we define the symmetrizer as
\begin{equation}
    \mathcal S(v_{0}\otimes\dots\otimes v_{n-1})=\frac{1}{n!}\sum_{\sigma\in \mathbb P_n} v_{\sigma(0)}\otimes\dots\otimes v_{\sigma(n-1)}
\end{equation}
with $v_0,v_1,\dots v_{n-1}\in\mathcal V$. 
The \textit{symmetric subspace} $\mathfrak A_f$ is then the image of $\mathcal S$, once extended linearly to all elements in $V^{\otimes d}$. One can prove that~\cite[page~242]{Procesi2007}:
\begin{equation}
    \mathfrak A_f=\mathrm{span}\{v^{\otimes n}:v\in\mathcal V\}=\mathrm{span}\{v^{\otimes n}:v\in X\},
\end{equation}
where $X$ is a Zariski-dense subset\footnote{A subset 
$X$ of $V$ is called Zariski-dense if the only polynomial 
$h$ on $X$ that vanishes on all elements of $X$ is the zero polynomial.} of $\mathcal V$.

We now focus on the case $\mathcal V=\mathcal B(\mathcal H)$, where $\mathcal H$ is a finite dimensional Hilbert space of dimension $d$. Here, the symmetric space $\mathfrak A_f$ is a $C^*$-subalgebra of $\mathcal B(\mathcal H^{\otimes n})$, and we call it \textit{symmetric algebra}. The Schur-Weyl Theorem allows to write the structure of this algebra explicitly and connect it with the representations of the permutation group $\mathbb P_n$ on $\mathcal H^{\otimes n}$~\cite{WEYL2016,Landsberg2012}. To be more precise, given a permutation $\sigma$ in $\mathbb P_n$, we can consider the unitary representation $V_\pi$ defined as
\begin{equation}
    V_\sigma \ket{\psi_0\otimes\psi_1\otimes\dots \otimes \psi_{n-1}}=\ket*{\psi_{\sigma(0)}\otimes\psi_{\sigma(1)}\otimes\dots\otimes \psi_{\sigma(n-1)}}.
\end{equation}
The span of these operators $V_\sigma$ defines a $C^*$-algebra called \textit{permutation algebra}:
\begin{equation}
    \mathfrak B=\mathrm{span}\{V_\sigma:\sigma\in\mathbb P_n\}.
\end{equation}

The first result regarding the two spaces is the following.
\begin{theorem}
The permutation group and the symmetric algebra are related by the commutation relations
\begin{equation}
    \mathfrak A_f=\mathfrak B' \qquad \text{and}\qquad  \mathfrak B=\mathfrak A_f'.
\end{equation}
\end{theorem}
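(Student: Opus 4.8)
The plan is to identify the operator symmetrizer $\mathcal S$ of Equation~\eqref{eq:symmetric_vector} with the group-averaging projection for the conjugation action of $\mathbb P_n$ on $\mathcal B(\mathcal H^{\otimes n})$, whose range is precisely $\mathfrak B'$; the two stated identities then follow at once from this observation together with the bicommutant theorem.

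First I would record that $\mathfrak B$ is a unital $*$-subalgebra of $\mathcal B(\mathcal H^{\otimes n})$: one has $V_{\mathrm{id}}=\mathbb I$, each $V_\sigma$ is unitary with $V_\sigma^*=V_\sigma^{-1}\in\{V_\tau\}_\tau$, and the set $\{V_\sigma\}_{\sigma\in\mathbb P_n}$ is closed under multiplication, so its span is closed under products and adjoints and contains the identity. Applying the Structure Theorem to $\mathfrak B$ therefore forces the summand $\mathcal H_0$ in the decomposition~\eqref{eq:dec_hilbert_space_algebra} to be trivial (the identity cannot act as the zero operator on it), whence $\mathfrak B''=\mathfrak B$ by the finite-dimensional bicommutant theorem quoted after the Structure Theorem. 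The main step is then to prove $\mathfrak B'=\mathfrak A_f$. Using the relation $V_\sigma(A_0\otimes\dots\otimes A_{n-1})V_\sigma^{-1}=A_{\sigma^{-1}(0)}\otimes\dots\otimes A_{\sigma^{-1}(n-1)}$ and reindexing the sum over $\mathbb P_n$, one checks that
\begin{equation}
    \mathcal S(X)=\frac{1}{n!}\sum_{\sigma\in\mathbb P_n}V_\sigma X V_\sigma^{-1}\qquad\text{for every }X\in\mathcal B(\mathcal H^{\otimes n}).
\end{equation}
From this formula two facts are immediate: conjugating the right-hand side by any $V_\tau$ merely reindexes the sum, so $\mathrm{Im}(\mathcal S)\subseteq\mathfrak B'$; and if $X\in\mathfrak B'$ then each summand equals $X$, so $\mathcal S(X)=X$ and $X\in\mathrm{Im}(\mathcal S)$. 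Hence $\mathrm{Im}(\mathcal S)=\mathfrak B'$, and since $\mathfrak A_f=\mathrm{Im}(\mathcal S)=\mathrm{span}\{A^{\otimes n}:A\in\mathcal B(\mathcal H)\}$ as recalled in this Appendix, we obtain $\mathfrak A_f=\mathfrak B'$. Taking commutants of this equality and invoking $\mathfrak B''=\mathfrak B$ yields $\mathfrak A_f'=\mathfrak B''=\mathfrak B$.

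I do not expect a genuine obstacle here: the entire argument hinges on the one-line reindexing that identifies the two descriptions of $\mathcal S$ — averaging over permutations of the tensor legs versus averaging by conjugation — combined with the already-quoted fact that $\mathcal S$ is an idempotent with range $\mathrm{span}\{A^{\otimes n}\}$, cf.~\cite[p.~242]{Procesi2007}, and the finite-dimensional bicommutant theorem. If one wished to keep the argument entirely self-contained, the remaining inclusion $\mathfrak A_f'\subseteq\mathfrak B$ could instead be obtained from a dimension count using the isotypic (Schur--Weyl) decomposition of $\mathcal H^{\otimes n}$, but the bicommutant route is shorter and uses only tools already introduced in the paper.
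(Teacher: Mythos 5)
Your argument is correct and complete. Note that the paper itself offers no proof of this theorem: it is stated in the appendix as part of the standard Schur--Weyl package, with the reader referred to the cited literature. Your route---identifying the operator symmetrizer $\mathcal S$ with the group average $X\mapsto \frac{1}{n!}\sum_{\sigma}V_\sigma X V_\sigma^{-1}$, deducing $\mathrm{Im}(\mathcal S)=\mathfrak B'$ from the two elementary observations that the average is $\mathbb P_n$-invariant and fixes every element of $\mathfrak B'$, and then closing with $\mathfrak A_f'=\mathfrak B''=\mathfrak B$ via the finite-dimensional bicommutant theorem---is a clean, self-contained derivation that uses only tools already introduced in the paper, and it has the added benefit of justifying that the linear extension of $\mathcal S$ from simple tensor operators to all of $\mathcal B(\mathcal H^{\otimes n})$ is well defined. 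One cosmetic remark: with the paper's convention $V_\sigma\ket{\psi_0\otimes\cdots\otimes\psi_{n-1}}=\ket{\psi_{\sigma(0)}\otimes\cdots\otimes\psi_{\sigma(n-1)}}$, the map $\sigma\mapsto V_\sigma$ is an anti-homomorphism and the conjugation identity reads $V_\sigma(A_0\otimes\cdots\otimes A_{n-1})V_\sigma^{-1}=A_{\sigma(0)}\otimes\cdots\otimes A_{\sigma(n-1)}$ rather than with $\sigma^{-1}$; this is immaterial once one sums over the whole group, so your conclusion stands.
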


As discussed in Section~\ref{sec:algebras}, every finite-dimensional $C^*$-algebra admits a unique decomposition in terms of complete matrix algebras. In the case of symmetric and permutation algebras, this decomposition is obtained from the Young tables, whose complete study is beyond the scope of this work, see~\cite{Landsberg2012} for the details. We restrict our discussion to outlining how to obtain the dimensions of the irreducible representations.

Take a partition $\pi$ of $\{1,2,\dots,n\}$:
\begin{equation}
\pi=(p_0,p_1,\dots,p_{r-1}),\quad p_0\geqslant p_1\geqslant \dots \geqslant p_{r-1},\ \ \sum_{k=0}^{r-1}p_k=n,
\end{equation}

To each partition $\pi=(p_0,p_1,\dots,p_{r-1})$, one can associate an irreducible subrepresentation of the permutation algebra and an irreducible subrepresentation of the symmetric algebra. The dimension of these subrepresentations is determined using the Young diagram $T_\pi$, which represents the paritition $\pi$ in tabular form:

\begin{equation}\label{eq:young_table}
T_{\pi} = \vcenter{\hbox{\ytableausetup{boxsize=1.2cm}
\begin{ytableau}
1 & 2 & 3 & \dots & p_0 \\
p_0+1 & \dots & \dots & p_0+p_1 \\
\vdots & \vdots & \vdots \\
\dots & n
\end{ytableau}}}
\end{equation}\\

The irreducible representation of $\mathfrak B$ associated with the permutation $\pi$ is defined on a Hilbert space $\mathcal H_{\pi,1}$ of dimension:

\begin{equation}
    \mathrm{dim}\mathcal H_{\pi,1}=\frac{n!}{\Pi_{x\in T_\pi}h(x)}.
\end{equation}
Here, $h(x)$ is the hook length of $x$ ($x$ is a box in the Young table), and it is given by the number of boxes to the right of $x$ in the same row, plus the number of boxes below $x$ in the same column, plus one.

The irreducible subrepresentation of $\mathfrak A_f$ acts on a subspace $\mathcal H_{\pi,2}$ of dimension:
    \begin{equation}\label{eq:dimensionS}\dim \mathcal H_{\pi,2} = \prod_{x\in \pi} \frac{n+c(x)}{h(x)},
    \end{equation}
with $c(x)$ being the content of $x$, namely, the number of steps from the diagonal of $x$, with a positive sign if $x$ is over the diagonal and negative one, if it is below.
All together, these results are summarized in the following

\begin{theorem}[Shur-Weyl duality]
    Let $\mathcal H$ be an Hilbert space of dimension $d$. Then, we can decompose $\mathcal H^{\otimes n}$ as
    \begin{equation}
        \mathcal H^{\otimes n}=\bigoplus_{\pi}\mathcal H_{\pi,1}\otimes \mathcal H_{\pi,2}.
    \end{equation}
    The sum is over all possible partitions $\pi$ of $\{1,2,\dots,n\}$ for which the dimension~\eqref{eq:dimensionS} is positive. 
    An element in $X\in\mathfrak A_f$ can be decomposed as
    \begin{equation}
        X=\bigoplus_{\pi}\mathbb I_{{\pi,1}} \otimes X_\pi, 
    \end{equation}
    while an element $Y\in\mathfrak B$ can be decomposed as
    \begin{equation}
        Y=\bigoplus_{\pi}Y_\pi\otimes \mathbb I_{\pi,2}.
    \end{equation}
\end{theorem}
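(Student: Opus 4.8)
The plan is to derive the theorem by combining the previously stated commutation relations $\mathfrak{A}_f = \mathfrak{B}'$ and $\mathfrak{B} = \mathfrak{A}_f'$ with the structure theorem for finite-dimensional $C^*$-algebras from Subsection~\ref{sec:algebras}, and then to pin down the index set and the dimensions of the blocks using the representation theory of $\mathbb{P}_n$ and $\mathrm{GL}(d)$.

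First I would observe that $\mathfrak{B} = \mathrm{span}\{V_\sigma : \sigma \in \mathbb{P}_n\}$ is a unital $C^*$-subalgebra of $\mathcal{B}(\mathcal{H}^{\otimes n})$ (it contains $\mathbb{I} = V_{\mathrm{id}}$). Applying the structure theorem to $\mathfrak{B}$ produces an orthogonal decomposition $\mathcal{H}^{\otimes n} = \bigoplus_\pi \mathcal{H}_{\pi,1} \otimes \mathcal{H}_{\pi,2}$ — with no $\mathcal{H}_0$-summand, since $\mathbb{I} \in \mathfrak{B}$ — such that $\mathfrak{B} = \bigoplus_\pi \mathcal{B}(\mathcal{H}_{\pi,1}) \otimes \mathbb{I}_{\pi,2}$, which is exactly the claimed form of a general $Y \in \mathfrak{B}$. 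Feeding this into the commutant formula recalled in Subsection~\ref{sec:algebras} gives $\mathfrak{B}' = \bigoplus_\pi \mathbb{I}_{\pi,1} \otimes \mathcal{B}(\mathcal{H}_{\pi,2})$, and by the preceding theorem $\mathfrak{B}' = \mathfrak{A}_f$, which yields the claimed form of a general $X \in \mathfrak{A}_f$. The bicommutant identities $\mathfrak{B}'' = \mathfrak{B}$ and $\mathfrak{A}_f'' = \mathfrak{A}_f$ hold by the finite-dimensional von Neumann theorem, so the two block decompositions are genuinely dual. It remains to label the summands: the $\mathfrak{B}$-isotypic components of $\mathcal{H}^{\otimes n}$ correspond to the irreducible representations of $\mathbb{P}_n = S_n$, hence to partitions $\pi$ of the integer $n$, and a component is nonzero precisely when the associated $\mathrm{GL}(\mathcal{H})$-module is nonzero, i.e. when the Young diagram of $\pi$ has at most $d$ rows, which is encoded by positivity of the quantity in~\eqref{eq:dimensionS}.

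The hard part will be identifying $\mathcal{H}_{\pi,1}$ and $\mathcal{H}_{\pi,2}$ explicitly and computing their dimensions, since the $C^*$-argument above only delivers the existence of some dual pair of block decompositions. I would invoke the classical facts: $\mathcal{H}_{\pi,1}$ is the Specht module of $S_n$ labelled by $\pi$, with dimension equal to the number of standard Young tableaux of shape $\pi$, given by the hook length formula $n!/\prod_{x \in T_\pi} h(x)$; and $\mathcal{H}_{\pi,2}$ is the irreducible polynomial $\mathrm{GL}(\mathcal{H})$-module obtained from the Schur functor $\mathbb{S}_\pi$, whose dimension is the hook--content formula $\prod_{x \in \pi} (d + c(x))/h(x)$, which vanishes exactly when $\pi$ has more than $d$ rows. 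A uniform way to obtain all of this is to expand the joint character $\mathrm{tr}\big(V_\sigma\, U^{\otimes n}\big) = \sum_\pi \chi_\pi(\sigma)\, s_\pi(U)$ through the Cauchy identity for Schur polynomials and then specialize $\sigma = \mathrm{id}$ and $U = \mathbb{I}$ to read off the multiplicities and the two dimension formulas. As these constructions (Young symmetrizers, Schur--Weyl reciprocity, the hook-length and hook-content formulas) are standard, I would cite~\cite{Landsberg2012, WEYL2016, Procesi2007} for them and present in full detail only the reduction from the commutation theorem to the block structure.
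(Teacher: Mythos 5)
The paper does not actually prove this theorem: it is stated in Appendix~\ref{app:symmetric_algebras} as a classical result, with the reader referred to~\cite{Landsberg2012,WEYL2016} for details. Your argument is correct and is precisely the standard route: apply the structure theorem of Subsection~\ref{sec:algebras} to the unital algebra $\mathfrak B$, pass to the commutant to obtain the block form of $\mathfrak A_f=\mathfrak B'$, and then import the hook-length and hook-content formulas to identify the blocks and the index set (positivity of~\eqref{eq:dimensionS} being equivalent to the diagram having at most $d$ rows). The only caveat worth flagging is that the input $\mathfrak A_f=\mathfrak B'$ is itself the substantive half of Schur--Weyl duality, so your proof is a reduction to that previously stated theorem rather than a self-contained argument --- which is fine here, since the paper states that commutation theorem separately. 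A minor terminological point: the character expansion $\tr\big(V_\sigma U^{\otimes n}\big)=\sum_\pi\chi_\pi(\sigma)s_\pi(U)$ is the Frobenius character formula (evaluated on power sums) rather than the Cauchy identity, though the two are of course closely related.
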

Let us consider two examples of partitions. 
The trivial partition $\pi=\{d\}$, in which case $\mathcal H_{\pi,2}$ is the symmetric or bosonic space, while the dimension of $\mathcal H_{\pi,1}$ is $1$. The partition $\pi=(1,1,\dots,1)$, for which $\mathcal H_{\pi,2}$ is the completely antisymmetric or fermionic space, and again $\dim\mathcal H_{\pi,1}=1$. In the particular case of $d=n$, we have  $\dim \mathcal H_{\pi,2}=1$, and there is only partition $\pi$ for which this happens. This implies that $\ketbra{\Psi_S}$ is the only pure state that commutes with all the operators in $\mathfrak A_f$.



\section{Proof of~\ref{th:conjecture_four_projections}}\label{app:4_proj}
We shall prove that the Lie algebra $\mathscr{LW}$ generated by the four projections given in Section~\ref{sec:4_proj} conincides with $\mathcal M_{2k+1}$. Since the proof involves an explicit computation of the Lie algebra, it is rather lengthy. For clarity, we divide it into three parts. In part I, we evaluate the iterated commutator of $A$ and $B$, thus obtaining a sequence of operators from $\mathscr{LW}$. Using this sequence, in Part II we obtain $\mathscr{LW}_{AB}$, the Lie algebra generated by $A$ and $B$. It turns to be the direct sum of the traceless $2\times 2$ Lie-algebra on each subspace $\ket{2\ell}\oplus\ket{2\ell+1}$, direct sum with $A$. 

Let
\begin{equation}
    \mathfrak{sl}_2=\{x_1\sigma_1+x_2\sigma_2+x_3\sigma_3:x_i\in\mathbb C\}
\end{equation}
be the traceless or special linear $2\times 2$ Lie algebra, with the Pauli matrices given by
\begin{equation}\label{eq:pauli_matrices}
    \sigma_1=\begin{pmatrix}
    0 & 1\\
    1 &0
    \end{pmatrix},\quad
    \sigma_2=\begin{pmatrix}
    0 & -\mathrm i\\
    \mathrm i &0
    \end{pmatrix},\quad
    \sigma_3=\begin{pmatrix}
    1 & 0\\
    0 &-1
    \end{pmatrix}.
\end{equation}
Then, we can express $\mathscr{LW}_{AB}$ as:
\begin{equation}\label{eq:LAB_algebra}
    \mathscr{LW}_{AB}=\mathbb C A+\left(\bigoplus_{\ell=0}^{k-1}\mathfrak{sl}_2\oplus 0 \right).
\end{equation}
In the previous equation, for each $\ell$, $\mathfrak{sl}_2$ has support over $\ket{2\ell}\oplus\ket{2\ell+1}$.
Similarly, it is possible to obtain $\mathscr{LW}_{CD}$, the Lie algebra generated by $C$ and $D$, which possesses a similar structure to $\mathscr{LW}_{AB}$. Specifically, it takes the form:
\begin{equation}\label{eq:LCD_algebra}
\mathscr{LW}_{CD}=\mathbb C C+\left(0\oplus\bigoplus_{\ell=1}^{k}\mathfrak{sl}_2\right),
\end{equation}
with each $\mathfrak{sl}_2$ being the traceless Lie algebra with support over $\ket{2\ell-1}\oplus\ket{2\ell}$. Finally, in part III, we will compute $\mathscr{LW}$ by sticking together $\mathscr{LW}_{AB}$ and $\mathscr{LW}_{CD}$.

\bigskip

\textit{Part I}

We begin with considering the iterated commutator between $A$ and $B$. Using the notation from Section~\ref{sec:4_proj}, take $A$ and $B$ from Equation~\eqref{eq:a_b_projections} in Proposition~\ref{prop:structure_of_projections}. Then
\begin{equation}
    [A,B]=\bigoplus_{\ell=0}^{k-1}[A_\ell,B_\ell]\oplus 0,
\end{equation}
with $A_\ell$ and $B_\ell$ given in Equation~\eqref{eq:operators_A_B}. In the explicit form it reads:
\begin{equation}
    [A_\ell,B_\ell]=
    \frac{\sqrt{x_\ell^{(1)}x_\ell^{(2)}}}{2}\begin{pmatrix}
        0 & x_\ell^{(2)}-x_\ell^{(1)}\\
        x_\ell^{(1)}-x_\ell^{(2)} & 0
    \end{pmatrix}=\frac{w_{\ell}}{2}
    \begin{pmatrix}0 &z_\ell \\ - z_\ell &0
    \end{pmatrix}.
\end{equation}
with
\begin{align}
     z_\ell\equiv x_{\ell}^{(2)}-x_{\ell}^{(1)}&=\frac{4k-2-8\ell}{2k+1}\\
     w_\ell\equiv\sqrt{x_\ell^{(1)}x_\ell^{(2)}}&=\frac{\sqrt{(4\ell+2) (4k-4\ell)}}{2k+1}.
\end{align}
It is convenient to define $Z_\ell^{(1)}=2[A_\ell,B_\ell]$, and $Z^{(1)}=2[A,B]$,
as well as
\begin{equation}
    Z^{(n+1)}=2[Z^{(n)},A]\in\mathscr{LW}_{AB},\quad n\geqslant 2.
\end{equation}
One can easily confirm that for all $n\geqslant 1$ it can be expressed as
\begin{equation}
    Z^{(n)}=\bigoplus_{\ell=0}^{k-1}Z_\ell^{(n)}\oplus 0
\end{equation}
with
\begin{equation}
    Z_\ell^{(n)}=w_\ell
    \begin{pmatrix}
        0 & z_\ell^n \\
        (-z_\ell)^n & 0
    \end{pmatrix}.
\end{equation}
Let $Z^{(0)}=A-B\in\mathscr{LW}_{AB}$, so that the previous expression can be extended to $n=0$. As a result, we can decompose $Z^{(n)}$ as:
\begin{align}
    Z^{(2n)}= \bigoplus_{\ell=0}^{k-1} w_\ell z_{\ell}^{2n} \sigma_1\oplus 0,\quad
    Z^{(2n+1)}=\bigoplus_{\ell=0}^{k-1} \mathrm iw_\ell z_{\ell}^{2n+1} \sigma_2\oplus 0,
\end{align}
with $n\geqslant 0$, and for each $\ell=0,\dots,k-1$ the Pauli matrices~\eqref{eq:pauli_matrices} have support in the subspace $\ket{2\ell}\oplus\ket{2\ell+1}$.

\bigskip

\textit{Part II}

After introducing the operators $Z^{(2n)}$, we prove that $\mathscr{LW}_{AB}$ is given by~\eqref{eq:LAB_algebra}. Let, for all $n\geqslant 0$:
\begin{align}
Z^{(2n,1)}&=z_{k-1}^2 Z^{(2n)}-Z^{(2n+2)}= \bigoplus_{\ell=0}^{k-1}w_\ell z_{\ell}^2(z_{k-1}^2-z_{\ell}^2)\sigma_1\oplus 0\nonumber\\
&=w_0 z_{0}^{2n} (z_{k-1}^2-z_0^2)\sigma_1\oplus w_1 z_{1}^{2n} (z_{k-1}^2-z_1^2)\sigma_1\oplus\dots\oplus  w_0 z_{k-2}^{2n} (z_{k-1}^2-z_{k-2}^2)\sigma_1\oplus 0\sigma_1\oplus 0.
\end{align}
Clearly, every $Z^{(2n,1)}$ belongs to $\mathscr{LW}_{AB}$, as it is a linear combination of two elements in $\mathscr{LW}_{AB}$. Furthermore, the last element corresponding to $\ell=k-1$ vanishes, while all remaining elements are different from $0$,
\begin{equation}
    z_{\ell}^2-z_{\ell'}^2=0\implies \ell=\ell' \textnormal{ or } \ell+\ell'=k-\frac{1}{2},
\end{equation}
and the second condition is not achievable if $\ell$ and $\ell'$ are natural numbers.

 We can repeat the same construction recursively for $j=2,\dots,k-2$ by defining
\begin{equation}
    Z^{(2n,j)}=z_{k-j}^2 Z^{(2n,j-1)}-Z^{(2n+2,j-1)},\quad n\in\mathbb N.
\end{equation}
As a result, $Z^{(2n,j)}$ has all the components $k-j,k-j+1,\dots, k-1$ equal to $0$. In particular, for $j=k-1$, we obtain
\begin{equation}
    Z^{(2n,k-1)}=w_0 z_0^{2n}\prod_{j=1}^{k-1} (z^2_{j}-z_0^2)\sigma_1\oplus 0\oplus\dots\oplus 0.
\end{equation}
By linearity, all operators proportional to $\sigma_1$ on the subspace $\ket{0}\oplus\ket{1}$ and vanishing elsewhere belong to $\mathscr{LW}_{AB}$. Furthermore, taking
\begin{equation}
    Z^{(2n,k-2)}-w_0 z_0^{2n}\prod_{j=2}^{k-1}(z_j^2-z_0^2)\sigma_1\oplus 0\oplus\dots\oplus 0=0\oplus w_0 z_0^{2n}\prod_{j=2}^{k-1}(z_j^2-z_0^2)\sigma_1\oplus\dots\oplus 0,
\end{equation}
we infer that all elements proportional to $\sigma_1$ on $\ket{2}\oplus\ket{3}$ belong to $\mathscr{LW}_{AB}$. Repeating this argument for $Z^{(2n,k-2)},Z^{(2n,k-2)},\dots,Z^{(2n,1)}$, we see that all operators of the form
\begin{equation}\label{eq:sigma_1_AB}
    \bigoplus_{\ell=0}^{k-1} a_\ell \sigma_1\oplus 0,\quad a_0,\dots,a_{k-1}\in\mathbb C
\end{equation}
belong to $\mathscr{LW}_{AB}$. Taking the double iterated commutants of $Z^{(1)}$ 
with~\eqref{eq:sigma_1_AB}, we further have that
\begin{equation}\label{eq:sigma_23_AB}
    \bigoplus_{\ell=0}^{k-1} b_\ell \sigma_2\oplus 0,\quad 
    \bigoplus_{\ell=0}^{k-1} c_\ell \sigma_3\oplus 0,\quad b_0,\dots,b_{k-1},c_0,\dots,c_{k-1}\in\mathbb C
\end{equation}
are all in $\mathscr{LW}_{AB}$, for all sequences $b_\ell,c_\ell$ in $\mathbb C$. In conclusion, the formula in Equation~\eqref{eq:LAB_algebra} holds as the set of $\mathscr{LW}_{AB}$ is closed under commutation.

To prove the fact that $\mathscr{LW}_{CD}$ is the Lie algebra generated by $C$ and $D$, is similar. In this case, we can write

\begin{equation}
    [C_\ell,D_\ell]=
    \frac{\sqrt{y_\ell^{(1)}y_\ell^{(2)}}}{2}\begin{pmatrix}
        0 & y_\ell^{(2)}-y_\ell^{(1)}\\
        y_\ell^{(1)}-y_\ell^{(2)} & 0
    \end{pmatrix}=\frac{w_{\ell}}{2}
    \begin{pmatrix}0 &z_\ell \\ - z_\ell &0
    \end{pmatrix},
\end{equation}
with
\begin{align}
     z_\ell\equiv y_{\ell}^{(2)}-y_{\ell}^{(1)}&=2-\frac{8\ell}{2k+1},\nonumber\\
     w_\ell\equiv\sqrt{y_\ell^{(1)}y_\ell^{(2)}}&=\frac{\sqrt{4\ell (4k+2-4\ell)}}{2k+1},
\end{align}
and the proof can be repeated to obtain~\eqref{eq:LCD_algebra}.

\bigskip

\textit{Part III}

Once the Lie algebras generated by $A,B$ and $B,C$ are given, we can study the Lie algebra generated by $A,B,C,D$, and we show that it conincides with $\mathcal M_{2k+1}$.

First, we prove that the elements in the form $E_{i,j}$, with $i<j$, are in $\mathscr{LW}$. Observe that, if $j+1=i$ then $E_{i,i+1}$ is either an element of $\mathscr{LW}_{AB}$ or $\mathscr{LW}_{CD}$. In the general case, $E_{i,j}$ can be seen as an iterated commutator of elements in either of $\mathscr{LW}_{AB}$ or $\mathscr{LW}_{CD}$:
\begin{align}
    E_{i,j}&=[E_{i,i+1},E_{i+1,j}]=[E_{i,i+1},[E_{i+1,i+2},[E_{i+2,j}]]]\nonumber\\
    &=\dots=[E_{i,i+1},[E_{i+1,i+2},[\dots[E_{j-2,j-1},E_{j-1,j}]\dots]]\in \mathscr{LW}.
\end{align}
Similarly, $E_{i,j}\in\mathscr{LW}$ for $i>j$. Regarding the diagonal elements, for all $i<j$:
\begin{equation}
    E_{i,i}-E_{j,j}=E_{i,i}-E_{i+1,i+1}+E_{i+1,i+1}-\dots+E_{j-1,j-1}-E_{j,j}\in\mathscr{LW}.
\end{equation}
In conclusion, the class of traceless operators $\mathfrak{sl}_{2k+1}$ is contained in $\mathscr{LW}$, as a basis of $\mathfrak{sl}_{2k+1}$ is given by $\{E_{i,j}\}_{i\neq j=1}^{2k+1}\cup\{E_0-E_i\}_{i=1}^{2k+1}$. Since also the identity $\mathbb I_{2k+1}$ is in $\mathscr{LW}$~\eqref{eq:sum_4_proj}, we have that $\mathscr{LW}=\mathcal M_{2k+1}$.

\printbibliography
\end{document}